\newcommand{\br}{{\mathbb R}}
\newcommand\independent{\protect\mathpalette{\protect\independenT}{\perp}}
\def\independenT#1#2{\mathrel{\rlap{$#1#2$}\mkern2mu{#1#2}}}
\newtheorem{proposition}{Proposition}[section]
\newtheorem{lemma}{Lemma}
\newtheorem{theorem}{Theorem}
\newtheorem{remark}[proposition]{Remark} 
\newcommand{\eps}{{\epsilon}}
\newcommand{\norm}[1]{\left \lVert #1 \right \rVert}
\begin{document}

\begin{frontmatter}
	\title{{Multilayer Knockoff Filter: \\ 
		Controlled variable selection \\at multiple resolutions}}
	\runtitle{Multilayer knockoff filter}
	
	\begin{aug}
		\author{\fnms{Eugene} \snm{Katsevich}\thanksref{t1,m1}\ead[label=e1]{ekatsevi@stanford.edu}} \and
		\author{\fnms{Chiara} \snm{Sabatti}\thanksref{t2, m1}\ead[label=e2]{sabatti@stanford.edu}}
		
		\thankstext{t1}{E.K. was supported by the Fannie and John Hertz Foundation and the National Defense Science and Engineering Graduate Fellowship.}
		\thankstext{t2}{C.S. was supported by NIH grants HG006695 and HL113315 and NSF grant DMS 1712800.}
		\runauthor{E. Katsevich and C. Sabatti}
		
		\affiliation{Stanford University\thanksmark{m1}}
		
		\address{Department of Statistics\\
			Stanford University\\
			\printead{e1}\\
			\phantom{E-mail:\ }}
		
		\address{Departments of Statistics\\
			Stanford University\\
			\printead{e2}\\
			\phantom{E-mail:\ }}

	\end{aug}
	
	\begin{abstract}
		We tackle the problem of selecting from among a large number of variables those that are ``important" for an outcome. We consider situations where groups of variables are also of interest. For example, each variable might be a genetic polymorphism and we might want to study how a trait depends on variability in genes, segments of DNA that typically contain multiple such polymorphisms. 
		In this context, to discover that a variable is relevant for the outcome  implies discovering that the larger entity it represents is also important. To guarantee meaningful  results with high chance of replicability, we suggest controlling the rate of false discoveries for findings at the level of individual variables and at the level of groups. Building on the knockoff construction of Barber and Cand\`es (2015) and the multilayer testing framework of Barber and Ramdas (2016), we introduce the multilayer knockoff filter (MKF). We prove that MKF simultaneously controls the FDR at each resolution and use simulations to show that it incurs little power loss compared to methods that provide guarantees only for the discoveries of individual variables. We apply MKF to analyze a genetic dataset and find that it successfully reduces the number of false  gene discoveries without a significant reduction in power.
	\end{abstract}
		
\end{frontmatter}


\section{Introduction}

\subsection{A motivating example}
During the last twenty years, the biotechnology that allows us to identify the locations where the genome of an individual is different from a reference sequence has experienced a dramatic increase in speed and decrease in costs. Scientists have used the resulting wealth of information to investigate empirically how variations in our DNA translate into different measurable phenotypes. While we still know little about the causal mechanisms behind many traits, geneticists agree on the usefulness of a multivariate (generalized) linear model to capture at least as a first approximation the nature of the relation between genetic variation and complex  phenotypes. If $\bm{y} \in \br^{n \times 1}$ is the vector collecting the values of a quantitative trait in $n$ subjects, and $\bm{X} \in \br^{n \times N}$ the matrix storing, column by column, their genotypes at $N$ polymorphic sites in the genome, a starting model for their relation is 
\begin{equation*}
\bm{y} = \bm{X}\bm{\beta} + \bm{\eps},
\end{equation*}
where the coefficients $\bm{\beta} \in \br^{N \times 1}$ represent the contributions of measured genetic variations to the trait of interest. We remark on a few characteristics of this  
motivating genetic application.
(1) The adjective ``complex" referring to a trait is to be interpreted as non-Mendelian, that is influenced by  many different genetic variants: we expect {\em several of the elements of $\bm \beta$ to be nonzero} and we can exploit this fact using multiple regression models.
(2) The main goal of these studies is the identification of which $\beta_j \neq 0$. In other words, the focus is not on developing a predictive model for $\bm{y}$, but on {\em selecting important variables} that represent the biological mechanism behind the trait and whose relevance can be observed across multiple datasets. (3) Recognizing that $\beta_j \neq 0$ corresponds to {\em scientific discoveries at multiple levels}: each column of $\bm{X}\!$ represents a single genetic variant, but these are organized spatially in meaningful ways and their coefficients also give us information about coarser units of variation. For example, a number of adjacent polymorphisms might all map to the same gene, the portion of DNA coding for a protein. If the coefficient for any of these polymorphisms is different from zero, then we can conclude that the {\em gene is important } for the trait under study. This type of discovery is relevant to advancing our understanding of the biology behind a trait. At the same time, knowing which {\em specific variant} influences the phenotype is also relevant: this is the type of information we need for  precise clinical testing and genetic counseling. 

In summary, an ideal solution would {\em identify important genetic variants accounting for  their interdependence, and provide error control guarantees  for the discovery of both variants and  
genes.} The work in this paper  attempts to achieve this goal. We emphasize that similar problems occur in contexts other than genetics. Modern methods of data acquisition often provide information on an exhaustive collection of possible explanatory variables, even if we know a priori that a large proportion of these are not relevant for our outcome of interest. In such cases, we rely on statistical analysis to identify the important variables in a manner that facilitates replicability of results   while achieving appreciable power.  Replication of findings in separate independent studies is the cornerstone of science and cannot be substituted by a type of statistical analysis. Furthermore, the extent to which results replicate depends not only on how the conclusions were drawn from the original data, but on  the characteristics of the follow-up study: does it have enough power? does it target exactly the same ``population'' of the original investigation? etc... Yet, controlling Type-I error is a necessary step towards replicability, important in order to avoid wasting time and money on confirmatory follow-up studies of spurious findings.  
Finally, it is often the case that we  measure variables at a very fine resolution, and need to aggregate these measurements for meaningful interpretation. We consider three examples in addition to our primary motivating application.
\paragraph{fMRI studies}
Consider, for example, studies that investigate the role of different brain structures. With functional magnetic resonance imaging (fMRI) we measure on the order of a million voxels at regular time intervals during an experiment. These measurements might then be used in a model of neurocognitive ability. Usually, measurements from nearby voxels are aggregated and mapped to recognizable larger-scale brain structures called regions of interest \citep{P07}. With a time dimension also involved, we can group (voxel, time) pairs spatially or temporally. It then becomes important to make sure that the statistical methods we adopt guarantee reproducibility with respect to each kind of scientifically interpretable finding. 

\paragraph{Multifactor analysis-of-variance problems with survey data} In social science, multiple choice surveys are frequently employed in order to gather information about some characteristics of subjects. The results of these surveys can be viewed as predictor variables for certain outcomes, such as income. Since the different answer choices for a given question are coded as separate dummy variables, it makes sense to study the importance of an entire question by considering all the variables corresponding to the same question. However, if a particular question is discovered to be significantly associated with an outcome, then it might also be of interest to know which answer choices are significant. Hence, an analysis at the level of questions (groups of variables) and answer choices (individual variables) is appropriate \citep{YL06}.

\paragraph{Microbiome analysis} The microbiome (the community of bacteria that populate  specific areas of the human body, such as the gut or mouth) has gained attention recently as an important contributing factor for a variety of human health outcomes. By sequencing the bacterial 16S rRNA gene from a specimen collected from a human habitat, it is possible to quantify the abundances of hundreds of bacterial species. Bacteria, like other living organisms, are organized into hierarchical taxonomies with multiple layers including phylum, class, family, and so on. It is of interest to find associations between health outcomes  and the abundances of different types of bacteria, as described with each layer of the taxonomic hierarchy \citep{SH14}.

\subsection{Statistical challenges}
Having motivated our problem with several applications, we give an overview of the statistical challenges involved and of the tools we will leverage.

\paragraph{Controlled variable selection in high dimensional regression}

In a typical genome wide association study (GWAS), the number of subjects $n$ is on the order of tens of thousands  and the number  $N$ of genetic variants (in this case single nucleotide polymorphisms, or SNPs) is on the order of a million. 
To provide finite sample guarantees of global error,  geneticists typically analyze the relation between $\bm{X}\!$ and $\bm{y}$ using a series of univariate regressions of $\bm{y}$ on each of the columns of $\bm{X}\!$, obtain the p-values for the corresponding t-tests, and threshold them to achieve family wise error rate (FWER) control. This analysis is at odds with the polygenic nature of the traits and the choice of FWER as a measure of global error makes it difficult to recover a substantial portion of the genetic contribution to the 
phenotype \citep{MetV09}.
Using a multiple regression model for analysis and targeting false discovery rate (FDR) \citep{BH95} are promising alternatives.  

Unfortunately, in a context where $N > n$ these are difficult to implement.
Regularized regression, including the lasso \citep{T96} and various generalizations, e.g. \cite{YL06, FetT10}, have proven to be very versatile tools with nice prediction properties,  but they do not come with model selection guarantees in finite samples (for examples of asymptotic properties see \cite{KF00, NetR09}). Recent years have seen progress on this front. While in general it is difficult to obtain p-values for high-dimensional regression, \cite{JM14} propose a construction that is valid under certain sparsity assumptions. Alternatively, conditional inference after  selection \citep{FST14,TT15} can also be used in this context: the idea is to first reduce dimensionality by a screening method, and then apply the Benjamini Hochberg (BH) procedure to 
p-values  that have been adjusted for selection \citep{MetT17}. Other approaches have been proposed that bypass the construction of p-values entirely. SLOPE is  a modification of the lasso procedure which provably controls the FDR under orthogonal design matrices \citep{BetC15} and has been applied to GWAS, allowing for a larger set of discoveries which, at least in the analyzed examples, have shown good replicability properties \citep{BetS17a}. The knockoff filter \citep{BC15, CetL16}---which is based on the construction of artificial variables to act as controls---guarantees FDR control for variable selection in a wide range of settings. We will review the properties of knockoffs in Section \ref{sec:background}, as we will leverage them in our construction.

\paragraph{Controlling the false discovery rate at multiple resolutions}
While the standard GWAS analysis results in the identification of SNPs associated with a trait, geneticists also routinely rely on {\em gene} level tests, based on the signal coming from multiple variants associated to the same coding region (\cite{SH16} is a recent review), as well as other forms of aggregate tests, based on 
pathways (see \cite{WLB07} for example). Unfortunately, each of these approaches represents a distinct analysis of the data and the results they provide are not necessarily consistent with each other: we might find association with a SNP, but not with the gene to which it belongs, or with a gene but with none of the variants typed in it. Moreover, multiple layers of analysis increase the burden of multiple testing, often without this being properly accounted for. Yet, geneticists are very interested in controlling type I errors, as follow-up studies are very time consuming and diagnosis and counseling often need to rely on  association results  before a thorough experimental validation is possible.

In this context, we want to  investigate all the interesting levels of resolution simultaneously (e.g. \cite{ZetLa10}) and in a consistent fashion, providing meaningful error control guarantees for all the findings. The fact that we have chosen FDR as a measure of error rate makes this a non-trivial endeavor:
unlike FWER, FDR is a relative measure of global error and its control depends crucially on how one defines discoveries. A procedure that guarantees FDR control for the discovery of single variants does not guarantee FDR control for the discoveries of genes, as we discuss in Section \ref{sec:problem_setup}. This has been noted before in contexts where there is a well defined spatial relationship between the hypotheses and the discoveries of scientific interest are at coarser resolution than the hypotheses tested (e.g. MRI studies \citep{BH07,P07}, genome scan statistics \citep{SetY11}, eQTL mapping \citep{GTEX}). Proposed solutions to these difficulties vary depending on the scientific context, the definition of relevant discoveries, and the way the space of hypotheses is explored; \cite{Y08,BB14, HetS16, BetS17b} explore hypotheses hierarchically, while  \cite{BR15} and \cite{RetJ17} consider all levels simultaneously.

This last viewpoint---implemented in a multiple testing procedure called the p-filter and reviewed briefly in Section \ref{sec:background}---appears to be particularly well-suited to our context, where we want to rely on multiple regression to simultaneously identify important SNPs and genes.

\subsection{Our contribution}

To tackle problems like the one described in the motivating example, we develop  the \textit{multilayer knockoff filter} (MKF), a first link between multi-resolution testing and model selection approaches. We bring together the innovative ideas of knockoffs \citep{BC15, CetL16} and p-filter \citep{BR15, RetJ17} in a new construction that allows us to select important variables and important groups of variables with FDR control. Our methodology---which requires a novel proof technique---does not rely on p-values and provides a great deal of flexibility in the choice of the analysis strategy at each resolution of interest, leading to promising results in genetic applications. 
  
Section \ref{sec:problem_setup_background} precisely describes the multilayer variable selection problem and reviews the knockoff filter and the p-filter. Section \ref{sec:multilayer_knockoff_filter} introduces the multilayer knockoff filter, formulates its FDR control guarantees, and uses the new framework to provide a new multiple testing strategy. 
Section \ref{sec:numerical} reports the results of simulations illustrating the FDR control and power of MKF. Section \ref{sec:real_data} summarizes a case study on the genetic bases of HDL cholesterol, where MKF appears to successfully reduce false positives with no substantial power loss. The supplementary material \citep{KS18_supp} contains the proofs of our results and details on our genetic findings.

\section{Problem setup and background}  \label{sec:problem_setup_background}


\subsection{Controlled multilayer variable selection} \label{sec:problem_setup}

We need to establish a formal definition for our goal of identifying,  among a large collection of covariates $X_1, \ldots, X_N$, the single variables and the group of variables  that are important for an outcome $Y$ of interest. 
Following \cite{CetL16}, we let $(X_1, \dots, X_N, Y)$ be random variables with joint distribution $F$ and assume that the data $(\bm X, \bm y)$ are $n$ i.i.d. samples from this joint distribution\footnote{Our convention in this paper will be to write vectors and matrices in boldface. The only exception to this rule is that the vector of random variables $X = (X_1, \dots, X_N)$ will be denoted in regular font to distinguish it from the design matrix $\bm X$.}. We then define the \textit{base-level hypotheses} of conditional independence as follows:
\begin{equation}
H_j: Y \independent X_j \ |\ X_{-j}.
\label{hypothesis_no_association}
\end{equation}
Here, $X_{-j} = \{X_1, \dots, X_{j-1}, X_{j+1}, \dots, X_N\}$. This formulation  allows us to define null variables in a very general way.
Specifically, we  consider two sets of assumptions on $(\bm X, \bm y)$:
\begin{itemize}
	\item \textit{Fixed design low-dimensional linear model}. Suppose we are willing to assume the linear model
	\begin{equation*}
	\bm y = \bm X \bm \beta + \bm \eps, \quad \bm \eps \sim N(0, \sigma^2 \bm I_n), 
	\end{equation*}
	where $n \geq N$. In this familiar setting, the null hypothesis $H_j$ of conditional independence is equivalent to $H_j: \beta_j = 0$ unless a degeneracy occurs. The standard inferential framework assumes that $\bm X$ is fixed: this is a special case of our general formulation, conditioning on $\bm X$ and, indeed,   the method we will describe 
 can provide FDR guarantees conditional on $\bm X$. This setting was considered by \cite{BC15}.
	\item \textit{Random design with known distribution}. Alternatively, we might not be willing to assume a parametric form for $\bm y| \bm X$, but instead we might  have access to (or can accurately estimate) the joint distribution of each row of $\bm X$. This corresponds to a shift of the burden of knowledge from the the relation between $\bm y$ and $\bm X$ to the distribution of  $\bm X$.
\end{itemize}

The methodology we develop in this paper works equally well with either assumption on the data. In particular, note that the second setting can accommodate categorical response variables as well as continuous ones. Note that the random design assumption is particularly useful for genetic association studies, where $n < N$ and we can use knowledge of linkage disequilibrium (correlation patterns between variants) to estimate the distribution of $\bm X$ (see, e.g. \cite{SetC17}).

Our goal is to select $\mathcal S \subset [N] = \{1, \dots, N\}$ to approximate the set of non null variables. 
We consider situations when $\mathcal S$ is interpreted at $M$ different ``resolutions" or \textbf{layers} of inference. The $m$th layer corresponds to a partition of $[N]$, denoted by $\{\mathcal A_g^m\}_{g \in [G_m]}$, which divides the set of variables into groups representing units of interest. Our motivating example corresponds to $M = 2$: in the first partition, each SNP is a group of its own and in the second, SNPs are grouped by genes. Other meaningful ways to group SNPs in this context could be by functional units or by chromosomes, resulting in more than two layers. 
We note that the groups we consider in each layer are to be specified in advance, i.e. without looking at the data, and  that the groups in the same layer do not overlap. These restrictions do not pose a problem for our primary motivating application, although extensions to data-driven or overlapping groups are of interest as well---see the conclusion for a discussion. 

For each  layer $m$ of inference, then, we need a definition of null groups of variables. Extending the framework above, we define 
 group null hypotheses
\begin{equation}
H_g^m: Y \independent X_{\mathcal A_g^m} | X_{-\mathcal A_g^m},
\end{equation}
where $X_{\mathcal A_g^m} = \{X_j: j \in \mathcal A_g^m\}$ and $-\mathcal A_g^m = [N] \setminus \mathcal A_g^m$. Another natural definition might be based on the intersection hypotheses $\cap_{j \in \mathcal A_g^m} H_j$: while 
 in degenerate cases (e.g. when two variables are perfectly correlated), it might be the case that $H_g^m \neq \cap_{j \in \mathcal A_g^m} H_j$, this undesirable behavior often does not occur:
\begin{proposition} \label{prop:nondegeneracy}
Suppose that $(X_1, \dots, X_N) \in \mathcal D = \mathcal D_1 \times \cdots \times \mathcal D_N$, and that this joint distribution has nonzero probability density (or probability mass) at each element of $\mathcal D$. Then,
\begin{equation}
H_g^m = \bigcap_{j \in \mathcal A_g^m} H_j \quad \text{for all } g, m.
\label{equivalence}
\end{equation}
\end{proposition}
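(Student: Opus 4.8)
The plan is to prove the two inclusions behind \eqref{equivalence} separately, reading the claim as the assertion that, for any fixed joint law $F$, the block independence $H_g^m$ holds if and only if every singleton independence $H_j$, $j \in \mathcal A_g^m$, holds. Throughout I abbreviate $A = \mathcal A_g^m$ and use $X_A$, $X_{-A}$ as in the text.

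First I would dispose of the easy inclusion $H_g^m \subseteq \bigcap_{j \in A} H_j$, which needs no assumption on the distribution whatsoever. Fix $j \in A$ and note that $X_{-j} = X_{-A} \cup X_{A \setminus \{j\}}$ is a disjoint union. Writing the block as $X_A = (X_j, X_{A \setminus \{j\}})$, the statement $H_g^m$ reads $Y \independent (X_j, X_{A\setminus\{j\}}) \mid X_{-A}$; applying the weak union property of conditional independence (a semi-graphoid axiom, hence valid for every probability distribution) gives $Y \independent X_j \mid (X_{-A}, X_{A\setminus\{j\}}) = X_{-j}$, which is exactly $H_j$.

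The substantive inclusion is $\bigcap_{j\in A} H_j \subseteq H_g^m$, and this is where the positivity hypothesis is used. I would argue by induction on $|A|$, peeling off the indices of $A$ one at a time and invoking at each step the intersection property of conditional independence: if $U \independent V \mid (W, Z)$ and $U \independent Z \mid (W, V)$ then $U \independent (V, Z) \mid W$. Writing $A = \{j_1, \dots, j_k\}$, suppose inductively that $Y \independent (X_{j_1}, \dots, X_{j_r}) \mid X_{-\{j_1,\dots,j_r\}}$; the base case $r = 1$ is precisely $H_{j_1}$. Since $X_{-\{j_1,\dots,j_r\}} = X_{-\{j_1,\dots,j_{r+1}\}} \cup \{X_{j_{r+1}}\}$ and $X_{-j_{r+1}} = X_{-\{j_1,\dots,j_{r+1}\}} \cup \{X_{j_1},\dots,X_{j_r}\}$, the inductive hypothesis and the base hypothesis $H_{j_{r+1}}$ are exactly the two premises of the intersection property with $U = Y$, $V = (X_{j_1},\dots,X_{j_r})$, $Z = X_{j_{r+1}}$, and $W = X_{-\{j_1,\dots,j_{r+1}\}}$; its conclusion pushes the induction to $r+1$. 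After $k$ steps this yields $Y \independent X_A \mid X_{-A}$, i.e. $H_g^m$.

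The one point that needs care — and the reason the positivity assumption is phrased as it is — is the validity of the intersection property at each step: it is not a semi-graphoid axiom, and it can fail without a support condition. The standard proof writes the conditional density $p(y \mid v, z, w)$ as a function of $(y,z,w)$ alone (from the first premise) and as a function of $(y,v,w)$ alone (from the second), and concludes that both are free of $v$ and of $z$ provided that, for each value $w$, the conditional support of $(V,Z)$ given $W = w$ is a product set, so that $v$ and $z$ may be varied independently. In our application $V$ and $Z$ are always sub-blocks of $(X_1, \dots, X_N)$, and by hypothesis this vector has a strictly positive density (or mass function) on $\mathcal D = \mathcal D_1 \times \cdots \times \mathcal D_N$; hence, after conditioning on any other sub-block taking a value in its (product) support, the remaining coordinates still have positive density on the corresponding product of the $\mathcal D_i$'s, which is exactly the product-support condition required. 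I expect this verification — checking that the positivity supplied for the marginal law of $X$ suffices, rather than needing positivity of the joint law of $(X, Y)$ — to be the only delicate part of the argument; the remainder is routine bookkeeping with conditional-independence manipulations.
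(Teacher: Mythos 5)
Your proposal is correct and is essentially the same argument as the paper's: the paper proves the forward inclusion by the same (trivial) weak-union observation, and proves the converse by fixing $y$, $x$, $x'$ and replacing the coordinates of $x_{\mathcal A}$ by those of $x'_{\mathcal A}$ one at a time in the conditional density $p_{Y\mid X_{\mathcal A},X_{-\mathcal A}}$, using each single-variable hypothesis $H_j$ and the positivity of the law of $X$ on the product domain to justify each swap --- which is exactly the content of one application of your intersection step. The only difference is packaging: you invoke the named graphoid axioms and run a formal induction, whereas the paper carries out the identical coordinate-swapping computation directly on densities.
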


We assume for the rest of the paper that (\ref{equivalence}) holds, i.e. a group of variables is conditionally independent of the response if and only if each variable in that group is conditionally independent of the response.

The selected variables $\mathcal S$ induce group selections $\mathcal S_m \subset [G_m]$ at each layer via
\begin{equation*}
{\cal S}_m = \{g \in [G_m]: {\cal S} \cap {\cal A}_g^m \neq \varnothing\}:
\label{S_hat_m}
\end{equation*}
i.e. 
 a group is selected  if 
 at least one variable belonging to that group is selected. To strive for replicable findings with respect to each layer of interpretation, we seek methods for which $\mathcal S_m$ has a low false discovery rate for each $m$. If $\mathcal H_0 = \{j \in [N]: H_j \text{ null}\}$ is the set of null variables, then the set of null groups at layer $m$ is
\begin{equation*}
\mathcal H_0^m = \{g \in [G_m]: \mathcal A_g^m \subset \mathcal H_0\}.
\end{equation*}
(this is guaranteed by (\ref{equivalence})). Then, the number of false discoveries at layer $m$ is
\begin{equation*}
V_m({\cal S}) = |{\cal S}_m \cap \mathcal H_0^m|,
\label{null_groups}
\end{equation*}
which we abbreviate with $V_m$ whenever possible without incurring confusion.
The corresponding false discovery rate is defined as the expectation of the false discovery proportion (FDP) at that layer: 
\begin{equation*}
\text{FDP}_m({\cal S}) = \frac{V_m(\cal S)}{|{\cal S}_m|} \quad \text{and}  \quad \text{FDR}_m = \mathbb E[\text{FDP}_m({\cal S})],
\label{group_FDR}
\end{equation*}
 using the convention that 0/0 = 0 (the FDP of no discoveries is 0). 
 A selection procedure obeys \textit{multilayer FDR control} \citep{BR15} at levels $q_1, \dots, q_M$ for each of the layers if 
\begin{equation}
\text{FDR}_m \leq q_m \ \text{for all $m$.} \label{mlFDR}
\end{equation}
 
It might be surprising that the guarantee of 
FDR control for the selection of  individual variables $X_1,\ldots, X_N$ does not 
extend to the control $\text{FDR}_m$.
\begin{figure}[t!]
	\begin{center}
	\includegraphics[width = \textwidth]{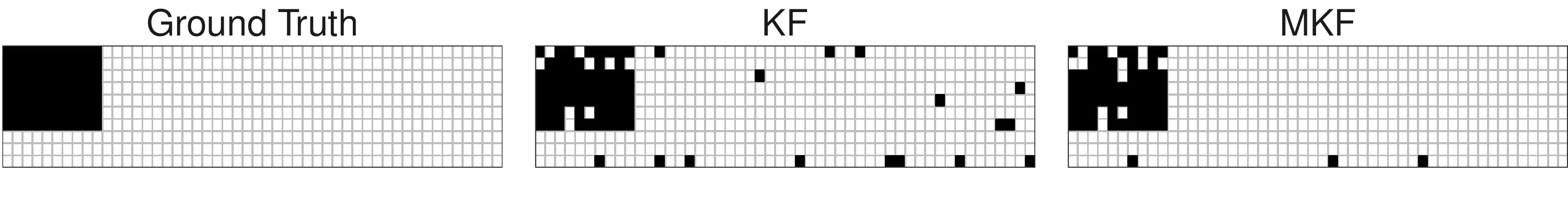}
	\end{center}
	\caption{Demonstration that small $\text{FDP}_{\text{ind}}$ does not guarantee small $\text{FDP}_{\text{grp}}.$ Each square represents a variable and columns contain variables in the same group. The left-most panel illustrates the true status of the hypotheses: a black square corresponds to non-null and a white square to a null variable. In the second and third panels, black squares represent  selected variables by KF and MKF respectively with $q_{\text{ind}}= q_{\text{grp}} = 0.2.$ The KF has $\text{FDP}_{\text{ind}} = 0.21$ and $\text{FDP}_{\text{grp}} = 0.58$, while MKF has $\text{FDP}_{\text{ind}} = 0.05$ and $\text{FDP}_{\text{grp}} = 0.17$.
	}
	\label{fig:mkn_advertisement}
\end{figure}
Figure \ref{fig:mkn_advertisement} provides an illustration of this fact for the simple case of $M=2$, with one layer  corresponding to the individual variables (denoted below by the subscript ``ind") and one group layer (denoted by the subscript ``grp").
We generate a matrix $\bm{X}\!$ ($n=1200$, $N=500$) 
sampling each entry independently from a standard normal distribution. 
The 500 variables are organized 
into 50 groups of 10 elements each. 
The outcome $\bm{y}$ is generated from $\bm{X}\!$ using a linear model with 70 nonzero coefficients, 
evenly spread across 10 groups. The middle panel of Figure \ref{fig:mkn_advertisement} shows the results of applying the knockoff filter \citep{BC15} with a target FDR level of 0.2. While the false discovery proportion for the individual layer is near the nominal level ($\text{FDP}_{\text{ind}} = 0.21$), the FDP at the group layer is unacceptably high ($\text{FDP}_{\text{grp}} = 0.58$). 
  The middle panel of Figure \ref{fig:mkn_advertisement} guides our intuition for why this problem occurs: false  discoveries occur roughly uniformly across null variables and are then dispersed across groups (instead of being clustered in a small number of groups). When the number of null groups is comparable to or larger than the number of false discoveries, we have $V_{\text{grp}} \approx V_{\text{ind}}$ and  we can write roughly
\begin{equation*}
\text{FDP}_{\text{grp}} = \frac{V_{\text{grp}}}{|\mathcal S_{\text{grp}}|} \approx \frac{V_{\text{ind}}}{|\mathcal S_{\text{grp}}|} = \frac{|\mathcal S_{\text{ind}}|}{|\mathcal S_{\text{grp}}|}\frac{V_{\text{ind}}}{|\mathcal S_{\text{ind}}|} = \frac{|\mathcal S_{\text{ind}}|}{|\mathcal S_{\text{grp}}|}\text{FDP}_{\text{ind}}.
\end{equation*}
Hence, the group FDP is inflated roughly by a factor of $|\mathcal S_{\text{ind}}|/|\mathcal S_{\text{grp}}|$ compared to the individual FDP. This factor is high when we make several discoveries per group, as we expect 
 when a non-null group has a high  number of non-null elements (high \textbf{saturation}).

To summarize, 
 if we want to to make replicable discoveries at $M$ layers, 
we need to develop model selection approaches that explicitly target (\ref{mlFDR}). 
The multilayer knockoff filter precisely achieves this goal, as  is illustrated in the third panel of Figure \ref{fig:mkn_advertisement}, where much fewer variable groups (depicted as columns) have spurious discoveries compared to the results of the regular knockoff filter in the middle panel. 

\subsection{Background} \label{sec:background}

To achieve multilayer FDR control in the model selection setting, we capitalize on the properties of knockoff statistics \citep{BC15,CetL16} within a multilayer hypothesis testing paradigm described in the paper introducing the p-filter \citep{BR15}.
Here we briefly review these two methods.

\paragraph{Knockoff statistics: an alternative to p-values for variable selection} The knockoff filter \citep{BC15} is a powerful methodology for the variable selection problem that controls the FDR with respect to the single layer of individual variables. This method is based on constructing \textit{knockoff statistics} $W_j$ for each variable $j$. Knockoff statistics are an alternative to p-values; their distribution under the null hypothesis is not fully known, but obeys a sign symmetry property. In particular, knockoff statistics $\bm W = (W_1, \dots, W_N) \in \br^N$ for any set of hypotheses $H_1, \dots, H_N$ obey the \textit{sign-flip property} if conditional on $|\bm{W}|$, the signs of the $W_j$'s corresponding to null $H_j$ are distributed as i.i.d. fair coin flips. This property allows us to view $\text{sgn}(W_j)$ as ``one-bit p-values" and paves the way for FDR guarantees. Once $W_j$ are constructed, the knockoff filter proceeds similarly to the BH algorithm, rejecting $H_j$ for those $W_j$ passing a data-adaptive threshold.

In the variable selection context, the paradigm for creating knockoff statistics $W_j$ is to create an artificial variable for each original to serve as a control. These \textit{knockoff variables} $\widetilde{\bm{X}}\in \br^{n \times N}$ are defined to be pairwise exchangeable with the originals, but are not related to $\bm{y}$. Then, $[\bm{X}\ \widetilde{\bm{X}}]$ are assessed jointly for empirical association with $\bm{y}$ (e.g. via a penalized regression), resulting in a set of \textit{variable importance statistics} $(\bm Z, \widetilde{\bm Z}) = (Z_1, \dots, Z_N, \widetilde Z_1, \dots, \widetilde Z_N)$. Each knockoff statistic $W_j$ is constructed as the difference (or any other antisymmetric function) of $Z_j$ and $\widetilde Z_j$. Hence, large positive $W_j$ provides evidence against $H_j$. Note that the sign of $W_j$ codes whether the original variable or its knockoff is more strongly associated with $\bm y$.  The sign-flip property is a statement that when $\bm{X}_j$ is not associated with $\bm{y}$, swapping columns $\bm X_j$ and $\widetilde{ \bm X}_j$ does not change the distribution of $(\bm Z, \widetilde{ \bm{ Z}})$. In its original form, the knockoff filter applied only to the fixed design low-dimensional linear model. Recently, \cite{CetL16} have extended it to the high-dimensional random design setting via \textit{model-X knockoffs} . Model-X knockoffs greatly expand the scope of the knockoff filter, and make it practical to apply for genetic association studies. Additionally, \textit{group knockoffs statistics} have been proposed by \cite{DB16} to test hypotheses of no association between groups of variables and a response. Since determining whether a group contains non-null variables is usually easier than pinpointing which specific variables are non-null, group knockoff statistics are more powerful than knockoff statistics based on individual variables. This is due mainly to increased flexibility in constructing $\widetilde{\bm X}$ (since only group-wise exchangeability is required), which allows $\bm X_j$ and $\widetilde{\bm{ X}}_j$ to be less correlated, which in turn increases power.

Rather than detailing here the methodology to construct knockoffs, variable importance statistics, and knockoff statistics, we defer these details to Section \ref{sec:multilayer_knockoff_filter}, where we describe our proposed method.

\paragraph{p-filter: a paradigm for multilayer hypothesis testing} The p-filter \citep{BR15, RetJ17} provides a framework for multiple testing with FDR control at multiple layers. This methodology, which generalizes the BH procedure, attaches a p-value to each group at each layer. Group p-values are defined from base p-values via the (weighted) Simes global test (though the authors recently learned  that this can be generalized to an extent, Ramdas (2017), personal communication). Unless specified otherwise, in this paper ``p-filter" refers to the methodology as described originally in \cite{BR15}. For a set of thresholds $\bm t = (t_1, \dots, t_M) \in [0, 1]^M$, base-level hypotheses are rejected if their corresponding groups at each layer pass their respective thresholds. A threshold vector $\bm t$ is ``acceptable" if an estimate of FDP for each layer is below the corresponding target level. A key observation is that the set of acceptable thresholds always has an ``upper right hand corner," which allows the data adaptive thresholds $\bm{t^*}$ to be chosen unambiguously as the most liberal acceptable threshold vector, generalizing the BH paradigm. 

\section{Multilayer Knockoff Filter} \label{sec:multilayer_knockoff_filter}

As illustrated in the next section, the  multilayer knockoff filter uses knockoff statistics---uniquely suited for variable selection---within the multilayer hypothesis testing paradigm of the p-filter. The p-filter paradigm was intended originally for use only with p-values, so justifying the use of knockoff statistics in this context requires a fundamentally new theoretical argument. Moreover, unlike the p-filter, statistics at different layers need not be ``coordinated" in any way, and indeed our theoretical result handles arbitrary between-layer dependencies. See Section \ref{sec:discussion} for more comparison of MKF with previous approaches.



\subsection{The procedure}
We first provide a high-level view of MKF in Framework \ref{framework:multilayer_knockoff_filter}, and then discuss each step in detail. 

\setcounter{algocf}{0}
\noindent
\begin{minipage}{\linewidth}
	\begin{algorithm}[H]
		\SetAlgorithmName{Framework}{}\; 
		\KwData{$\bm{X}\!$, $\bm{y}$, partitions $\{{\cal A}_g^m\}_{g,m}$ with $g=1, \ldots, G_m$ and $m=1, \ldots, M$, FDR target levels $q_1, \dots, q_M$}
		\For{$m = 1$ \KwTo $M$}{
			Construct group knockoff variables $\widetilde{\bm{X}}^m$\;
			Construct group knockoff statistics $\bm{W}^m = (W_1^m, \dots, W_{G_m}^m) = w^m([\bm{X} \ \widetilde{\bm{X}}^m], \bm{y})$ satisfying the sign-flip property\;
		}	
		For $\bm{t} = (t_1, \dots, t_M) \in [0, \infty)^M$, define candidate selection set
		\begin{equation*}
		{\cal S}(\bm{t}) = \{j: W_{g(j, m)}^m \geq t_m \ \forall m\},
		\end{equation*}
		where $g(j,m)$ is the group at layer $m$ to which hypothesis $j$ belongs\;
		For each $m$, let $\widehat V_m(t_m)$ be an estimate of $V_m(\mathcal S(\bm t))$ and define $\widehat{\text{FDP}}_m(\bm{t}) = \displaystyle \frac{\widehat V_m(t_m)}{| {\cal S}_m(\bm{t})|}$\;
		Find $\bm{t}^* = \min\{\bm{t}: \widehat{\text{FDP}}_m(\bm t) \leq q_m \ \forall m\}$\;
		\KwResult{Selection set $ {\cal S} =  {\cal S}(\bm{t}^*)$.}
		\caption{\bf Multilayer Knockoff Filter}
		\label{framework:multilayer_knockoff_filter}
	\end{algorithm}
\end{minipage}

\paragraph{Constructing knockoffs for groups} 

To carry out our layer-specific inference, we need to construct knockoffs for groups of variables. 

Within the fixed design framework, we can rely on the recipe  described in \cite{DB16}. These group knockoffs have the property 
that the first two sample moments of the augmented design matrix $[\bm X\ {\widetilde{\bm X}}^m]$ are invariant when any set of groups is swapped with their knockoffs. To be more precise, let $\mathcal C_m \subset [G_m]$ be a set of groups at the $m$th layer, and let
\begin{equation}
\mathcal C = \bigcup_{g \in \mathcal C_m} \mathcal A_g^m 
\label{union_of_groups}
\end{equation}
be the variables belonging to any of these groups. Let $[\bm X\ {\widetilde{\bm X}}^m]_{\text{swap}(\mathcal C)}$ be the result of swapping the columns $\bm{X}_j$ and $\widetilde{\bm X}_j$ in the augmented design matrix for each $j \in \mathcal C$. Invariance of empirical first moments to swapping can be achieved by centering all variables and their knockoffs, while invariance of empirical second moments requires that for each $\mathcal C_m$, 
\begin{equation}
[\bm{X}\ \widetilde{\bm{X}}^m]_{\text{swap}({\cal C})}^T [\bm{X}\ \widetilde{\bm{X}}^m]_{\text{swap}({\cal C})} = [\bm{X}\ \widetilde{\bm{X}}^m]^T [\bm{X}\ \widetilde{\bm{X}}^m].
\label{covariance_swap_invariance}
\end{equation}
	
A construction of group knockoffs $\widetilde{\bm X}^m$ satisfying this property is given in \cite{DB16}. It is based on the observation that (\ref{covariance_swap_invariance}) is equivalent to
\begin{equation}
\begin{split}
&(\widetilde{\bm{X}}^m)^T \widetilde{\bm{X}}^m = \bm \Sigma = \bm X^T \bm X; \quad (\widetilde{\bm{X}}^m)^T \bm X = \bm \Sigma^m - \bm S^m; \\
& \qquad \qquad \bm S^m \succeq 0 \text{ group block diagonal}.
\label{equivalent}
\end{split}
\end{equation}
For a fixed block diagonal matrix $\bm S^m$, $\widetilde{\bm{X}}^m$ satisfying (\ref{equivalent}) can be constructed via
\begin{equation*}
\widetilde{\bm{X}}^m = \bm X(\bm I_{N} - \bm \Sigma^{-1}\bm S^m) + \bm{\widetilde U} \bm C^m,
\end{equation*}
where $\widetilde{\bm{U}}$ is an $n \times N$ orthonormal matrix orthogonal to the span of $\bm X$ and $\bm C^m$ is a Cholesky square root of $2\bm S^{m} - \bm S^m \bm \Sigma^{-1} \bm S^m$. The latter expression is positive semidefinite if $\bm S^m \preceq 2\bm \Sigma$. While there are several ways to construct a block diagonal $\bm S^m$ satisfying $0 \preceq \bm S^m \preceq 2\bm \Sigma$, the \textit{equicorrelated} knockoff construction is defined via
\begin{equation}
\bm{S}^m = \text{diag}(\bm S_1^m, \dots, \bm S_{G_m}^m), \quad \bm S_g^m = \gamma^m \cdot \bm \Sigma_{\mathcal A_g^m, \mathcal A_g^m},
\label{S_1}
\end{equation}
where 
\begin{equation*}
\gamma^m = \min(1, 2 \cdot \lambda_{\min}(\bm D^m \bm \Sigma \bm D^m))
\label{S_2}
\end{equation*}
and
\begin{equation}
\bm D^m = \text{diag}\left(\bm \Sigma_{\mathcal A_1^m, \mathcal A_1^m}^{-1/2}, \dots, \bm \Sigma_{\mathcal A_{G_m}^m, \mathcal A_{G_m}^m}^{-1/2}\right).
\label{S_3}
\end{equation}
Note that throughout this paper, all numerical experiments are based on the fixed design linear model and will thus use this construction.

Within the random design framework,
the most natural way to define group knockoffs is to generalize the definition in \cite{CetL16}. For a set of random variables $(X_1, \dots, X_N)$ and a set of groups $\{\mathcal A_g^m\}_{g \in [G_m]}$, a set of \textit{model-X group knockoffs} $(\widetilde X_1^m, \dots, \widetilde X_N^m)$ is such that
\begin{equation}
(X_1, \dots, X_N, \widetilde X_1^m, \dots, \widetilde X_N^m)_{\text{swap}(\mathcal C)} \overset d = (X_1, \dots, X_N, \widetilde X_1^m, \dots, \widetilde X_N^m)
\label{dist_swap_invariance}
\end{equation}
for any $\mathcal C$ of the form (\ref{union_of_groups}). Note that for regular model-X knockoffs, (\ref{dist_swap_invariance}) must hold for \textit{all} $\mathcal C$. Hence, less exchangeability is required of model-X group knockoffs, which allows for them to be less similar to the original variables.

The sequential conditionally independent pairs (SCIP) procedure, proposed in \cite{CetL16} to prove the existence of model-X knockoffs, generalizes straightforwardly to model-X group knockoffs. In particular, the group SCIP procedure proceeds as follows: for each $g = 1, \dots, G_m$, we sample $\widetilde X^m_{\mathcal A_g^m}$ from $\mathcal L(X_{\mathcal A_g^m}|X_{-\mathcal A_g^m}, \widetilde X^m_{\mathcal A_1^m, \dots, \mathcal A_{g-1}^m})$. The more explicit second-order knockoffs construction (exact for normally distributed variables) also carries over fairly directly. Suppose that $X$ is distributed as $N(0, \bm \Sigma)$. Then, sampling 
\begin{equation*}
\widetilde X^m|X \sim N(\bm \mu, \bm V), \quad \bm \mu = X - X \bm \Sigma^{-1}\bm S^m, \quad \bm V = 2 \bm S^m - \bm S^m \bm \Sigma^{-1} \bm S^m
\end{equation*}
yields a valid group model-X knockoff construction, where $\bm S^m$  is as defined in (\ref{S_1}), (\ref{S_2}) and (\ref{S_3}). Note that $X$ and $\bm \mu$ are treated as row vectors of dimension $1 \times N$. Recently, an HMM-based model-X knockoff construction has been proposed by \cite{SetC17}, tailored specifically for genetic design matrices. This construction can be generalized to the group setting as well (Matteo Sesia, personal communication, 2018), though we do not discuss the details in the present work.
\paragraph{Constructing group importance statistics} 

Once group knockoffs are constructed for each layer $m$, the next step is to define group importance statistics $(\bm Z^m, \widetilde{\bm Z}^m) = (Z^m_1, \dots, Z_{G_m}^m, \widetilde Z_1^m, \dots, \widetilde Z_{G_m}^m) = z^m([\bm{X}\ \widetilde{\bm{X}}^m], \bm y)$, one for each group and each knockoff group. Here $z^m$ is a function that assesses the association between each group (original and knockoff) and the response. 

The function $z^m$ must be \textit{group-swap equivariant}, i.e. 
\begin{equation*}
z^m([\bm{X}\ \widetilde{\bm{X}}^m]_{\text{swap}({\cal C})}, y) = (Z^m_1, \dots, Z_{G_m}^m, \widetilde Z_1^m, \dots, \widetilde Z_{G_m}^m)_{\text{swap}({\cal C}_m)},
\end{equation*} 
for each $\mathcal C_m \subset [G_m]$ and corresponding $\mathcal C$ defined in (\ref{union_of_groups}). In words, this means that swapping entire groups in $[\bm X \ \widetilde{\bm{ X}}^m]$ translates to swapping the entries of $(\bm Z^m, \widetilde{\bm Z}^m)$ corresponding to those groups. For fixed design knockoffs, $z^m$ must also satisfy the \textit{sufficiency property}: $z^m$ must operate on the data only through the sufficient statistics $[\bm{X}\ \widetilde{\bm{X}}^m]^T \bm{y}$ and $[\bm{X}\ \widetilde{\bm{X}}^m]^T [\bm{X} \ \widetilde{\bm{X}}^m]$. Taken together, these steps lead to a function $w^m$ mapping the augmented design matrix and response vector to a vector of knockoff statistics $\bm W^m = (W_1^m, \dots, W_{G_m}^m)$.

There are many possible choices of $z^m$, and any choice satisfying group-swap equivariance (and sufficiency, for fixed design knockoffs) is valid. However, different choices will lead to procedures with different power. Generalizing the proposal of \cite{DB16}---based on the group lasso---we consider a class of group importance statistics 
$(\bm Z^m, \widetilde{\bm Z}^m)$ obtained by first solving for each $\lambda$ the penalized regression
\begin{equation}
\begin{split}
\bm b^\star(\lambda), \widetilde{\bm b^\star}(\lambda) = &\underset{\bm{b}, \widetilde{\bm{b}}}{\arg \min}\ \frac{1}{2}\norm{\bm{y} - [\bm{X}\ \widetilde{\bm{X}}^m]{\bm{b} \choose \widetilde{\bm{b}}}}^2 \\
&\qquad \qquad + \lambda\left(\sum_{g = 1}^{G_m} \ell_g^m(\bm{b}_{{\cal A}_g^m}) + \sum_{g = 1}^{G_m} \ell_g^m(\widetilde{\bm{b}}_{{\cal A}_g^m})\right),
\end{split}
\label{augmented_group_lasso_general}
\end{equation}
where $\ell_g^m$ are arbitrary penalty functions (the group lasso corresponds to $\ell_g^m(\bm{u}) = \sqrt{|{\cal A}_g^m|}\norm{\bm{u}}_2$). The group importance statistics $Z_g^m$ ($\widetilde Z_g^m$) are then defined as the value of $\lambda$  for which the group $\bm{X}_{\mathcal A_g^m}$ ($\widetilde{\bm X}_{\mathcal A_g^m}$) first enters the lasso path:
\begin{equation}
\begin{split}
Z_g^m = \sup\{\lambda: \bm{b}^{\star}_{\mathcal A_g^m}(\lambda) \neq 0\}, \quad \widetilde Z_g^m = \sup\{\lambda: \widetilde{\bm{b}}^{\star}_{\mathcal A_g^m}(\lambda) \neq 0\}.
\end{split}
\label{first_entry}
\end{equation}
For each $m$, the regularization in  (\ref{augmented_group_lasso_general}) is defined on subsets of $\bm{b}$ corresponding to groups at the $m$ layers and it is separable with respect to the $m$th partition $\{{\cal A}_g^m\}$. While this guarantees  group-swap equivariance and sufficiency, other constructions are certainly possible.

Different choices of  $\ell_g^m$ allow us to adapt to the available information on signal structure and potentially gain power.
The 2-norm defining the group lasso leads to entire groups entering the regularization path at the same time, so each variable in a group contributes equally to the corresponding group importance statistic. Drawing an analogy to global testing, this definition is similar to the Fisher combination test or the chi-squared test, which are known to be powerful in regimes when the signal is weak and distributed. In our case, we should apply group lasso based statistics if we suspect that each group has many non-nulls. Taking this analogy further, we can also construct a Simes-like statistic that is more suited to the case when we believe each group has a few strong signals. This test statistic is defined by letting $\ell_g^m(\bm{u}) = \norm{\bm{u}}_1$; i.e. running a regular lasso regression. This will allow for each variable to come in on its own, and the knockoff statistic $W_g^m$ will be driven by the strongest predictor in the corresponding group. 

\paragraph{Constructing group knockoff statistics}

Finally, group knockoff statistics are defined via $W_g^m = f_g^m(Z_g^m, \widetilde Z_g^m)$, where $f_g^m$ is any antisymmetric function (i.e. swapping its arguments negates the output), such as the difference $f(Z_g^m, \widetilde Z_g^m) = Z_g^m - \widetilde Z_g^m$ or the signed-max $f(Z_g^m, \widetilde Z_g^m) = \max(Z_g^m, \widetilde Z_g^m)\cdot \text{sgn}(Z_g^m - \widetilde Z_g^m)$. Hence, $W_g^m$ quantifies the difference in how significantly associated $\bm X_{\mathcal A_g^m}$ and $\widetilde{\bm X}_{\mathcal A_g^m}$ are with $\bm y$.

\paragraph{Definition of $\widehat{\bm{V}}_m(t_m)$} The FDR guarantee for the method depends crucially on the estimate $\widehat V_m(t_m)$ of $V_m(\mathcal S(t))$. Intuitively, the larger $\widehat V_m(t_m)$ is, the stronger the FDR guarantee. For the original knockoff filter \citep{BC15}, two choices of $\widehat V(t)$ were considered: $\widehat V(t) = |\{j: W_j \leq -t\}|$ (which leads to a procedure we abbreviate KF, for knockoff filter) and $\widehat V(t) = 1 + |\{j: W_j \leq -t\}|$ (leading to a procedure we abbreviate KF+). Note that these are defined in terms of the size of the set obtained from $\mathcal S(t) = \{j: W_j \geq t\}$ by reflection about the origin. These definitions are motivated by the sign-flip property, and it can be easily shown that $\widehat V(t) = |\{j: W_j \leq -t\}|$ is a conservative estimate for $V(\mathcal S(t)) = |\{\text{null } j: W_j \geq t\}|$. The reason for also considering $\widehat V(t) = 1 + |\{j: W_j \leq -t\}|$ is that the extra 1 is needed for exact FDR control. The KF procedure, defined without this extra 1, controls a weaker criterion, called the mFDR, and defined as follows:
\begin{equation*}
\text{mFDR} = \mathbb E\left[\frac{| {\cal S} \cap \mathcal H_0|}{| {\cal S}| + q^{-1}}\right].
\end{equation*}

Similarly, we consider methods MKF and MKF+ based on two definitions of $\widehat V_m(t_m)$. However, we shall also consider the effect of adding a constant multiplier $c$ to this estimate as well; see Procedures \ref{alg:MKF} and \ref{alg:MKF+}.

\setcounter{algocf}{0}
\noindent
\begin{minipage}{\linewidth}
	\begin{algorithm}[H]
		\SetAlgorithmName{Procedure}{}\; 
		Framework \ref{framework:multilayer_knockoff_filter}, with $\widehat V_m(t_m) = c\cdot |\{g: W_g^m \leq -t_m\}|$.
		\caption{\bf MKF($\bm{c}$)}
		\label{alg:MKF}
	\end{algorithm}
\end{minipage}
\begin{minipage}{\linewidth}
	\begin{algorithm}[H]
		\SetAlgorithmName{Procedure}{}\; 
		Framework \ref{framework:multilayer_knockoff_filter}, with $\widehat V_m(t_m) = c\cdot (1 + |\{g: W_g^m \leq -t_m\}|)$.
		\caption{\bf MKF$\bm{(c)+}$}
		\label{alg:MKF+}
	\end{algorithm}
\end{minipage}

\paragraph{Definition and computation of $\bm{t}^*$}

Note that the last step in Framework \ref{framework:multilayer_knockoff_filter} needs clarification, since the minimum of a set in $M$ dimensions is not well defined in general. However, the following lemma resolves the issue.
\begin{lemma}
Consider the set of valid thresholds
\begin{equation}
\mathcal T = {\mathcal T}(q_1, \dots, q_M) = \{\bm{t}: \widehat{\text{FDP}}_m(\bm{t}) \leq q_m \text{ for all } m\}.
\label{valid_thresholds}
\end{equation} 
For any definition of $\widehat V_m(\cdot)$ in Framework \ref{framework:multilayer_knockoff_filter} depending only on $t_m$ (as opposed to the entire vector $\bm t$), the set ${\mathcal T}$ will possess the ``lower left-hand corner property," which means that it contains the point $\bm t^* = (t_1^*, \dots, t_M^*)$ defined by 
\begin{equation*}
t^*_m = \min\{t_m: (t_1, \dots, t_M) \in {\mathcal T} \text{ for some } t_1, \dots, t_{m-1}, t_{m+1}, \dots, t_M\}.
\end{equation*}
\end{lemma}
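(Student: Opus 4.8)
The plan is to establish that the set $\mathcal T$ of valid thresholds is closed under coordinatewise minimization; combined with the standard observation that only finitely many threshold vectors need to be considered, this yields the lemma at once, since a finite set closed under coordinatewise minima contains its coordinatewise least element, which is exactly $\bm t^*$. For the finiteness reduction, note that both $\mathcal S(\bm t)$ and each estimate $\widehat V_m(t_m)$ change only as some coordinate $t_m$ crosses one of the finitely many values $|W_g^m|$, $g\in[G_m]$ (the selection set when $t_m$ crosses a positive $W_g^m$, the estimate $\widehat V_m$ when $t_m$ crosses $-W_g^m$ for a negative $W_g^m$). Hence $\widehat{\text{FDP}}_m(\bm t)$ and $\mathcal S(\bm t)$ are piecewise constant, and --- exactly as for the original knockoff filter --- it suffices to let $\bm t$ range over the finite grid $\prod_m \big(\{|W_g^m| : g\in[G_m]\}\cup\{+\infty\}\big)$, where the symbol $+\infty$ encodes ``select nothing at layer $m$.'' On this grid $\mathcal T$ is a finite set, and the point $\bm t^*$ of the lemma is its coordinatewise least element.

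Next I would prove the meet-closure property. Let $\bm s,\bm t\in\mathcal T$ and put $\bm u=\bm s\wedge\bm t$, i.e. $u_m=\min(s_m,t_m)$. Since $j\in\mathcal S(\bm t')\iff W^m_{g(j,m)}\ge t'_m$ for all $m$, the map $\bm t'\mapsto\mathcal S(\bm t')$ is antitone; because $\bm u\le\bm s$ and $\bm u\le\bm t$ coordinatewise, this gives $\mathcal S(\bm u)\supseteq\mathcal S(\bm s)\cup\mathcal S(\bm t)$, and since $\mathcal S\mapsto\mathcal S_m$ is monotone, $|\mathcal S_m(\bm u)|\ge|\mathcal S_m(\bm t)|$ for every $m$. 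Fix $m$ and suppose without loss of generality that $u_m=t_m$ (the case $u_m=s_m$ is symmetric). Because $\widehat V_m$ depends only on its $m$-th coordinate, $\widehat V_m(u_m)=\widehat V_m(t_m)$. If $|\mathcal S_m(\bm t)|>0$, then $\bm t\in\mathcal T$ gives $\widehat V_m(t_m)\le q_m|\mathcal S_m(\bm t)|\le q_m|\mathcal S_m(\bm u)|$, so $\widehat{\text{FDP}}_m(\bm u)=\widehat V_m(u_m)/|\mathcal S_m(\bm u)|\le q_m$. If instead $|\mathcal S_m(\bm t)|=0$, then $\widehat{\text{FDP}}_m(\bm t)\le q_m$ together with the convention $0/0=0$ forces $\widehat V_m(t_m)=0$ (otherwise the ratio would be $+\infty$), hence $\widehat V_m(u_m)=0$ and $\widehat{\text{FDP}}_m(\bm u)=0\le q_m$ regardless of $|\mathcal S_m(\bm u)|$. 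As $m$ was arbitrary, $\bm u\in\mathcal T$.

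To conclude: if $\mathcal T=\varnothing$ on the grid, then $\bm t^*=(+\infty,\dots,+\infty)$ by the convention for the minimum of an empty set and the procedure selects nothing, which is consistent with the statement. Otherwise, applying meet-closure repeatedly to the finitely many elements of $\mathcal T$ produces $\bm t^\sharp:=\bigwedge_{\bm t\in\mathcal T}\bm t\in\mathcal T$, whose $m$-th coordinate equals $\min\{t_m:\bm t\in\mathcal T\}=t^*_m$; thus $\bm t^\sharp=\bm t^*$ and $\bm t^*\in\mathcal T$, as claimed.

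The main obstacle is the treatment of a vanishing denominator in the meet-closure step: one must verify that a threshold vector with $|\mathcal S_m(\bm t)|=0$ can belong to $\mathcal T$ only when $\widehat V_m(t_m)=0$ as well --- this is precisely where the ``$+1$'' distinguishing MKF from MKF+ intervenes, and where the convention $0/0=0$ (as opposed to $c/0=+\infty$) must be read carefully. A secondary point requiring care is the grid reduction itself: because $\widehat V_m$ is only piecewise constant and its jump points $\{-W_g^m\}$ differ from those of $\mathcal S(\cdot)$, namely $\{W_g^m\}$, one should check that passing to the finite grid of values $|W_g^m|$ changes neither $\mathcal T$ (up to the induced selection sets and FDP values) nor the resulting $\bm t^*$, so that the ``$\min$'' in the lemma is genuinely well defined.
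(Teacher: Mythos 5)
Your proof is correct, and it is essentially the argument the paper invokes: the paper defers to Theorem~3 of Barber and Ramdas (2016), whose core is exactly the lattice-closure step you prove (feasibility of the coordinatewise meet, using that $\widehat V_m$ depends only on $t_m$ while $|\mathcal S_m(\cdot)|$ is antitone), combined with the reduction to the finite grid of $|W_g^m|$ values. Your explicit treatment of the zero-denominator case and of the grid reduction fills in details that the paper leaves to the citation, but the route is the same.
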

Hence, the point $\bm t^*$ is the lower left-hand corner of $\mathcal T$ and is the minimum in the last step of Procedure \ref{framework:multilayer_knockoff_filter}. The p-filter enjoys this same property, and in fact  the proof of this lemma is the same as that of Theorem 3 in \cite{BR15}. In addition to being well-defined, the threshold $\bm{t}^*$ can be computed efficiently using the same iterative coordinate-wise search algorithm proposed in \cite{BR15}.

Figure \ref{fig:mkn_illustration} provides an  illustration of the multilayer knockoff filter on  simulated data.  The 2000 variables are broken into 200 groups, each of size 10.  (More details about this example 
are 
in Section \ref{sec:mkn_simulations}; it corresponds to the high saturation setting, with $\text{SNR} = 0.5$ and variable correlation 0.1.) The multilayer knockoff filter for the individual layer and the group layer results in the selection region in the upper right-hand corner and enjoys high power. 
For comparison, the (single layer) knockoff filter selects all points to the right of the broken 
 line; among these there are several nulls not selected by MKF as their group signal is not strong enough. 
In this simulation, MKF reduces  false positives without losing any power.  
\begin{figure}[h!]
	\centering
	\includegraphics[width = 0.9\textwidth]{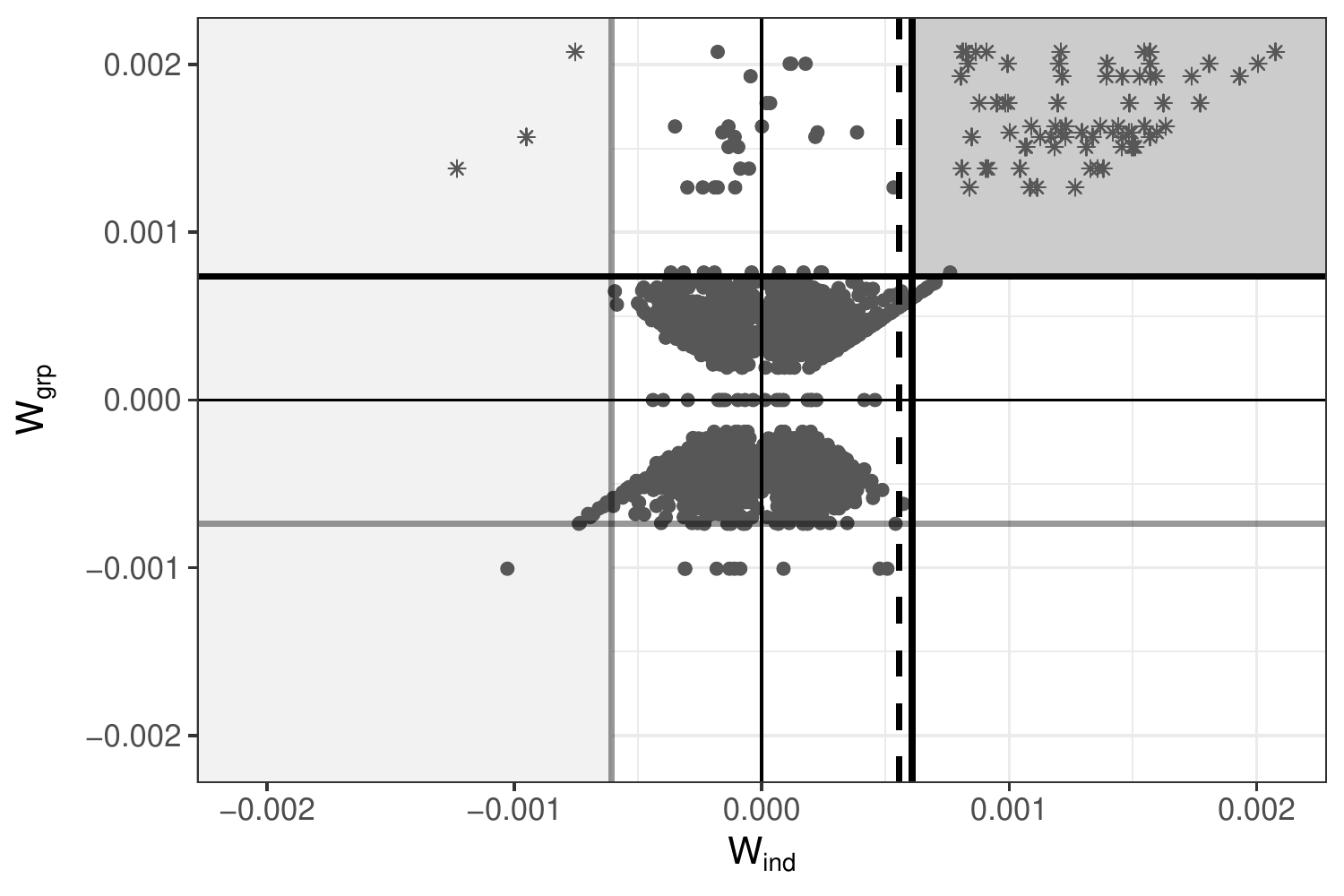}
	\caption{Illustration of one run of the multilayer knockoff filter. Each point  represents an individual-layer hypothesis $j$, with coordinates  $W_j^1$ and $W_{g(j,2)}^2$: circles indicate nulls, asterisks  non-nulls. The solid black lines are the thresholds for multilayer knockoff filter, while reflected gray lines are used in the definition of $\widehat V_m$. The broken black line represents the threshold for knockoff filter. The darkly shaded upper right corner represents the selection set of the multilayer knockoff filter, and the lightly shaded left half-plane represents  the area used to define $\widehat V_{\text{ind}}$.}
	\label{fig:mkn_illustration}
\end{figure}

\paragraph{Computational complexity}
Computationally, both the multilayer knockoff filter and the original knockoff filter can be broken down into three distinct parts: constructing knockoff variables, computing knockoff statistics, and then filtering those statistics to obtain a final selection set. Usually, the bottleneck is the second step; computing knockoff statistics involves solving a regularized regression of size $n \times 2N$, which costs $O(nN^2)$, assuming for the moment $n > 2N$. Constructing knockoff variables, which for the genetic application at hand is best done using HMM knockoffs \citep{SetC17}, only costs $O(nN)$, and the filtering costs $O(N)$. The same computational costs apply to the multilayer knockoff filter, except the first two steps must be done $M$ times. In summary, the computational cost of the multilayer knockoff filter is equivalent to that of solving $M$ regularized regressions. Solving regularized regressions is a standard computational task, and optimized solvers exist for these purposes such as \texttt{glmnet} \citep{friedman2010regularization} or \texttt{SparSNP} \citep{abraham2012sparsnp}, the former being general-purpose and the latter being specialized for genetics applications. Indeed, the feasibility of knockoff analysis of GWAS data has been already illustrated in \cite{CetL16} and \cite{SetC17}. Moreover, computations of the knockoff statistics for the $M$ layers are independent and thus can be run in parallel, ensuring that the multilayer knockoff filter scales well to genome-wide data sets.

\subsection{Theoretical guarantees}

Our main theoretical result guarantees that the MKF$(c_{\text{kn}})+$ has multilayer FDR control at the target levels, and that MKF$(c_{\text{kn}})$ has multilayer mFDR control at the target levels, where $c_{\text{kn}} = 1.93$. 
\begin{theorem} \label{kn_multilayer_fdr_control}
	Suppose $\bm{W}^m$ obeys the sign-flip property for all $m$. Then, the MKF($c$)+ method satisfies
	\begin{equation*}
	\text{\em FDR}_m \leq \frac{c_{\text{\em kn}}}{c} q_m\quad \text{ for all } m,
	\end{equation*} 
	where $c_{\text{\em kn}} = 1.93$. The MKF($c$) method satisfies
	\begin{equation*}
	\text{\em mFDR}_m \leq \frac{c_{\text{\em kn}}}{c}q_m \quad \text{ for all } m.
	\end{equation*} 
	In particular, the MKF($c_{\text{\em kn}}$)+ and MKF($c_{\text{\em kn}}$) methods have multilayer FDR control and multilayer mFDR control, respectively, at the target levels.	
\end{theorem}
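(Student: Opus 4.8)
The plan is to prove the theorem one layer at a time: fix a layer $m$ and bound $\text{FDR}_m$ (or $\text{mFDR}_m$) by $\tfrac{c_{\text{kn}}}{c}q_m$, with no assumption on how the statistics across layers relate. The starting point is to write
\[
\text{FDP}_m(\mathcal S) = \frac{V_m}{|\mathcal S_m(\bm t^*)|} \cdot \frac{|\mathcal S_m(\bm t^*)|}{|\mathcal S_m(\bm t^*)|}
= \frac{V_m}{\widehat V_m(t_m^*)}\cdot\frac{\widehat V_m(t_m^*)}{|\mathcal S_m(\bm t^*)|}
\le q_m \cdot \frac{V_m}{\widehat V_m(t_m^*)},
\]
where the last inequality uses that $\bm t^*\in\mathcal T$, i.e. $\widehat{\text{FDP}}_m(\bm t^*)\le q_m$. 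So it suffices to show $\mathbb E\big[V_m/\widehat V_m(t_m^*)\big]\le c_{\text{kn}}/c$ for MKF$(c)+$, and the analogous bound with the denominator shifted for mFDR. Here $V_m=|\{g\in\mathcal H_0^m: W_g^m\ge t_m^*\}|$ and $\widehat V_m(t_m^*)=c\,(1+|\{g:W_g^m\le -t_m^*\}|)$, so after pulling out the $c$ the task reduces to bounding
\[
\mathbb E\left[\frac{|\{g\in\mathcal H_0^m: W_g^m\ge t_m^*\}|}{1+|\{g:W_g^m\le -t_m^*\}|}\right]\le c_{\text{kn}}.
\]

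The key structural fact is that $t_m^*$ is a \emph{stopping-time-like} threshold with respect to the ordered magnitudes $|W_g^m|$: because $\widehat V_m$ depends only on the coordinate $t_m$, the lower-left-corner lemma lets me express $t_m^*$ as a function that, restricted to layer $m$, is "predictable" given the sign-flip-invariant data. Concretely, condition on $|\bm W^m|$ and on all the data at the other layers. The sign-flip property says that conditional on this, the signs of $W_g^m$ for $g\in\mathcal H_0^m$ are i.i.d. fair coin flips. The threshold $t_m^*$ is still random — it depends on the signs at layer $m$ — but the event $\{t_m^*=t\}$ is determined by a structure that, combined with the other-layer data and $|\bm W^m|$, behaves like a backward filtration on the sorted $|W_g^m|$'s. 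This is exactly the setup handled by the supermartingale/optional-stopping argument used for the knockoff filter, except that here the selection set also has to clear the thresholds at the other $M-1$ layers; the point is that those other constraints only \emph{shrink} $\mathcal S_m(\bm t^*)$ relative to what the single-layer knockoff would select, and shrinking the selection set while keeping the same reflected-set estimate $\widehat V_m$ cannot hurt the FDR bound — a monotonicity observation I would state as a lemma.

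The main obstacle, and where the constant $c_{\text{kn}}=1.93$ enters, is that $t_m^*$ is \emph{not} the simple one-dimensional knockoff stopping time: its value is entangled with the signs at layer $m$ through the multilayer acceptability condition $\widehat{\text{FDP}}_{m'}(\bm t^*)\le q_{m'}$ for $m'\ne m$ (since $\mathcal S_{m'}(\bm t^*)$ depends on layer-$m$ statistics via $\mathcal S(\bm t^*)$). So the clean martingale optional-stopping bound $\mathbb E[\,\cdot\,]\le 1$ is unavailable, and I expect to need the uniform-over-all-stopping-times bound: for i.i.d. signs, $\sup_{\tau}\mathbb E\big[\frac{A(\tau)}{1+B(\tau)}\big]$ over an arbitrary data-dependent rule $\tau$, where $A(\tau),B(\tau)$ count positive/negative signs above the threshold, is bounded by a universal constant. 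The sharp such constant is obtained by a calculation on the "worst-case" sign-flip process — essentially controlling $\sup_k \frac{\text{Bin above }k}{1+\text{Bin below }k}$ — and this is where $1.93$ comes from (a bound of the flavour $\mathbb E[\sup_t (1+N^-(t))/(1+N^+(t))+\dots]$ type estimate from the knockoffs-for-arbitrary-stopping literature). The steps, in order, would be: (1) the per-layer reduction and the $q_m$ factoring above; (2) the monotonicity lemma that the extra layers only remove selections; (3) conditioning on $|\bm W^m|$ and the other layers, and identifying the conditional structure as an arbitrary stopping rule applied to i.i.d. fair coins; (4) invoking (or re-deriving) the universal constant $c_{\text{kn}}=1.93$ bounding $\mathbb E\big[|\{\text{null }g: \text{above}\}|/(1+|\{\text{below}\}|)\big]$ over all such rules; (5) the parallel computation for mFDR, where the $1+|\{\cdot\}|$ in the denominator is replaced by using $|\mathcal S_m|+q_m^{-1}$ and the extra $1$ is not needed — exactly mirroring the KF-versus-KF+ distinction reviewed earlier. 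Step (4) is the crux: everything else is the knockoff filter argument plus bookkeeping for the multilayer thresholds.
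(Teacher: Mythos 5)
Your skeleton matches the paper's: fix a layer, factor out $q_m$ using $\widehat{\text{FDP}}_m(\bm t^*)\le q_m$, replace the numerator by $V_m^+(t_m^*)=|\{g\in\mathcal H_0^m: W_g^m\ge t_m^*\}|$ and the denominator by $c(1+V_m^-(t_m^*))$, and reduce everything to a universal constant $c_{\text{kn}}=1.93$ bounding the worst case of the sign-flip process. Your step (4) is indeed the crux, and the bound you describe --- $\mathbb E\bigl[\sup_k |\{g\le k:\sigma_g=+1\}|/(1+|\{g\le k:\sigma_g=-1\}|)\bigr]\le 1.93$ for i.i.d.\ fair signs ordered by $|W_g^m|$ --- is exactly the paper's Lemma~\ref{bound_knockoffs}, proved there by an exponential-martingale bound on a random walk crossing a linear boundary, refined by enumerating the first $k_0$ steps to control overshoot. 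Your ``monotonicity lemma'' in step (2) is not needed as a separate statement: it is subsumed by the inequality $V_m(\mathcal S(\bm t^*))\le V_m^+(t_m^*)$, which you use anyway.

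There is, however, one step that fails as written: in step (3) you condition on $|\bm W^m|$ \emph{and on all the data at the other layers}, and assert that the sign-flip property then makes the null signs at layer $m$ i.i.d.\ fair coins. The sign-flip property only guarantees this conditional on $|\bm W^m|$; since the theorem deliberately places no restriction on between-layer dependence, the statistics $\bm W^{m'}$ for $m'\ne m$ may be arbitrarily entangled with the layer-$m$ signs (in the extreme, $\bm W^{m'}$ could encode $\text{sgn}(W_g^m)$ for the nulls), so conditioning on them can destroy the coin-flip structure entirely. The paper avoids this by reversing the order of operations: it first applies the \emph{pointwise, deterministic} inequality $V_m^+(t_m^*)/(1+V_m^-(t_m^*))\le\sup_{t_m}V_m^+(t_m)/(1+V_m^-(t_m))$, valid for every realization whatever $t_m^*$ turns out to be. After taking the supremum the random variable depends on layer $m$ alone, so one conditions only on $|\bm W^m|$ and the sign-flip property applies cleanly. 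Your own formulation of step (4) as a supremum over ``arbitrary data-dependent rules'' is equivalent to this expected-supremum bound (the argmax is such a rule), so the fix costs you nothing --- but the conditioning as you stated it is precisely the kind of cross-layer assumption the theorem is designed not to make, and should be removed rather than patched.
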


While deferring technical lemmas to the supplementary materials, we outline here the essential steps of the proof as they differ fundamentally from those of KF and p-filter.
 The proof of FDR control for the knockoff filter relies on a clever martingale argument that 
 depends heavily on the fact that the threshold $t$ is one-dimensional; 
  the cutoff $t^*$ can be viewed as a stopping time with respect to a certain stochastic process.
   Instead,  we are dealing with an $M$-dimensional threshold $\bm{t}^*$ whose entries depend on the values of 
   $\bm{W}^m$ for all $m$. As the knockoff statistics  have complex dependencies with each other, we cannot represent $t_m$ as a stopping time with respect to a process that depends only on $\bm{W}^m\!$. 
   The p-filter being a multilayer method,  the proof of  FDR control deals with the complex nature of the threshold $\bm{t}^*\!$. However, by defining the  p-values at each layer from the individual hypotheses  p-values  with a set rule, \cite{BR15} 
   have a good handle on the dependencies between $p_g^m$ across layers and use this crucially in the proof.
   In contrast, we intentionally avoid specifying the relations 
between $\bm{W}^m$ for different $m$.

\begin{proof}
We prove FDR control for MKF($c$)+; the result for MKF($c$) follows from a very similar argument. We start introducing the following quantities: 
\begin{equation*}
 V_m^{+}(t_m) = |\{g: W_g^m \geq t_m\} \cap \mathcal H_0^m|, \quad  V_m^{-}(t_m) = |\{g: W_g^m \leq -t_m\} \cap \mathcal H_0^m|.
\end{equation*}
Note that both $V_m^+(t_m)$ and  $V_m^{-}(t_m)$ are defined in terms of the $m$th layer only and  that $V_m^{+}(t_m) = V_m(\mathcal S(0, \dots, 0, t_m, 0, \dots, 0))$, while $ V_m^-(t_m)$ is similar to $\widehat V_m(t_m)$. It is easy to verify that these two quantities satisfy
\begin{equation*}
 V_m^+(t_m) \geq V_m(\mathcal S(\bm{t})), \quad \widehat V_m(t_m) \geq c(1 +  V_m^-(t_m)). 
\end{equation*}
Then, for each $m$, we have
\begin{equation}
\begin{split}
\text{FDR}_m = \mathbb E[\text{FDP}_m(\bm{t}^*)] &= \mathbb E\left[\frac{V_m(\mathcal S(\bm{t}^*))}{ |{\cal S}_m(\bm{t}^*)|}\right] \\
&= \mathbb E\left[\frac{V_m(\mathcal S(\bm{t}^*))}{ |{\cal S}_m(\bm{t}^*)|}I(\bm{t}^* < \bm \infty)\right] \\
&= \mathbb E\left[\frac{V_m(\mathcal S(\bm{t}^*))}{\widehat V_m(t^*_m)}\frac{\widehat V_m(t^*_m)}{|{\cal S}_m(\bm{t}^*)|}I(\bm{t}^* < \bm \infty)\right] \\
&\leq q_m \cdot \frac{1}{c}\mathbb E\left[\frac{ V_m^+(t^*_m)}{1 +  V_m^-(t^*_m)}\right] \\
&\leq q_m \cdot \frac{1}{c}\mathbb E\left[\sup_{t_m}\frac{ V_m^+(t_m)}{1 +  V_m^-(t_m)}\right].
\label{sufficient_cond_proof}
\end{split}
\end{equation}
Hence it suffices to show that
\begin{equation}
\mathbb E\left[\sup_{t_m \geq 0}\frac{ V_m^+(t_m)}{1 +  V_m^-(t_m)}\right] \leq c_{\text{kn}}.
\label{bounded_ratio_multilayer}
\end{equation}
The introduction of the supremum over $t_m$ in the last equation is a key step in the proof: it makes the random variables in the expectation (\ref{bounded_ratio_multilayer}) depend only on the knockoff statistics at the $m$th layer,  decoupling the problem across layers and allowing any type of dependence between statistics for different values of $m$.

Given that we are working with quantities defined in one layer only,
 we  can 
  drop the subscript $m$, and consider (\ref{bounded_ratio_multilayer}) as a statement about any set of knockoff statistics $(W_1, \dots, W_G)$ satisfying the sign-flip property. Hence, $W_g^m, V_m^+(t_m), V_m^-(t_m)$ become $W_g, V^+(t), V^-(t)$, respectively, and so on.
We are left with
\begin{equation}
\begin{split}
\mathbb E\left[\sup_{t \geq 0}\frac{V^+(t)}{1 +  V^{-}(t)}\right] &= \mathbb E\left[\sup_{t \geq 0}\frac{|\{g: W_g \geq t\} \cap \mathcal H_0|}{1 + |\{g: W_g \leq -t\} \cap \mathcal H_0|}\right] \\
&=\mathbb E\left[\mathbb E\left[\left.\sup_{t \geq 0}\frac{|\{g: W_g \geq t\} \cap \mathcal H_0|}{1 + |\{g: W_g \leq -t\} \cap \mathcal H_0|}\right| |\bm{W}|\right]\right].
\label{step_one}
\end{split}
\end{equation}

Now, consider ordering $\{W_g\}_{g \in \mathcal H_0}$ by magnitude: $|W_{(1)}| \geq \cdots \geq |W_{(G_0)}|$, where $G_0 = |\mathcal H_0|$. Let $\sigma_g = \text{sgn}(W_{(g)})$. By the sign-flip property, $\sigma_g$ are distributed as i.i.d. coin flips independently of $|\bm W|$. Moreover, note that the quantity inside the expectation is constant for all $t$ except $t \in \{|W_{(1)}|, \dots, |W_{(G)}|\}$, and for $t = |W_{(k)}|$, we have
\begin{equation*}
\begin{split}
\frac{|\{g: W_g \geq t\} \cap \mathcal H_0|}{1 + |\{g: W_g \leq -t\} \cap \mathcal H_0|} &=\frac{|\{g: W_g \geq |W_{(k)}|\} \cap \mathcal H_0|}{1 + |\{g: W_g \leq -|W_{(k)}|\} \cap \mathcal H_0|} \\
&=  \frac{|\{g \leq k: \sigma_g = +1\}|}{1 + |\{g \leq k: \sigma_g = -1\}|}.
\end{split}
\end{equation*}
Putting these pieces together, we have
\begin{equation*}
\begin{split}
\mathbb E\left[\sup_{t \geq 0}\frac{ V^+(t)}{1 +  V^{-}(t)}\right] &= \mathbb E\left[\max_{k \leq G_0}\frac{|\{g \leq k: \sigma_g = +1\}|}{1 + |\{g \leq k: \sigma_g = -1\}|}\right].
\label{step_two}
\end{split}
\end{equation*}
We can think of $\sigma_g$ as the increments of a simple symmetric random walk on $\mathbb Z$. The numerator above represents the number of steps to the right this walk takes, and the denominator the number of steps to the left. The quantity we are bounding is essentially the maximum over all steps in the walk of the ratio of steps right to steps left, averaged over all realizations of the random walk. Let $S_{k} = 
|\{g \leq k: \sigma_g = +1\}|$ be the number of steps right and $k - S_{k} = |\{g \leq k: \sigma_g = -1\}|$ the number of steps left. It suffices to show that
\begin{equation}
\mathbb E\left[\sup_{k\geq 0}\frac{S_k}{1 + k - S_k}\right] \leq 1.93.
\label{sufficient}
\end{equation}
This is the content of Lemma \ref{bound_knockoffs}, which is proved in the supplementary materials.
\end{proof}

\subsection{Relations with other methods} \label{sec:discussion}

\subsubsection*{Comparison to other structured testing methods}

When scientific hypotheses have a complex structure, even formulating inferential guarantees is nontrivial; the multilayer hypothesis testing approach proposed in \cite{BR15} and used in our work is one of several options. Approaches to testing hypotheses at multiple levels vary in two key features: the way the space of hypotheses is traversed and the way families to be tested are defined. We illustrate the different approaches using the two-layer setup considered in the introduction. If the hypotheses have a nested (i.e. tree) structure, then it is common to traverse them hierarchically: one starts by testing groups and then proceeds to test individual hypotheses within rejected groups. The procedures described in \cite{Y08,BB14,BetS17b} follow this hierarchical approach. An alternative to hierarchical hypothesis traversal is to consider the multiple testing problem from the point of view of individual-level hypotheses: rejecting a set of individual-level hypotheses induces the rejections of the groups that contain them at each layer of interest. This is the approach taken by the p-filter \citep{BR15}. By testing hypotheses only if their corresponding groups were rejected, hierarchical approaches have the advantage of a smaller multiplicity burden. On the other hand, defining selections at each layer via the individual-level hypotheses has the advantage that it applies equally well to non-hierarchical ways of grouping hypotheses. The second dichotomy in based on how one defines families to be tested: either each group is a family of its own, or each resolution is a family of its own. For instance, the former corresponds to SNPs being tested against other SNPs in the same gene, and the latter corresponds to testing all SNPs against each other as one family. The methods of \cite{BB14} and \cite{BetS17b} take the former approach, while those of \cite{Y08} and \cite{BR15} take the latter. Both choices can be meaningful, depending on the application. In this work, we define discoveries using individual-level hypotheses as this marries well with the multiple regression framework and does not limit us to nested groups.
We  treat each resolution (instead of each group) as a family because discoveries are often reported by type (e.g. as a list of SNPs or a list of genes), so FDR guarantees for each type are appropriate. These two choices align our testing framework with that of the p-filter.

\subsubsection*{The multiplier $c_{\text{kn}}$: its origins and impact} \label{sec:comparison_pfilter}
In addition to  using knockoff statistics instead of p-values, the multilayer knockoff filter differs from the p-filter \citep{BR15} in that it does not start from a set of individual-level statistics and construct group-level ones using specific functions of these:  instead the statistics $\bm{W}^m$ are constructed starting directly from the original data. This decision involves a trade-off: we get a more general procedure (and theoretical result) at the cost of a looser bound. 

By making no assumptions on the between-layer dependencies of $W_g^m\!$, the multilayer knockoff filter allows extra flexibility that can translate into greater power. For example, there might be different sources of prior information at the SNP and the gene levels: the analyst can use each source of information at its respective layer to define more powerful knockoff statistics (based on specific penalties)  without worrying about coordinating these statistics in any way. Even if  the same penalization is used in all layers, there is a potential power increase due to the fact that we can use group knockoff variables rather than individual ones. This advantage is especially pronounced if none of the layers consists of singletons.

The price we pay for this generality is the multiplier  $c_{\text{kn}} = 1.93$ in Theorem 1. To understand its effect,  note that in Procedures 4 and 5, by analogy with KF,  the natural choice is $c=1$ and define $\text{MKF} = \text{MKF}(1)$ and $\text{MKF}+ = \text{MKF}(1)+$.
Then Theorem 1 states that MKF+ (MKF) has an FDR (mFDR)  that is bounded by $c_{\text{kn}}q_m$.
Compare this to the theoretical result for the p-filter, which is shown to have exact multilayer FDR control: by explicitly leveraging the joint distribution of $p_g^m$, \cite{BR15} get a handle on the complicated thresholds $t^*_m$ and get a tight result. Meanwhile, our constant multiplier comes from the introduction of the supremum in (\ref{sufficient_cond_proof}): this amounts to a worst-case analysis, which for most constructions of $W_g^m$ will not be tight. 

Indeed, across all our simulations in Section \ref{sec:numerical}, we find that MKF+ has multilayer FDR control at the target levels (i.e. the constant is not necessary). Hence, we recommend that practitioners apply the MKF or MKF+ methods, without worrying about the correction constants. We view our theoretical result as an assurance that even in the worst case, the FDRs of MKF at each layer will not be much higher than their nominal levels $q_m$.

\subsubsection*{Generalized p-filter}

On the heels of the above discussion, we  
 define the \textit{generalized p-filter}, a procedure that is the same as the p-filter, except that the p-values $p_g^m$ are any valid p-values for the hypotheses in layer $m$.
\begin{theorem} \label{th:pf_theorem}
	Suppose for each $m$, the null p-values among $\{p_g^m\}$ are independent and uniformly distributed. Then, the generalized p-filter satisfies
	\begin{equation*}
	\text{\em mFDR}_m \leq c_{\text{\em pf}}(G_m) \cdot q_m \quad \text{ for all } m,
	\end{equation*} 
	where $c_{\text{\em pf}}(G) = 1 + \exp\left(G^{-1/2} + \frac12 G^{-1}\right)0.42 + eG^{-1/4}$.
\end{theorem}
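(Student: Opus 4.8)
The plan is to imitate the proof of Theorem~\ref{kn_multilayer_fdr_control}, with the exact uniformity of null p-values playing the role of the sign-flip property. First I would invoke the p-filter machinery of \cite{BR15} to collapse the $M$ layers to one. Since $\bm t^*$ is the lower-left corner of $\mathcal T(q_1,\dots,q_M)$ we have $\widehat{\text{FDP}}_m(\bm t^*)\le q_m$, and for the p-filter the numerator of $\widehat{\text{FDP}}_m$ depends on $\bm t$ only through $t_m$ (it is $G_m t_m$). Running the chain of inequalities in (\ref{sufficient_cond_proof}) with this $\widehat V_m$, together with $V_m(\mathcal S(\bm t^*))\le|\{g\in\mathcal H_0^m:\,p_g^m\le t_m^*\}|$ and $|\mathcal S_m(\bm t^*)|+q_m^{-1}\ge(G_m t_m^*+1)/q_m$, gives
\begin{equation*}
\text{mFDR}_m\ \le\ q_m\,\mathbb E\!\left[\sup_{t\in[0,1]}\frac{N_m(t)}{G_m t+1}\right],\qquad N_m(t):=|\{g\in\mathcal H_0^m:\,p_g^m\le t\}|.
\end{equation*}
As in the knockoff proof, the supremum is the crucial device: the integrand now depends on the layer-$m$ p-values alone, so arbitrary between-layer dependence is harmless, and the theorem reduces to the single-layer assertion that $\mathbb E[\sup_{t}N(t)/(Gt+1)]\le c_{\text{pf}}(G)$ whenever the $n=|\mathcal H_0|\le G$ null p-values are i.i.d.\ uniform, $N(t)$ counting those at most $t$; the $+1$ (inherited from the $+q_m^{-1}$ in the definition of $\text{mFDR}$) is essential, as without it the supremum has infinite mean near $t=0$.

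For the single layer, order the null p-values $U_{(1)}\le\cdots\le U_{(n)}$; since $N$ jumps only at the $U_{(k)}$, where it equals $k$, while $Gt+1$ is increasing, $\sup_t N(t)/(Gt+1)=\max_{1\le k\le n}k/(GU_{(k)}+1)=:M$, so it suffices to show $\mathbb E[M]\le c_{\text{pf}}(G)$. I would split $\mathbb E[M]\le1+\mathbb E[(M-1)_+]=1+\int_1^\infty\mathbb P(M>a)\,da$, the leading $1$ matching the leading term of $c_{\text{pf}}(G)$, and reduce the rest to a crossing bound for $\mathbb P(M>a)$. Using $U_{(k)}\overset{d}{=}\Gamma_k/\Gamma_{n+1}$ with $\Gamma_k=E_1+\cdots+E_k$ and $E_i$ i.i.d.\ $\mathrm{Exp}(1)$, the event $\{M>a\}$ becomes $\{\exists k:\ \Gamma_k<(k-a)\Gamma_{n+1}/(aG)\}$, which---after replacing $\Gamma_{n+1}$ by its mean $n+1$, an approximation whose error I return to below---is a level-crossing event $\{\exists k:\ D_k>bk+1\}$ for the mean-zero random walk $D_k=k-\Gamma_k=\sum_{i\le k}(1-E_i)$, with positive slope $b=(a-1)/a\in(0,1)$.

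To control that crossing probability I would use an exponential martingale and Ville's maximal inequality. For $\theta>0$ the process $L_k=\exp(\theta D_k-\psi(\theta)k)$, with $\psi(\theta)=\log\mathbb E[e^{\theta(1-E_1)}]=\theta-\log(1+\theta)$, is a nonnegative martingale, $L_0=1$. Choosing $\theta=\theta(a)>0$ so that $\psi(\theta)=\theta b$, equivalently $\log(1+\theta)=\theta/a$, makes $L_k=e^{\theta}$ on the barrier $D_k=bk+1$ and hence $L_k\ge e^{\theta}$ on $\{\exists k:\ D_k>bk+1\}$; Ville's inequality then gives $\mathbb P(M>a)\le e^{-\theta(a)}$. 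One checks $\theta(a)>2(a-1)$ for every $a>1$, so already the crude estimate $\int_1^\infty e^{-2(a-1)}\,da=\tfrac12$ bounds the tail integral; substituting $a=\theta/\log(1+\theta)$ and integrating in $\theta$ sharpens this to
\begin{equation*}
\int_1^\infty\mathbb P(M>a)\,da\ \le\ \int_0^\infty e^{-\theta}\,\frac{(1+\theta)\log(1+\theta)-\theta}{(1+\theta)\,(\log(1+\theta))^2}\,d\theta\ \le\ 0.42,
\end{equation*}
and the numerical value of this integral is the source of the constant $0.42$ in $c_{\text{pf}}$. The far tail ($a\gg1$, where $\theta(a)\sim a\log a$ and $\mathbb P(M>a)$ decays like $a^{-a}$) is negligible and is absorbed.

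The step I expect to be hardest is passing from this clean Poissonized computation to the explicit finite-$G$ corrections $\exp(G^{-1/2}+\tfrac12 G^{-1})$ and $eG^{-1/4}$. Two sources of loss must be tracked. The first is the replacement of $\Gamma_{n+1}$ by $n+1$: since $\Gamma_{n+1}/(n+1)=1+O(G^{-1/2})$ but fluctuates, one must control it with a concentration bound (or by conditioning on $\Gamma_{n+1}$ and integrating), and folding the resulting slack into the martingale exponent over the relevant indices $k$ is what produces the multiplicative factor $\exp(G^{-1/2}+\tfrac12 G^{-1})$. The second is the regime where $k$ is comparable to $n$: there the walk has only $\sim G$ steps and the supermartingale bound degrades, so I would handle those $O(G)$ indices by a separate union bound with a $\mathrm{Poisson}$-type tail estimate, which contributes the additive $eG^{-1/4}$. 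Combining these pieces with the leading $1$ and the $\le0.42$ tail integral, and verifying the total is at most $c_{\text{pf}}(G)$, is delicate but routine; all the conceptual content lies in the decoupling supremum of the first step and the Ville/exponential-martingale argument.
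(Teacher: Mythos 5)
Your multilayer reduction and your computation of the leading $1+0.42$ closely track the paper's actual argument: the decoupling supremum, the decomposition $\mathbb E[M]\le 1+\int_1^\infty\mathbb P(M>a)\,da$, and the exponential-martingale/Ville crossing bound are all there, and your $\theta(a)$ defined by $\log(1+\theta)=\theta/a$ is the paper's $\gamma_x$ (root of $e^\gamma=1+\gamma x$) after the substitution $\theta=a\gamma$, so $e^{-\theta(a)}=e^{-a\gamma_a}$ is exactly the paper's tail bound and yields the same numerical constant $0.42$. The gap is in the step you flag as hardest, and it is not "routine": the two correction terms $\exp(G^{-1/2}+\tfrac12 G^{-1})$ and $eG^{-1/4}$ come from two specific devices that your sketch neither supplies nor correctly locates, and without them you cannot arrive at the particular $c_{\text{pf}}(G)$ asserted in the theorem.

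Concretely, the paper views $N_t=nF_n(t)$ as a rate-$n$ Poisson process conditioned on $N_1=n$ and splits the \emph{time} axis at $t_0=n^{-1/2}$, not the index range near $k\sim n$. On $[0,t_0]$ it changes measure to the unconditioned Poisson process and shows the Radon--Nikodym derivative satisfies $d\mu/d\nu\le\exp(n^{-1/2}+\tfrac12 n^{-1})$ almost surely, via an explicit computation (a Markov-property reduction to one-dimensional marginals followed by maximizing a quadratic in the count $y$); this essential-supremum bound is the source of the multiplicative factor, and it holds only on the short initial window, which is precisely why the split is needed. On $[t_0,1]$ the paper exploits the fact that $N_t/(nt)$ is a \emph{backwards} martingale under the conditional law, so $\exp(\alpha N_t/(nt))$ is a backwards submartingale; Doob's maximal inequality with $\alpha=n^{1/4}$ gives $\mathbb P[\sup_{t\ge t_0}N_t/(1+nt)\ge x]\le\exp(1-n^{1/4}(x-1))$, whose tail integral is exactly $en^{-1/4}$. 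Your proposed substitutes --- a concentration bound on $\Gamma_{n+1}$ ``folded into the martingale exponent,'' and a union bound with Poisson tails over $O(G)$ late indices --- are genuinely different mechanisms whose error terms would not obviously match the stated constants, so as written the proof of the theorem with this specific $c_{\text{pf}}(G)$ is incomplete.
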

\begin{remark}
	Unlike for the multilayer knockoff filter, note that we do not have one universal constant multiplier $c_{\text{\em pf}}$. Instead, we get a bound $c_{\text{\em pf}}(G_m)$ that depends on the number of groups at each layer. However, strong numerical evidence suggests that in fact we can replace $c_{\text{\em pf}}(G_m)$ in the theorem with its limiting value 1.42. See Remark \ref{monotonicity} in the supplementary materials for additional comments. Moreover, the assumption of independent null p-values can potentially be relaxed to a PRDS assumption, but we have not explored this avenue. 
\end{remark}
\begin{proof}
	By similar logic as in the proof of Theorem \ref{kn_multilayer_fdr_control}, it suffices to verify the sufficient condition
	\begin{equation*}
	\mathbb E\left[\sup_{t_m \in [0,1 ]}\frac{|\{g \in \mathcal H_0^m: p_g^m \leq t_m\}|}{1 + G_m t_m}\right]  \leq c_{\text{pf}}(G_m). 
	\end{equation*}
	Again, note that the problem decouples across layers and we may drop the subscript $m$. Now, let $p_1, \dots, p_G$ be a sequence of i.i.d. uniform random variables, and let $F_G(t)$ be their empirical CDF. Then, it suffices to show that 
	\begin{equation*}
	\mathbb E\left[\sup_{t \in [0, 1]}\frac{F_G(t)}{G^{-1} + t}\right] \leq  1 + \exp\left(G^{-1/2} + \frac12 G^{-1}\right)0.42 + eG^{-1/4}.
	\end{equation*}
	This is the content of Lemma \ref{bound_pfilter}, which is proved in the supplementary materials.
\end{proof}

Recently \citep{RetJ17}, the p-filter methodology has been generalized to allow for more general constructions of group p-values, but must use \textit{reshaping} (a generalization of the correction proposed by \cite{BY01}) to guarantee FDR control. Hence, both generalizations must pay for arbitrary between-layer dependencies; our method with the constant $c_{\text{pf}}$ and the p-filter with reshaping.

\subsubsection*{Power of multilayer methods} \label{sec:power_fdr}

By construction, the multilayer algorithms we propose are at most as powerful as their single-layer versions. For our purposes, groups of variables function as inferential units, and not as prior information used to boost power (e.g. as in \cite{LB16}), although there is no reason groups cannot serve both functions within our framework. So while our methods are designed  to provide more FDR guarantees, it is relevant to evaluate the cost in terms of power of these additional guarantees.


Consider controlling FDR for individual variables and for groups, compared to just controlling FDR for individual variables. When adding a group FDR guarantee, power loss depends on the group signal strength, the power of group statistics, and the desired group FDR control level $q_{\text{grp}}$. Power will decrease to the extent that the signal strength at the group layer is weaker than the signal strength at the individual layer. Assuming for simplicity that non-null variables have comparable effect sizes, group signal is weak when saturation is low (recall from the introduction that saturation is the average number of non-null variables per non-null group). Also, if the sizes of the groups vary, then group signal will be weaker if the non-null hypotheses are buried inside very large groups. Even if group signal is not too weak, the power of multilayer procedures will depend on the way group statistics are chosen. In particular, power will be better if Simes (or Simes-like) statistics are used if groups have a small number of strong signals, and if Fisher (or Fisher-like) statistics are used in the case of weak distributed effects. Finally, it is clear that lowering $q_{\text{grp}}$ will lower power.

As a final note, all of the multilayer methods discussed so far have a feature that might unnecessarily limit their power. This feature is the definition of $\widehat V_m = \widehat V_m(t_m)$ in terms of only the threshold $t_m$. Since the selection set $ {\cal S}(\bm{t})$ is defined with respect to the $M$-dimensional vector of thresholds $\bm{t}$, a definition of $\widehat V_m$ depending on this entire vector would be a better estimate of $V_m(\mathcal S(\bm{t}))$. In some situations, the procedures proposed might overestimate $V_m(\mathcal S(\bm{t}))$ and thus pay a price in power. For a graphical illustration of this phenomenon, we revisit Figure \ref{fig:mkn_illustration}. Note that we are using the number of points in the entire shaded left half-plane to estimate the number of false positives in just the shaded upper right quadrant. Unfortunately, this issue is not very easy to resolve. One challenge is that if we allow $\widehat V_m$ to depend on the entire vector $\bm{t}$, then the definition of $\bm{t}^*$ would be complicated by the fact that the lower left-hand corner property would no longer hold. Another challenge is that the dependencies between statistics across layers make it hard to come up with a better and yet tractable estimate of $V_m(\mathcal S(\bm t))$. Despite this flaw, the multilayer knockoff filter (and the p-filter) enjoys very similar power to its single-layer counterpart, as we shall see in the next section.

\section{Simulations} \label{sec:numerical}

We rely on simulations to explore the FDR control and the power of the multilayer knockoff filter and the generalized p-filter across a range of scenarios, designed to capture the variability described in the previous section. All code is available at \url{http://web.stanford.edu/~ekatsevi/software.html}. 
\subsection{Performance of the multilayer knockoff filter} \label{sec:mkn_simulations}

\paragraph{Simulation setup}

We simulate data  from the  linear model with $n>N$. This also allows us to calculate p-values for the null hypotheses $\beta_j=0$ and plug these into BH and p-filter; these two methods, in addition to KF, serve as points of comparison to MKF. 

We simulate
\begin{equation*}
\bm{y} = \bm{X}\bm{\beta} + \bm{\eps}, \quad \bm{\eps} \sim \mathcal N(\bm{0}, \bm{I}),
\end{equation*}
where $\bm{X}  \in \mathbb R^{n \times N}$, with $n = 4500$ observations on $N = 2000$ predictors. $\bm{X}\!$ is constructed by sampling each row independently from $N(\bm{0}, \bm{\Sigma}_\rho)$, where $(\bm\Sigma_\rho)_{ij} = \rho^{|i-j|}$ is the covariance matrix of an AR(1) process with correlation $\rho$. There are $M = 2$ layers: one comprising individual variables and one with $G = 200$ groups, each of  10 variables. The vector $\bm{\beta}$ has 75 nonzero entries. The indices of the non-null elements are determined by firstly selecting  $k$ groups uniformly at random, and then choosing, again uniformly at random, 75 elements of these  $k$ groups. Here, $k$ controls the strength of the group signal; we considered three values: low saturation ($k = 40$), medium saturation ($k = 20$), and high saturation ($k = 10$). We generated these three sparsity patterns of $\bm \beta$ once and fixed them across all simulations; see Figure \ref{fig:three_saturations}. In all cases, the nonzero entries of $\bm{\beta}$ are all  equal, with a magnitude that satisfies
\begin{equation*}
\text{SNR} = \frac{\norm{\bm{X}\bm{\beta}}^2}{n}
\end{equation*}
for a given SNR value. For each saturation setting, we vary $\rho \in \{0.1, 0.3, \dots, 0.9\}$ while keeping SNR fixed at 0.5, and vary $\text{SNR} \in \{0, 0.1, \dots, 0.5\}$ while keeping $\rho$ fixed at 0.3. Across all experiments, we used nominal FDR levels $q_{\text{ind}} = q_{\text{grp}} = 0.2$. 

This choice of simulation parameters captures some of the features of  genetic data: the AR(1) process for the rows of $\bm{X}\!$ is a first approximation  for the local spatial correlations of genotypes and the signal is relatively sparse, as we would expect in GWAS. A notable  difference between our simulations and  common genetic data is the scale: a typical GWAS involves $N\approx 1,000,000$ variables. Previous studies \citep{CetL16,SetC17} have already demonstrated the feasibility of knockoffs for datasets of this scale and the MKF does not appreciably differ in computational requirements. However, given our interest in exploring a variety of sparsity and saturation regimes, we found it convenient to rely on  a smaller scale. Moreover, working in a regime where $n>2N$ allows us to leverage the fixed design construction of knockoff variables, which does not require knowledge of the distribution of $\bm X$. 

\begin{figure}[t!]
	\centering
	\includegraphics[width = \textwidth]{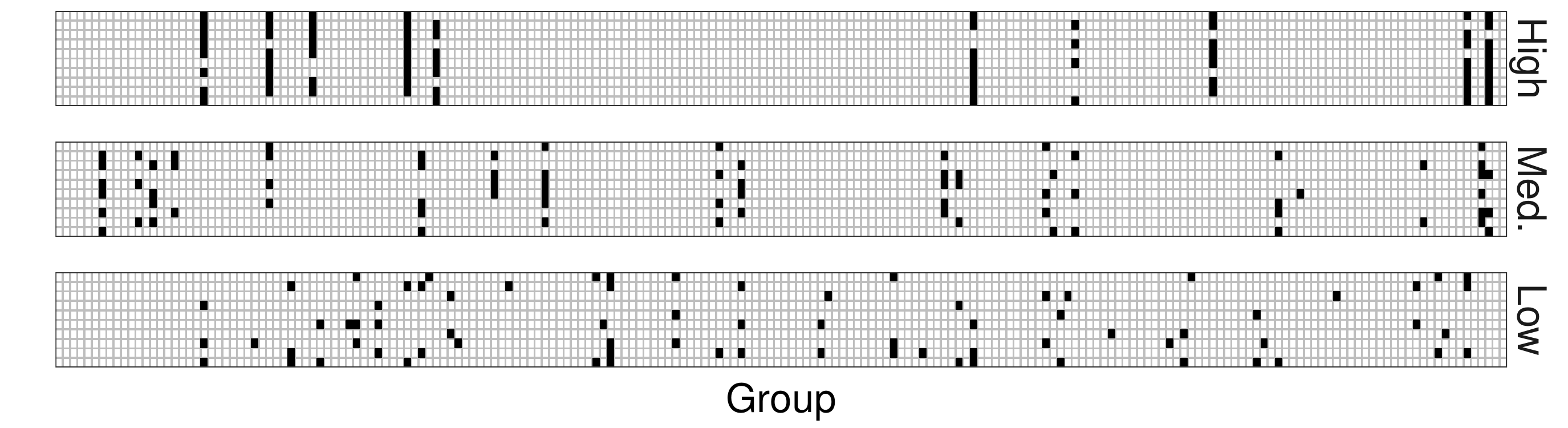}
	\caption{Simulated sparsity patterns for the three saturation regimes: each square corresponds to one variable, and each column to one group. Non null variables are indicated with filled squares.}
	\label{fig:three_saturations}
\end{figure}

\paragraph{Methods compared}
We compare the following four methods on this simulated data:
\begin{enumerate}
	\item[(a)] KF+ with fixed design knockoffs, lasso-based variable importance statistics combined using the signed-max function, targeting $q_{\text{ind}}$.
	\item[(b)] MKF+ with fixed design group knockoffs, ``Simes-like" group importance statistics based on the penalty $\ell_g^m(\bm{u}) = \norm{\bm{u}}_1$, combined using the signed-max function, targeting $q_{\text{ind}}$ and $q_{\text{grp}}$. We find that this choice of penalty has better power across a range of saturation levels than the group lasso based construction of \cite{DB16}. 
	\item[(c)] Benjamini Hochberg procedure (BH) on the p-values based on t-statistics from linear regression, targeting $q_{\text{ind}}$.
	\item[(d)] p-filter (PF) on the same set of p-values, targeting $q_{\text{ind}}$ and $q_{\text{grp}}$.
\end{enumerate}
Note that the first two methods are knockoff-based and the last two are p-value based, and that methods (a) and (c) target only the FDR at the individual layer while methods (b) and (d) target the FDR at both layers.

\paragraph{Results}

Figure \ref{fig:all_sim} illustrates our findings. 
First, consider the FDR of the four methods. The multilayer knockoff filter achieves multilayer FDR control across all parameter settings;  the constant $c_{\text{kn}} = 1.93$ from our proof does not appear to play a significant role in practice. The p-filter also has multilayer FDR control, even though the PRDS assumption is not satisfied by the two-sided p-values we are using. On the other hand, the knockoff filter and BH both violate FDR control  at the group layer as the saturation level and power increase. 

We also note that both the multilayer knockoff filter and regular knockoff filter have, on average, a realized FDP that is smaller than the target FDR. This is partly because we use the ``knockoffs+" version of these methods, which is conservative when the power is low. In addition, we find that the multilayer knockoff filter is conservative at the individual layer even in high-power situations if the saturation is high. This is a consequence of our construction of $\widehat V_m$, an estimate of the number of false discoveries that, as we have discussed, tends to be larger than needed. We see similar behavior for the p-filter, since it has an analogous construction of $\widehat V_m$. 

Next, we compare the power of the four methods. As expected, the power of all methods improves with  SNR  and degrades with $\rho$.  We find that the knockoff-based approaches consistently outperform the p-value based approaches, with higher power despite having lower FDRs and the  gap  widening as saturation increases. This power difference is likely caused by the ability of the knockoff-based approaches to leverage the sparsity of the problem to construct more powerful test statistics for each variable. Finally, we compare the power of the multilayer knockoff filter to that of the regular knockoff filter: in most cases, the multilayer knockoff filter loses little or no power, despite providing an additional FDR guarantee. This holds even in the low saturation setting, where the groups are not very informative for the signal.

\begin{figure}[t!]
\begin{center}
	\includegraphics[width = \textwidth]{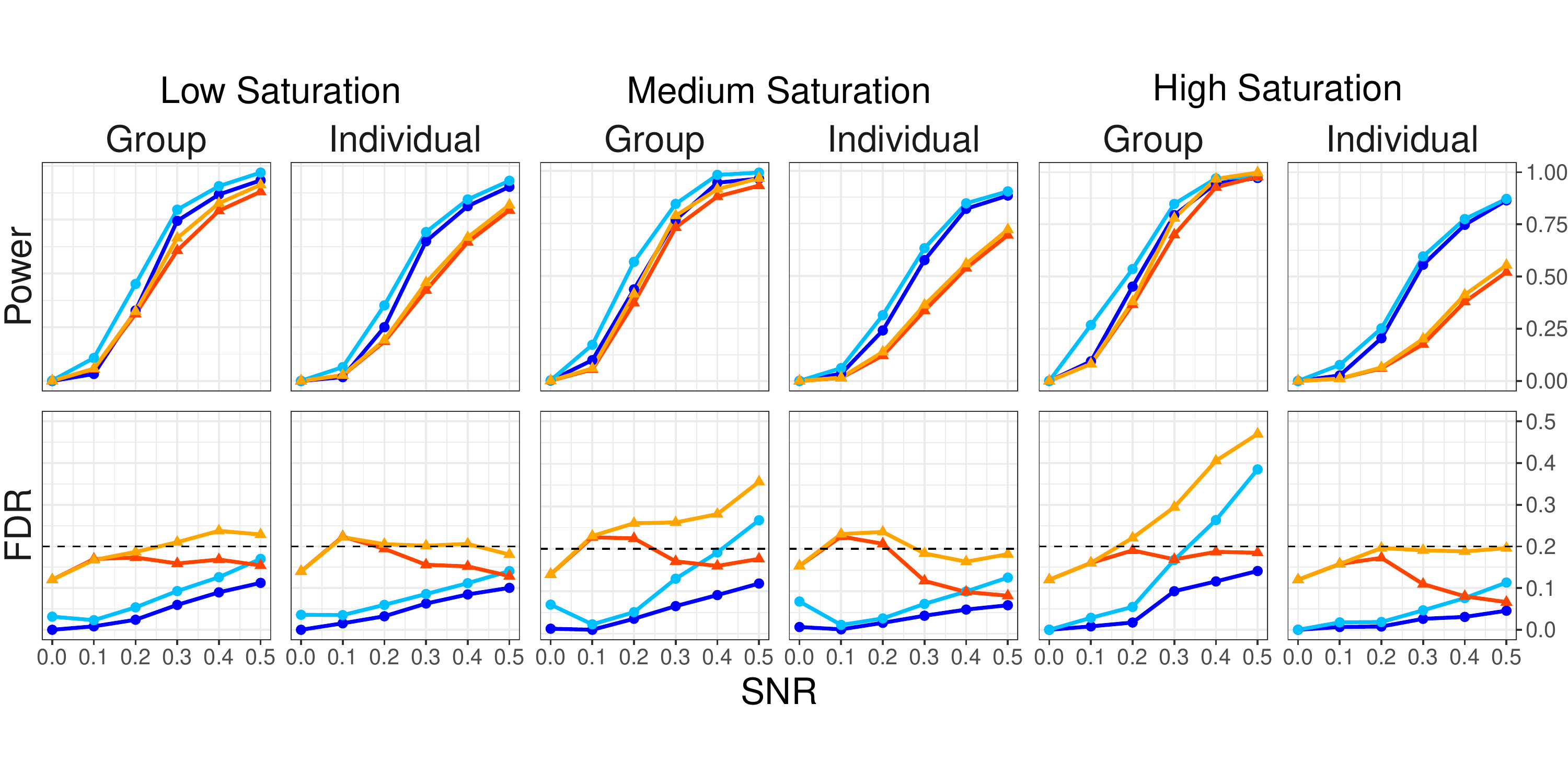}
	\vspace{-0.2in}
	
	\includegraphics[width =\textwidth]{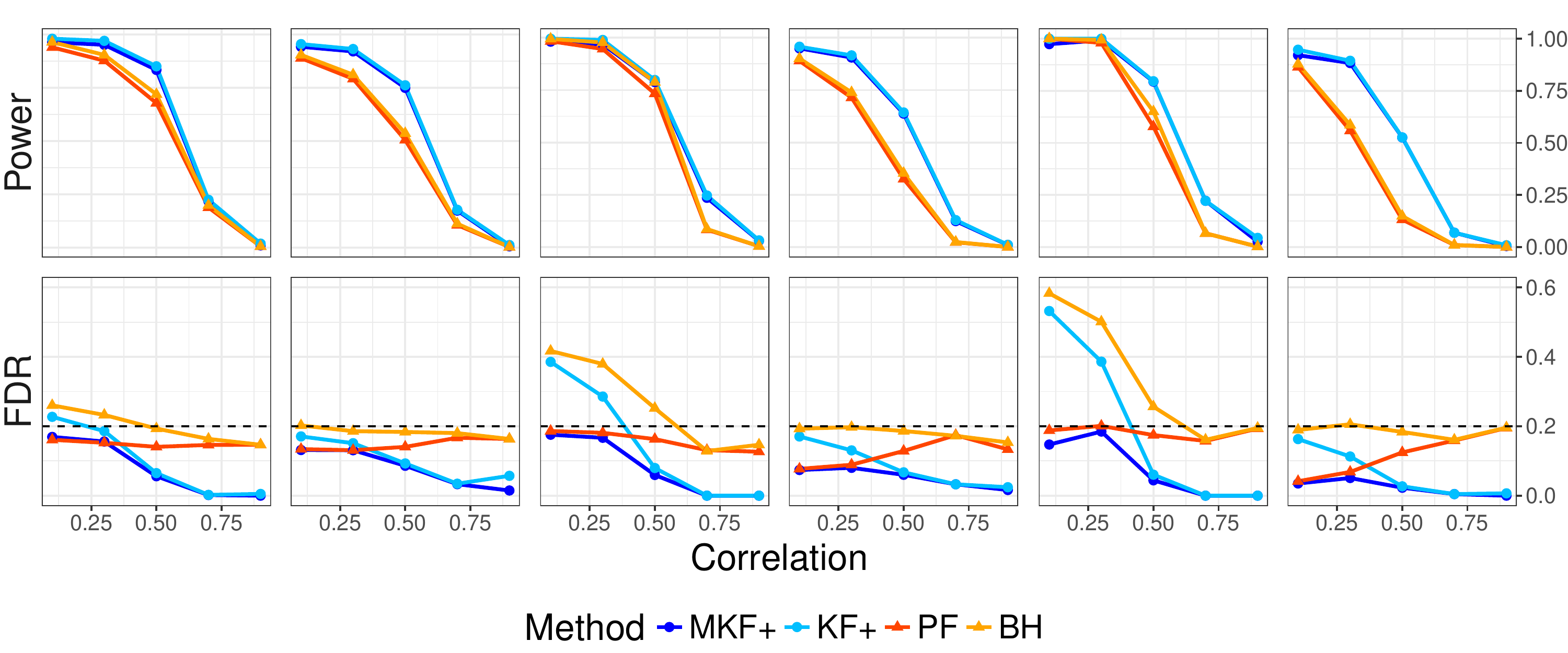}
	\end{center}
	\caption{Simulation results. From left to right, the saturation regime changes. The top panel varies signal-to-noise ratio while fixing $\rho = 0.3$. The bottom panel varies $\rho$ while fixing $\text{SNR} = 0.5$. Each point represents the average of 50 experiments.}
	\label{fig:all_sim}
\end{figure}

\subsection{Performance of the generalized p-filter}

We explore the possible advantages of the generalized p-filter in a setup when signals are expected to be weak and common  within non-null groups, so one would want to define group p-values via the Fisher test instead of the Simes test. We consider  two partitions of interest, both with   groups of size 10 (thus no singleton layer). 
A situation similar to this might arise when scientists are interested in determining which functional genomic segments are associated with a trait. There exist several algorithms to split the genome into functional blocks (e.g. ChromHMM by \cite{EK12}), and segments in each of these can be partially overlapping.

\paragraph{Simulation setup}
We simulated $N = 2000$ hypotheses, with $M = 2$ layers. Each layer had 200 groups, each of size 10. The groups in the second layer were offset from the those in the first layer by 5. Hence, the groups for layer one are $\{1, \dots, 10\}, \{11, \dots, 20\}, \dots$, while the groups for layer two are $\{6, \dots, 15\}, \{16, \dots, 25\}, \dots$. The nonzero entries of $\beta$ are $\{1, \dots, 200\}$. Hence, this is a ``fully saturated" configuration. 
We generate $X_j \overset{\text{ind}}\sim \mathcal N(\mu_j,1)$, where $\mu_j = 0$ for null $j$ and $\mu_j = \mu$ for non-null $j$. We then derive two-sided p-values based on the z test. In this context, we define $\text{SNR} = \norm{\mu}^2/N$. The SNR varied in the range $\{0, 0.1, \dots, 0.5\}$ and we targeted $q_{\text{ind}} = q_{\text{grp}} = 0.2$.

\paragraph{Methods compared}

\begin{itemize}
	\item[(a)] The regular p-filter, which is based on the Simes test.
	\item[(b)] The generalized p-filter with p-values based on the Fisher test.
\end{itemize}

\paragraph{Results}
Figure \ref{fig:fisher_v_simes}  shows how both versions of the generalized p-filter have multilayer FDR control, with the Fisher version  being more conservative. As with the multilayer knockoff filter, we see that the extra theoretical multiplicative factor is not necessary (at least in this simulation). In this case, Fisher has substantially higher power than Simes due to the weak distributed effects in each group. 

\begin{figure}[h]
	\centering
	\includegraphics[width = 0.7\textwidth]{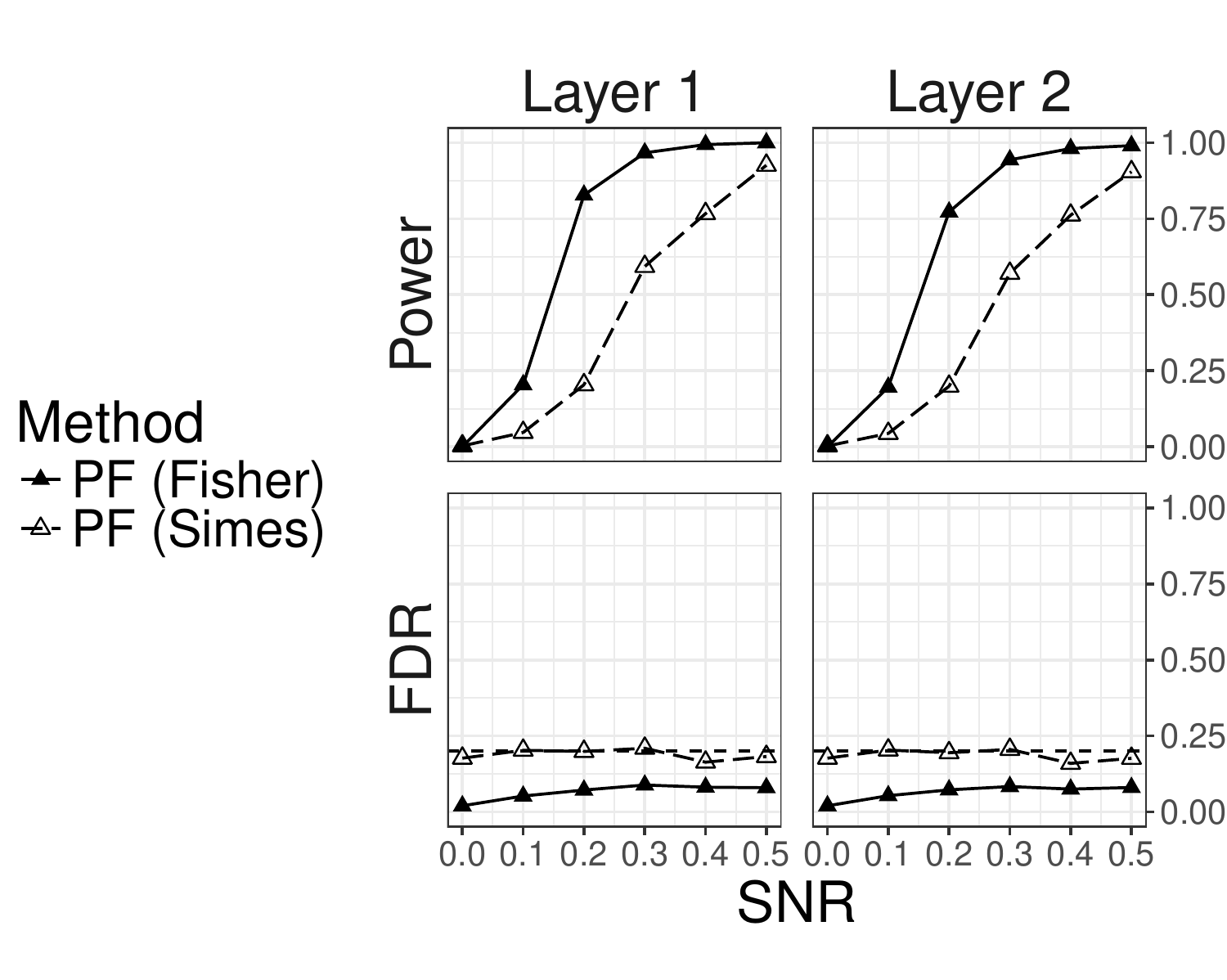}
	\caption{Performance of the generalized p-filter: comparison of Fisher and Simes combination rules.}
	\label{fig:fisher_v_simes}
\end{figure}

\section{Case study: variants and genes influencing HDL} \label{sec:real_data}

To understand which genes are involved in determining cholesterol levels and which genetic variants have an impact on its value, \cite{SetF14} carried out exome resequencing of 17 genetic loci  identified in previous GWAS studies as linked to metabolic traits in about 5000 subjects. The original analysis is based on marginal tests for common variants and burden tests for the cumulative effect of the rare variants in a gene \citep{LL08,  WetL11}. Furthermore, to account for linkage disequilibrium and estimate the number of variants that influence cholesterol in each location, \cite{SetF14} uses model selection based on BIC. The original analysis, therefore, reports findings at the gene level and variant level, but these findings derive from multiple separate analyses and lack coordination. Here we deploy MKF to leverage multiple regression models, obtaining a coherent set of findings  at both the variant and gene level, with approximate multilayer FDR control.


\paragraph{Data}  The resequencing targeted the coding portion of 17 genetic loci, distributed over 10 chromosomes and  containing  79 genes. This resulted in the identification of a total of 1304 variants. We preprocessed the data as in \cite{SS16}, who re-analyzed the data in a Bayesian framework. In particular, we removed variants with minor allele counts below a threshold, and pruned the set of polymorphisms to assure that the empirical correlation between any pair is of at most 0.3 (this is a necessary step in multiple regression analysis to avoid collinearity; see \cite{BetS17a} and \cite{CetL16}).
After preprocessing, the data contained 5335 individuals and 768 variants.
Since the study design was exome resequencing, every variant could be assigned to a gene. A special case is that  of 18  SNPs that were typed in a  previous study of these subjects and that were included in the final analysis as indicators of the original association signal; 12 of these  are located in coding regions, but 6 are not.

While \cite{SetF14} studies the genetic basis of several metabolic traits, we focus our analysis here on HDL cholesterol. From the original measurement we regressed out  the effects of sex, age, and the first five principal components of genomewide genotypes, representing population structure.

Note that the small size of  the data set is due to the design of the study, which relies on targeted resequencing  as opposed to an exome-wide or genome-wide resequencing. Working with a small set of variables, that have been quite extensively analyzed already, allows us to better evaluate the MKF results, which is useful in a first application.  The analysis with MKF of a new exome-wide data set comprising tens of thousands of individuals and hundreds of thousands of variants is ongoing. 

%

\paragraph{Methods compared} To focus on the effect of adding  multilayer FDR guarantees (rather than on the consequences of different methods of analysis), we compare the results of  the multilayer knockoff filter (MKF) and the knockoff filter (KF). The multilayer knockoff filter used a variant layer and a gene layer. We chose MKF and KF instead of MKF+ and KF+ for increased power, but otherwise used the same method settings as in Section \ref{sec:numerical}. Each variant from the sequencing data and the 12 exonic GWAS SNPs were assigned to groups based on gene. The 6 intergenic GWAS SNPs are considered single members of 6 additional groups. Hence, our analysis has 85 (= 79 + 6) ``genes" in total. 

\paragraph{Results} 
Table \ref{table:results_summary} summarizes how many genes and SNPs each method discovers. KF has about twice as many discoveries at each layer, but how many of these are spurious? 
Unfortunately the identity of the variants truly associated with HDL is unknown, 
but we can get an approximation to the truth using the existing literature and online databases. At the variant level, this task is difficult because (1) linkage disequilibrium (i.e. correlations between nearby variants) makes the problem ill-posed and (2) rare variants, present in this sequencing dataset, are less well studied and cataloged. Instead, we focus on an annotation at the gene level. Comparing the two methods at the gene level is also meaningful because this is the layer at which the multilayer knockoff filter provides an extra FDR guarantee. See Appendix \ref{sec:annotations} for references supporting our annotations.

Table \ref{table:gene_level_results} shows the gene layer results: 
there are 5 true positive genes (ABCA1, CETP, GALNT2, LIPC, LPL) found by both methods, 1 false positive shared by both methods (PTPRJ), 1 true positive for KF that is missed by MKF (APOA5), and 4 false positives (NLRC5, SLC12A3, DYNC2LI1, SPI1) for KF that MKF correctly does not select. Hence, MKF reduced the number of false positives from 5 to 1 at the cost of 1 false negative.

Figure \ref{fig:real_data} shows a more detailed version of these association results, illustrating the signal at the variant level.  Notably, the one extra false negative (APOA5) incurred by MKF just barely misses the cutoff for the gene layer. Aside from the extra false negative and the one false positive shared with KF, the additional horizontal cutoff induced by the need to control FDR at the gene level  does a good job separating the genes associated with HDL from those that are not. 
\begin{table}[h]
	\centering
	\begin{tabular}{ccc}
		\textbf{Method} & \textbf{\# SNPs found} & \textbf{\# Genes found} \\
		\hline
		KF & 23 & 11 \\
		MKF & 13 & 6 \\				
	\end{tabular}
	\caption{Summary of association results on resequencing data.}
	\label{table:results_summary}
\end{table}

\begin{table}[h]
	\centering
	\begin{tabular}{ccc}
		\textbf{Gene} & \textbf{Discovered by} & \textbf{Supported in literature}\\
		\hline
		ABCA1 & KF, MKF  & yes \\
		CETP & KF, MKF  & yes \\
		GALNT2 & KF, MKF  & yes \\
		LIPC & KF, MKF  & yes \\
		LPL & KF, MKF  & yes \\
		\textcolor{red}{PTPRJ} & KF, MKF & no \\
		APOA5 & KF & yes \\
		\textcolor{red}{NLRC5} & KF & no \\
		\textcolor{red}{SLC12A3} & KF & no \\
		\textcolor{red}{DYNC2LI1} & KF & no \\
		\textcolor{red}{SPI1} & KF & no
	\end{tabular}
	\caption{Comparison of MKF and KF at the gene layer. False positives are highlighted in red.
	}\label{table:gene_level_results}
\end{table}

\begin{figure}[h!]
	\centering
	\includegraphics[width = 0.85\textwidth]{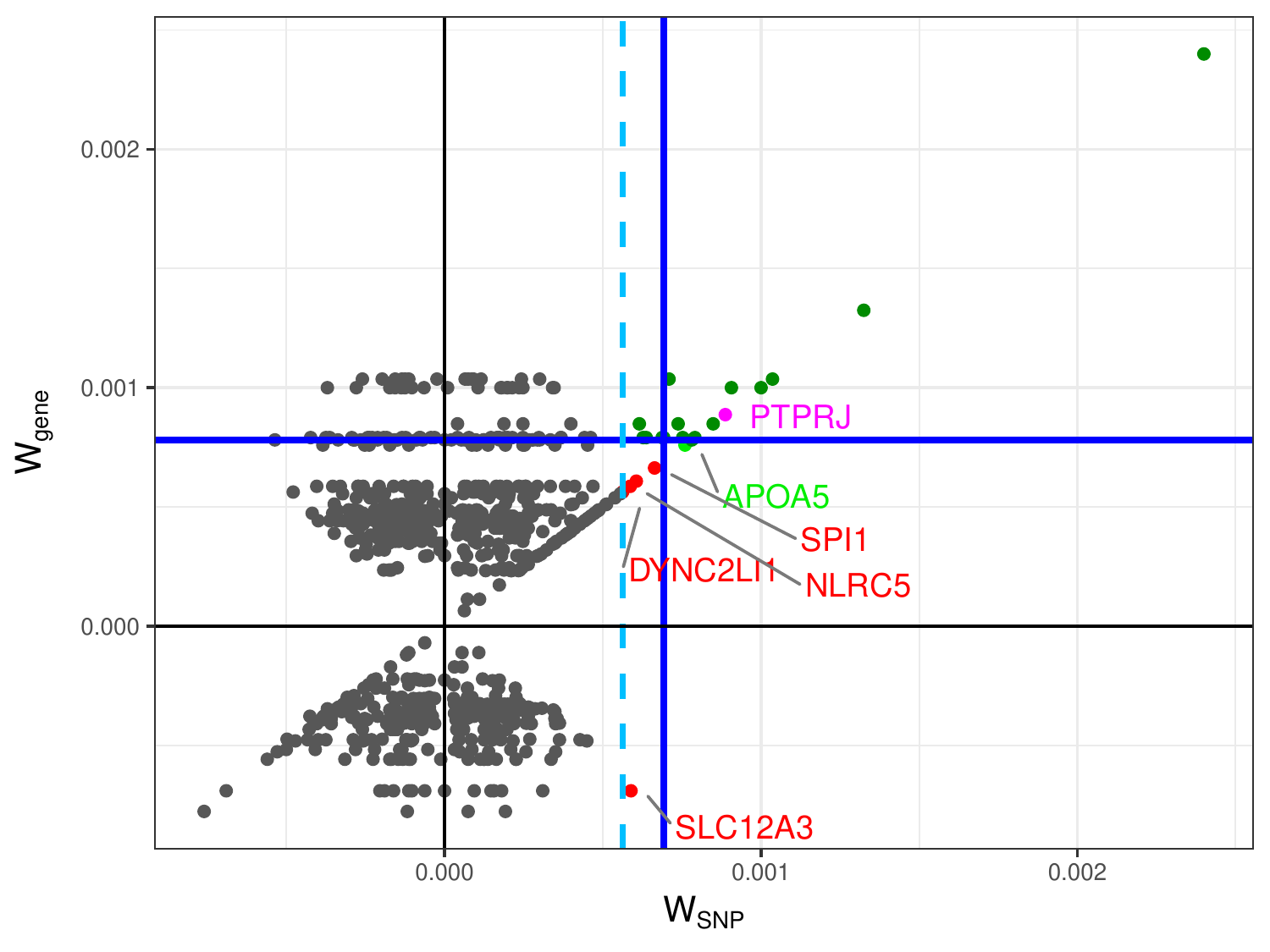}
	\caption{Scatterplot of variant level and gene level knockoff statistics. Solid blue lines are the thresholds for MKF while the cyan broken line is the threshold for KF. Each dot corresponds to a variant, and variants  selected by at least one of the methods are in color: dark green indicates selected variants that belong to genes that are true positives for both methods, light green a true positive found by KF but missed by MKF, red is for false positives of KF but true negatives of MKF, and magenta for the false positive shared by both methods. To facilitate comparison with Table \ref{table:gene_level_results} we indicate the names of the genes representing false positives or false negatives for at least one of the methods.}
	\label{fig:real_data}
\end{figure}

\section{Conclusions} \label{sec:conclusion}

With the multilayer knockoff filter, we have made a first step to equip model selection procedures with FDR guarantees for multiple types of reported discoveries, bridging results from the multi-resolution testing literature \citep{Y08, BB14, PetS16, BR15} with controlled selection methods \citep{BC15,DB16,CetL16}.
When tackling high dimensional variable selection, researchers have at their disposal several methods based on regularized regression, with penalties that can reflect an array of sparse structures (see for example \cite{KX09, JetP10, RetR13}), corresponding to a multiplicity of possible resolutions for discoveries. While several of them have been implemented in the context of genetic association studies 
\citep{ZetLa10,XetK14}, their application has been hampered by the lack of inferential guarantees on the selection. It is our hope that the approach put forward with MKF will allow scientists to leverage these computationally attractive methods to obtain replicable discoveries at multiple levels of granularity. 


%

In the process of developing a framework for multilayer FDR control for variable selection, we have also generalized the p-filter multiple testing procedure. Our approach places no restrictions on the relations between the p-values used to test the hypotheses at different layers. By contrast, theoretical results for the p-filter rely heavily on the specific way in which p-values for individual hypotheses are aggregated to obtain p-values for groups.  The constant $c_{\text{pf}}$ can be viewed as the price we pay for allowing these arbitrary dependencies. Nevertheless, simulations show that both $c_{\text{pf}}$ and the corresponding constant $c_{\text{kn}}$ for MKF appear to be inconsequential in practice.

Finally, the uniform bound (\ref{bounded_ratio_multilayer}) can be re-interpreted as stating that the maximum amount by which the FDP of the knockoff filter can exceed $\widehat{\text{FDP}}$ over the entire path has bounded expectation. Moreover, the same proof yields a high-probability uniform bound on FDP. This interpretation is pursued by \cite{KR18}, who also prove the corresponding bound for BH conjectured here, and extend the proof to a variety of other FDR methodologies.

\subsection*{Extensions}  

There are many directions in which it seems appropriate to extend the results we have so far. We list some below, of varying degrees of difficulty.

We have constructed  the multilayer framework so that it would not require the groups in different layers to be hierarchically nested: however, in certain applications, such a hierarchical structure exists and could be exploited to increase power, relaxing the consistency requirements we have for discoveries across layers. This is the case, for example, in the relation between SNPs and genes that partially motivated us: it is scientifically interesting  to discover that a gene is implicated in a disease, even if we are unable to pinpoint any specific causal variant within that gene.
Formally, 
  the selection sets allowed in this paper at each layer must be ``two-way consistent," i.e. selecting an individual variable implies selecting the group to which it belongs, and selecting a group implies selecting at least one variable in the group. In a hierarchical setting, a less stringent ``one-way consistency'' requirement can be formulated: selecting an individual variable implies selecting the group to which it belongs. In fact, it can be easily shown that MKF can be modified to enforce this relaxed consistency requirement, and the same proof technique shows that multilayer FDR control still holds. This modified MKF is currently being used to analyze data from an exome-wide resequencing study.

In the genetic application motivating this work, SNPs are grouped according to the genes to which they belong. This is adequately described with non overlapping groups, but there are extensions in which it makes sense to consider groups that are overlapping within the same layer. For example, \textit{biological pathways} are groups of genes known to work together to carry out a certain biological function. It is often desired for inference to be carried out at the pathway level, as this gives a direct biological interpretation of the results. However, genes often participate in multiple pathways, which leads to large group overlaps. This brings new statistical challenges, starting from a meaningful description of the null hypothesis for each group, to the  construction of valid knockoff variables for overlapping groups. 

We have focused on situations where the researcher,  prior to looking at the data,  can specify meaningful groups of variables corresponding to discoveries at coarser resolutions.  While this is the case in many settings where it makes sense to pursue FDR control at multiple layers, it is also true that there are problems where the groupings of predictors are most meaningful 
when based on the data. For example, in our own case study, we  grouped predictors as individual variables are too correlated with each other: while we did so in a fairly arbitrary manner and without looking at the outcome variable, choosing groups based on the data could  have helped us to choose a ``resolution" appropriate for the signal strength and correlation structure in the data. Even in a single-layer setup, selecting groups and then carrying out valid inference with respect to these groups is a challenging new problem: we hope that some of the ideas developed here can contribute to its solution.


Another promising extension of the MKF is to multi-task regression, the study of the impact of a set of predictor variables on multiple outcome variables. The multi-task regression problem is often reshaped into a larger single-task regression problem, in which the predictors have group structure based on which task they correspond to. For example, \cite{DB16} take this approach to multi-task regression alongside its development of the group knockoff filter. MKF can then provide a framework for FDR control in this setting, where group discoveries correspond to finding variables important for at least one of the outcomes, and individual discoveries correspond to the identification of variables important for a specific outcome. In the context of the linear model with $N \leq n$ and independent errors, the MKF as described here provides the desired FDR guarantee. However, the general case is more challenging and will require substantial modifications.

\section*{Acknowledgements} \label{sec:acknowledgements}

The authors are indebted to Emmanuel Cand\`{e}s and David Siegmund for help with theoretical aspects of this work. We also thank Lucas Janson for helpful comments on the manuscript, and editors and referee for constructive feedback. E.K. also thanks Subhabrata Sen for a helpful discussion. We thank the authors of \cite{SetF14} for giving as access to the data after their quality controls and acknowledge the dbGap data repository, studies phs000867 and phs000276.

\begin{supplement}
	\stitle{Proofs of technical results and evidence for gene annotations.}
	\slink[doi]{COMPLETED BY THE TYPESETTER}
	\sdatatype{.pdf}
	\sdescription{We provide proofs of technical results (in particular, Proposition \ref{prop:nondegeneracy} and lemmas supporting Theorems \ref{kn_multilayer_fdr_control} and \ref{th:pf_theorem}) and evidence for the gene annotations from Section \ref{sec:real_data}.}
\end{supplement}

\bibliographystyle{imsart-nameyear}
\bibliography{my_bib}


\appendix

\section{Connection between group and individual null hypotheses}
\begin{proof}[Proof of Proposition \ref{prop:nondegeneracy}]
	Fix $\mathcal A = \mathcal A_g^m$ for some $g, m$, and assume without loss of generality that $\mathcal A = \{1, \dots, K\}$. It is clear that if $Y \independent X_{\mathcal A} | X_{-\mathcal A}$, then $Y \independent X_{j} | X_{-j}$ for each $j \in \mathcal A$. Hence, the left-hand side  of equation (\ref{equivalence}) from the main text is included in the right-hand side.  \color{black}Conversely, suppose that $Y \independent X_{j} | X_{-j}$ for each $j \in \mathcal A$. Fix $y$ and $x, x' \in \mathcal D$. Then, we have
	\begin{equation*}
	\begin{split}
	p_{Y | X_{\mathcal A}, X_{-\mathcal A}}(y|x_{\mathcal A}, x_{-\mathcal A}) &= p_{Y|X_{\mathcal A \setminus \{1\}}, X_{-\mathcal A}}(y | x_{\mathcal A \setminus \{1\}}, x_{-\mathcal A}) \\
	&= p_{Y | X_{\mathcal A}, X_{-\mathcal A}}(y|(x'_1, x_2, \dots, x_K), x_{-\mathcal A}).
	\end{split}
	\end{equation*}
	Both equalities follows from the conditional independence assumption $Y \independent X_{1} | X_{-1}$, and all conditional distributions are well-defined (i.e., we are not conditioning on an event of probability zero) because we assumed the entire domain has nonzero probability density. Proceeding in this way for variables $2, \dots, K$, we find that
	\begin{equation*}
	p_{Y | X_{\mathcal A}, X_{-\mathcal A}}(y|x_{\mathcal A}, x_{-\mathcal A}) = p_{Y | X_{\mathcal A}, X_{-\mathcal A}}(y|x'_{\mathcal A}, x_{-\mathcal A}).
	\end{equation*}
	Since this holds for any $y, x, x'$, this implies that $Y \independent X_{\mathcal A} | X_{-\mathcal A}$. This proves the reverse inclusion, so we are done. 
\end{proof}
Note that this result and proof technique are strongly related to Lemma 3.2 in \cite{CetL16}.

\section{FDR control for multilayer knockoff filter} \label{sec:kn_appendix}

For clarity, we begin by proving the sufficient condition (\ref{sufficient}) for Theorem \ref{kn_multilayer_fdr_control} from the main text with the slightly worse constant of 2.1. 
\begin{lemma} \label{lemma:kn_bound_looser}
	Let $S_k = \sum_{i = 1}^k X_i$, where $X_i \overset{i.i.d.}\sim \text{Ber}(1/2)$ and $S_0 = 0$. Then,
	\begin{equation*}
	a_{\text{\em kn}} = \mathbb E\left[\sup_{k\geq 0}\frac{S_k}{1 + k - S_k}\right] \leq 2.1.
	\end{equation*}
\end{lemma}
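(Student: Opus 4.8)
The plan is to recast the claim as a statement about the tail of $R^\star:=\sup_{k\ge0} S_k/(1+k-S_k)$. Since $R^\star\ge0$, $a_{\text{kn}}=\mathbb E[R^\star]=\int_0^\infty \mathbb P(R^\star\ge\lambda)\,d\lambda$, so it suffices to control the tail probabilities. For $\lambda\in(0,1]$ I would just use $\mathbb P(R^\star\ge\lambda)\le1$, which contributes at most $1$ to the integral; this piece is essentially unimprovable, since $2S_k-k$ is a recurrent simple symmetric walk, forcing $\limsup_k S_k/(1+k-S_k)=1$ a.s. The work is therefore in the range $\lambda>1$, where I would start from the elementary equivalence
\[
\Big\{\sup_{k}\tfrac{S_k}{1+k-S_k}\ge\lambda\Big\}=\Big\{\exists\,k\ge1:\ S_k\ge p\,(k+1)\Big\},\qquad p:=\tfrac{\lambda}{1+\lambda}\in(\tfrac12,1),
\]
so the event becomes ``the Bernoulli walk $S_k$ ever crosses an affine barrier of slope $p>\tfrac12$.''

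The heart of the argument is a Chernoff bound turned into a barrier-crossing bound by optional stopping. For $r>0$ the process $M_k^{(r)}:=r^{S_k}\big/\big(\tfrac{1+r}{2}\big)^{k}$ is a nonnegative martingale with $M_0^{(r)}=1$ (since $\mathbb E[r^{X_i}]=\tfrac{1+r}{2}$). For $\lambda>1$ one checks that $f(r):=2r^{p}-1-r$ is strictly concave with $f(1)=0$, $f'(1)=2p-1>0$, and $f(r)\to-\infty$, hence has a unique root $r^\star=r^\star(\lambda)>1$. Let $\sigma_\lambda:=\inf\{k:\,S_k\ge p(k+1)\}$, a stopping time with $\sigma_\lambda\ge1$. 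On $\{\sigma_\lambda<\infty\}$, using $r^\star>1$, $S_{\sigma_\lambda}\ge p(\sigma_\lambda+1)$, and $2(r^\star)^{p}=1+r^\star$,
\[
M^{(r^\star)}_{\sigma_\lambda}\ \ge\ \frac{(r^\star)^{p(\sigma_\lambda+1)}}{\big(\tfrac{1+r^\star}{2}\big)^{\sigma_\lambda}}\ =\ (r^\star)^{p}\cdot\Big(\tfrac{(r^\star)^{p}}{(1+r^\star)/2}\Big)^{\sigma_\lambda}\ =\ (r^\star)^{p},
\]
so the optional stopping inequality $\mathbb E\big[M^{(r^\star)}_{\sigma_\lambda}\mathbf 1\{\sigma_\lambda<\infty\}\big]\le M_0^{(r^\star)}=1$ (justified by Fatou applied to $M^{(r^\star)}_{\sigma_\lambda\wedge n}$, equivalently Doob's maximal inequality) yields $\mathbb P(R^\star\ge\lambda)=\mathbb P(\sigma_\lambda<\infty)\le(r^\star)^{-p}=\tfrac{2}{1+r^\star(\lambda)}$.

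It then remains to integrate, $a_{\text{kn}}\le 1+\int_1^\infty \tfrac{2}{1+r^\star(\lambda)}\,d\lambda$. I would make this explicit via the substitution $\rho:=(r^\star)^{1/(1+\lambda)}\in(1,2)$, under which $2(r^\star)^{p}=1+r^\star$ becomes $\rho^{\lambda}(2-\rho)=1$, i.e.\ $\lambda=\lambda(\rho)=-\ln(2-\rho)/\ln\rho$, while $\tfrac{2}{1+r^\star}=(r^\star)^{-p}=\rho^{-\lambda}=2-\rho$. Since $\lambda(\cdot)$ is an increasing bijection of $(1,2)$ onto $(1,\infty)$, changing variables and integrating by parts (the boundary term at $\rho=2$ vanishes as $x\ln(1/x)\to0$, and at $\rho=1$ equals $\lim_{\rho\to1}\tfrac{-\ln(2-\rho)}{\ln\rho}=1$) collapses everything to
\[
a_{\text{kn}}\ \le\ \int_1^2\frac{-\ln(2-\rho)}{\ln\rho}\,d\rho\ =\ \int_0^1\frac{-\ln(1-u)}{\ln(1+u)}\,du .
\]
The last step is to bound this explicit one-dimensional integral by $2.1$: partition $[0,1]$, bound the integrand on each subinterval using monotonicity of $\ln(1+u)$, and treat the integrable log-singularity at $u=1$ via $\ln(1+u)\ge\ln(1+a)$ on $[a,1]$ together with $\int_a^1 -\ln(1-u)\,du=(1-a)\big(1-\ln(1-a)\big)$.

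The main obstacle is precisely this final estimate: the integral above evaluates to roughly $2.10$, so the elementary bounding must be carried out carefully to land at or below $2.1$ — which is exactly why the lemma is stated with this ``slightly worse'' constant rather than the $1.93$ that a sharper, non-Doob maximal argument delivers later. Secondary points needing care are the optional-stopping step for the possibly-infinite $\sigma_\lambda$, and the existence/uniqueness of the root $r^\star>1$ of $f$, which is what makes the exponent equality $2(r^\star)^p=1+r^\star$ — and hence the clean collapse of the integral — available.
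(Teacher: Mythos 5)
Your proposal is correct and follows essentially the same route as the paper: the exponential martingale $r^{S_k}/((1+r)/2)^k$ with $r=e^\theta$ is exactly the paper's martingale $\exp(\theta(S_k-c_1k-c_2))$, the root condition $2(r^\star)^p=1+r^\star$ is the paper's equation $\exp(\theta(1-c_1))+\exp(-\theta c_1)=2$, and Doob's maximal inequality yields the identical tail bound $\exp(-\Theta(p)\,p)=(r^\star)^{-p}$, followed by integrating the tail over $\lambda>1$. The only cosmetic difference is your further change of variables collapsing the answer to the closed form $\int_0^1 \frac{-\ln(1-u)}{\ln(1+u)}\,du\approx 2.0998$, which you propose to bound by elementary piecewise estimates where the paper simply invokes numerical integration; since the true value sits within $10^{-3}$ of $2.1$, that elementary step would need to be executed with care, but it is not a gap in the argument.
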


This statement reduces to bounding the probability the random walk $S_k$ hits a linear boundary.

\begin{lemma} \label{random_walk_linear_boundary}
	Fix constants $c_1 \in (\frac12, 1)$ and $c_2$. We have
	\begin{equation*}
	\mathbb P[S_k \geq c_1 k + c_2 \text{ for some } k \geq 0] \leq \exp(-\Theta(c_1) \cdot c_2),
	\end{equation*}
	where $\Theta(c_1)$ is the unique positive root of
	\begin{equation}
	\exp(\theta(1-c_1)) + \exp(-\theta c_1) = 2.
	\label{theta_eq_kn}
	\end{equation}
\end{lemma}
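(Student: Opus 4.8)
The plan is to use a standard exponential (Chernoff-style) martingale together with the optional stopping theorem. First I would introduce the centered increments $Y_i = X_i - c_1$, where $X_i \overset{i.i.d.}\sim \mathrm{Ber}(1/2)$, so that the event $\{S_k \geq c_1 k + c_2 \text{ for some } k \geq 0\}$ is exactly the event that the random walk $T_k = \sum_{i=1}^k Y_i = S_k - c_1 k$ ever reaches the level $c_2$. Since $c_1 > \tfrac12$, the increments $Y_i$ have negative mean $\tfrac12 - c_1 < 0$, so $T_k$ has negative drift and the overshoot event is genuinely a large-deviations event — this is what produces the exponential bound.

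Next I would choose $\theta > 0$ and consider the process $M_k = \exp(\theta T_k)$. A direct computation of $\mathbb E[\exp(\theta Y_1)] = \tfrac12 \exp(\theta(1-c_1)) + \tfrac12 \exp(-\theta c_1)$ shows that $M_k$ is a supermartingale precisely when this moment generating function is at most $1$, i.e. when $\exp(\theta(1-c_1)) + \exp(-\theta c_1) \leq 2$. The function $\theta \mapsto \exp(\theta(1-c_1)) + \exp(-\theta c_1)$ equals $2$ at $\theta = 0$, has derivative $(1-c_1) - c_1 = 1 - 2c_1 < 0$ there, and is convex and tends to $+\infty$; hence it has a unique positive root $\Theta(c_1)$, and on $(0, \Theta(c_1)]$ the MGF is $\leq 1$. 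Taking $\theta = \Theta(c_1)$, the process $M_k$ is a nonnegative martingale (in fact the MGF equals $1$ at the root). Let $\tau = \inf\{k \geq 0 : T_k \geq c_2\}$. By optional stopping applied to the stopped martingale $M_{k \wedge \tau}$ and Fatou's lemma (or the maximal inequality for nonnegative supermartingales), $\mathbb P[\tau < \infty] \leq \mathbb E[M_0] / \inf_{\{T_k \geq c_2\}} M_k \leq \exp(-\Theta(c_1) c_2)$, since on $\{\tau < \infty\}$ we have $T_\tau \geq c_2$ and hence $M_\tau \geq \exp(\Theta(c_1) c_2)$. This is exactly the claimed bound $\exp(-\Theta(c_1) \cdot c_2)$.

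The only mild subtlety — and the step I would be most careful about — is the passage to the infinite-horizon event: one must justify $\mathbb P[\tau < \infty] \leq \mathbb E[M_0]\, e^{-\Theta(c_1) c_2}$ rather than just the finite-$k$ version $\mathbb P[\tau \leq n]$. This is handled cleanly by Ville's maximal inequality for the nonnegative supermartingale $(M_k)$, which gives $\mathbb P[\sup_k M_k \geq \lambda] \leq \mathbb E[M_0]/\lambda$ for all $\lambda > 0$; taking $\lambda = \exp(\Theta(c_1) c_2)$ and noting $\{\tau < \infty\} \subseteq \{\sup_k M_k \geq \lambda\}$ finishes the argument. Everything else is routine, and the characterization of $\Theta(c_1)$ via equation (\ref{theta_eq_kn}) falls out automatically as the condition that makes $\exp(\theta T_k)$ a martingale.
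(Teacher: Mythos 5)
Your proposal is correct and follows essentially the same route as the paper: both define the exponential process $\exp(\theta(S_k - c_1 k))$ (the paper's $Z_k$ differs from your $M_k$ only by the constant factor $e^{-\theta c_2}$), characterize $\Theta(c_1)$ as the unique positive root making it a martingale via the same convexity argument, and conclude with the maximal inequality for nonnegative martingales (your appeal to Ville's inequality is the same tool the paper invokes as Doob's maximal inequality together with monotone convergence to handle the infinite horizon). No gaps.
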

Indeed, assuming the statement of Lemma \ref{random_walk_linear_boundary}, we write
\begin{equation}
\begin{split}
a_{\text{kn}} &= \mathbb E\left[\sup_{k \geq 0}\frac{S_k}{1 + k - S_k}\right] \\
&= \int_0^\infty \mathbb P \left[\sup_{k \geq 0}\frac{S_k}{1 + k - S_k} \geq t\right]dt \\
&= 1 + \int_1^\infty \mathbb P \left[\sup_{k \geq 0}\frac{S_k}{1 + k - S_k} \geq t\right]dt \\
&= 1 + \int_1^\infty \mathbb P \left[S_k \geq \frac{t}{1+t}k + \frac{t}{1+t} \text{ for some } k \geq 0\right]dt \\
&\leq 1 + \int_1^\infty \exp\left(-\Theta\left(\frac{t}{1+t}\right) \cdot \frac{t}{1+t}\right) dt \\
&\approx 2.0998... \\
&\leq 2.1.
\label{expectation_as_integral}
\end{split}
\end{equation}
The second line follows from the first because $S_k/(1 + k - S_k) \overset{a.s.}\rightarrow 1$ by the law of large numbers. The fourth line follows by Lemma \ref{random_walk_linear_boundary} because $\frac{t}{1+t} \in (\frac12, 1)$ for $t > 1$, and the fifth line comes from numerical integration. Although the integrand is not technically defined at $t = 1$, note that we might as can set $\Theta(1/2) = 0$, for which the statement in Lemma \ref{random_walk_linear_boundary} holds trivially, leading to an integrand of 1 at $t = 1$.

\begin{proof}[Proof of Lemma \ref{random_walk_linear_boundary}]
	For fixed $\theta > 0$, define the process
	\begin{equation*}
	Z_k = \exp(\theta(S_k - c_1 k - c_2)).
	\end{equation*}
	Then, 
	\begin{equation*}
	\mathbb P[S_k \geq c_1 k + c_2 \text{ for some } k \geq 0] = \mathbb P\left[\sup_{k \geq 0} Z_k \geq 1\right].
	\end{equation*}	
	The goal is to choose $\theta$ so that $Z_k$ becomes a martingale with respect to the filtration $\mathcal F_k = \sigma(\{X_i\}_{1 \leq i \leq k})$. For this, it is sufficient that
	\begin{equation*}
	\begin{split}
	1 = \mathbb E\left[\left.\frac{Z_k}{Z_{k-1}}\right|\mathcal F_{k-1}\right] &= \mathbb E[\exp(\theta(X_k - c_1))|\mathcal F_{k-1}] \\
	&= \mathbb E[\exp(\theta(X_k - c_1))] = \frac12\exp(\theta(1 - c_1)) + \frac12 \exp(-\theta c_1).
	\label{OST}
	\end{split}
	\end{equation*}
	We claim that this equation has a unique positive root for $c_1 \in (\frac12, 1)$. Indeed, let $f(\theta) = \frac12\exp(\theta(1 - c_1)) + \frac12 \exp(-\theta c_1)$. Observe that $f(0) = 1$, $f'(0) = \frac12 - c_1 < 0$, and $\lim_{\theta \rightarrow \infty} f(\theta) = \infty$. these facts, together with the convexity of $f$, imply that it has a unique positive root $\Theta(c_1)$. 
	
	Finally, by applying Doob's maximal inequality for nonnegative martingales as well as the monotone convergence theorem, we obtain
	\begin{equation}
	\mathbb P\left[\sup_{k \geq 0} Z_k \geq 1\right] \leq \mathbb E[Z_0] = \exp(-\Theta(c_1) \cdot c_2).	
	\label{ineq_loose}
	\end{equation}
\end{proof}

We can do better than Lemma \ref{lemma:kn_bound_looser}, because the inequality in (\ref{ineq_loose}) (and hence the constant 2.1) is somewhat loose. This is related to the fact that the random walk $S_k$ sometimes overshoots the linear boundary. This overshoot can be fairly substantial at the beginning of the random walk. To get the improved constant of 1.93, as stated in Theorem \ref{kn_multilayer_fdr_control}, we carry out a more refined analysis.

\begin{lemma} \label{bound_knockoffs}
	Define $a_{\text{kn}}$ as in Lemma \ref{lemma:kn_bound_looser}. Then, $a_{kn} \leq 1.93$.
\end{lemma}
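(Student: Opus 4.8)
The plan is to revisit the chain of inequalities (\ref{expectation_as_integral}) behind the weaker constant $2.1$ and replace its one genuinely lossy step. Recall that there one writes $a_{\text{kn}} = 1 + \int_1^\infty \mathbb P[\,S_k \ge c_1 k + c_2 \ \text{for some } k \ge 0\,]\,dt$ with $c_1 = c_2 = \tfrac{t}{1+t}$, and then bounds the crossing probability by $\exp(-\Theta(c_1)c_2)$ via Doob's inequality applied to the martingale $Z_k = \exp(\Theta(c_1)(S_k - c_1 k - c_2))$, where $\Theta(c_1)$ solves (\ref{theta_eq_kn}). The inequality $\mathbb P[\sup_k Z_k \ge 1] \le \mathbb E[Z_0]$ is wasteful exactly because the walk overshoots the line: at the first crossing time $\tau$ the value $S_\tau$ is an \emph{integer}, so $S_\tau \ge \lceil c_1(\tau+1)\rceil$ and $Z_\tau = \exp(\Theta(c_1)\,R)$ with a strictly positive overshoot $R = S_\tau - c_1(\tau+1)$ lying in $(0,\,1-c_1)$, the latter range being forced by the fact that the increments $X_i\in\{0,1\}$ are bounded. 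Because of that boundedness the stopped process $Z_{\tau\wedge N}$ is itself bounded by $e^{\Theta(c_1)(1-c_1)}$, so optional stopping together with bounded convergence (using that on $\{\tau=\infty\}$ one has $S_k - c_1 k \to -\infty$, hence $Z_k\to 0$) upgrades (\ref{ineq_loose}) to the \emph{exact} identity $\exp(-\Theta(c_1)c_2) = \mathbb P[\tau<\infty]\cdot \mathbb E[\exp(\Theta(c_1)R)\mid\tau<\infty]$. Thus any lower bound on $\mathbb E[\exp(\Theta(c_1)R)\mid\tau<\infty]$ strictly larger than $1$ improves the estimate on $\mathbb P[\tau<\infty]$.

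I would extract such a bound by peeling off the earliest possible crossing times. Since crossing requires $k \ge c_2/(1-c_1) = t$, for a cutoff $k_0 = k_0(t)\ge\lceil t\rceil$ I would split
\[
\mathbb P[\tau<\infty] \;\le\; \mathbb P[\tau \le k_0]\;+\;\sum_{s < c_1(k_0+1)} \mathbb P[\tau>k_0,\ S_{k_0}=s]\,\exp\!\big(-\Theta(c_1)(c_1(k_0+1)-s)\big),
\]
where the first term is an explicit finite ``ballot-type'' probability and the second bounds, via the Markov property and the same Doob estimate applied to the walk restarted at time $k_0$, the chance of a late crossing (the deficit $c_1(k_0+1)-s$ is automatically positive once we condition on $\tau>k_0$). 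Comparing this with the algebraic identity $\exp(-\Theta(c_1)c_1) = \sum_{s=0}^{k_0}\binom{k_0}{s}2^{-k_0}\exp(-\Theta(c_1)(c_1(k_0+1)-s))$, which holds because $1+e^{\Theta(c_1)} = 2e^{\Theta(c_1)c_1}$ by (\ref{theta_eq_kn}), shows that the states already above the boundary at time $k_0$ — which the crude bound had implicitly counted with an inflationary weight $\exp(\Theta(c_1)(s-c_1(k_0+1)))>1$ — are now counted with weight at most $1$; the difference is a fully explicit, strictly positive overshoot correction. For $t$ large, where $\exp(-\Theta(c_1)c_1)$ is already exponentially small and contributes negligibly to the integral, I would simply keep the crude bound from Lemma~\ref{random_walk_linear_boundary}.

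Finally I would assemble the per-$t$ bounds into a single function $\widehat q(t)\ge \mathbb P[\tau<\infty]$ and verify, by a rigorous one-dimensional numerical integration (the integrand is bounded, eventually nonincreasing, and absolutely integrable), that $\int_1^\infty \widehat q(t)\,dt \le 0.93$, whence $a_{\text{kn}} \le 1.93$. I expect the main obstacle to be quantitative rather than conceptual: near $t=1$ the crude bound is essentially $1$ (indeed $q(1)=1$ by recurrence of the symmetric random walk), so one must take $k_0(t)$ large enough there — hence expand enough of the ballot probabilities explicitly — to recover sufficient overshoot mass, while keeping $k_0(t)$ small enough that those probabilities remain tractable; balancing this across the whole range $t\in[1,\infty)$ and then certifying the value $0.93$ of the resulting integral is the delicate part. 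A secondary technical point is that the clean overshoot range $R\in(0,1-c_1)$ relies on the Bernoulli representation of $S_k$; the equivalent geometric reformulation $a_{\text{kn}} = \mathbb E[\sup_{n\ge1}\frac1n\sum_{i\le n}E_i]$ with $E_i$ i.i.d.\ $\mathrm{Geometric}(1/2)$ has unbounded overshoot and is less convenient for this argument.
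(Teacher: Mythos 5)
Your proposal is correct and follows essentially the same route as the paper: both diagnose the looseness in (\ref{ineq_loose}) as an overshoot effect, and both repair it by conditioning on the first $k_0$ steps of the walk (handling early crossings exactly via the binomial/ballot probabilities $\mathbb P[\tau \le k_0]$ and $\mathbb P[\tau > k_0,\, S_{k_0}=s]$), applying the exponential-martingale bound of Lemma \ref{random_walk_linear_boundary} to the walk restarted at time $k_0$ with the reduced intercept $c_1(k_0+1)-s$, and then certifying the constant by numerical integration over $t$. The paper carries this out with a fixed $k_0=20$ (after a change of variables $t\mapsto\theta_t$ to make the integral tractable), obtaining $1.922$, which confirms that your scheme indeed lands below $1.93$.
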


\begin{proof}
	In order to remedy this overshoot, we consider all $2^{k_0}$ possibilities for the first $k_0$  steps of the random walk, and then carry out the same analysis as above starting at time $k_0$. Fix $k_0 \geq 0$. Then, 
	\begin{equation*}
	\begin{split}
	&\mathbb P \left[\sup_{k \geq 0}\frac{S_k}{1 + k - S_k} \geq t\right] \\
	&\quad= \mathbb E\left[\mathbb P \left[\left.\sup_{k \geq 0}\frac{S_k}{1 + k - S_k} \geq t\right|X_1, \dots, X_{k_0}\right]\right] \\
	&\quad= 2^{-k_0}\sum_{x_1, \dots, x_{k_0}}\mathbb P \left[\left.\sup_{k \geq 0}\frac{S_k}{1 + k - S_k} \geq t\right|X_1 = x_1, \dots, X_{k_0} = x_{k_0}\right].
	\end{split}
	\end{equation*}
	Next, define
	\begin{equation*}
	R_{k_0}(x) = \max_{k \leq k_0} \frac{\sum_{i \leq k}x_i}{1 + k - \sum_{i \leq k}x_i}; \quad P_{k_0}(x) = \sum_{i \leq k_0} x_i.
	\end{equation*}
	These are the maximum ratio and partial sum after $k_0$ steps, respectively. Then, we have
	\begin{equation*}
	\begin{split}
	&\mathbb P \left[\sup_{k \geq 0}\frac{S_k}{1 + k - S_k} \geq t\right] \\
	&\quad= 2^{-k_0}\sum_{x_1, \dots, x_{k_0}}\left\{I(R_{k_0}(x) \geq t) + \vphantom{\mathbb P \left[\sup_{k \geq 0}\frac{P_{k_0}(x) + \overline{S}_k}{1 + k_0 + k - (P_{k_0}(x) + \overline S_k)} \geq t\right]}\right. \\
	&\qquad \qquad \qquad \qquad \left.\mathbb P \left[\sup_{k \geq 0}\frac{P_{k_0}(x) + \overline{S}_k}{1 + k_0 + k - (P_{k_0}(x) + \overline S_k)} \geq t\right] I(R_{k_0}(x) < t)\right\},
	\end{split}
	\end{equation*}
	where
	\begin{equation*}
	\overline S_k = \sum_{i = k_0+1}^{k_0 + k} X_i.
	\end{equation*}
	Note that $\{\overline S_k\}_{k \geq 0}$ has the same distribution as $\{S_k\}_{k \geq 0}$, since $X_i$ are i.i.d. Integrating the above expression and using an argument similar to  (\ref{expectation_as_integral}), we get
	\begin{equation*}
	\begin{split}
	a_{\text{kn}} &= 2^{-k_0} \sum_{x_1, \dots, x_{k_0}}\left\{\max(R_{k_0}(x),1) + \vphantom{\int_{\max(R_{k_0}(x),1)}^{\infty}}\right. \\
	&\qquad \qquad \qquad \quad \left. \int_{\max(R_{k_0}(x),1)}^{\infty}\mathbb P \left[\sup_{k \geq 0}\frac{\overline S_k + P_{k_0}(x)}{1 + k_0 + k - (\overline S_k+P_{k_0}(x))} \geq t\right]dt \right\}.
	\end{split}
	\label{first_step_analysis}
	\end{equation*}
	Note that 
	\begin{equation*}
	\frac{\overline S_k + P_{k_0}(x)}{1 + k_0 + k - (\overline S_k+P_{k_0}(x))} \geq t \quad \Longleftrightarrow \quad \overline S_k \geq \frac{t}{1+t}k + \frac{t}{1+t}(1 + k_0) - P_{k_0}(x).
	\end{equation*}
	Hence, by Lemma \ref{random_walk_linear_boundary}, we have
	\begin{equation*}
	\begin{split}
	&\mathbb P \left[\sup_{k \geq 0}\frac{\overline S_k + P_{k_0}(x)}{1 + k_0 + k - (\overline S_k+P_{k_0}(x))} \geq t\right] \\
	&\quad \leq \exp\left(-\Theta\left( \frac{t}{1+t}\right)\cdot \left(\frac{t}{1+t}(1+k_0) - P_{k_0}(x)\right)\right).
	\end{split}
	\end{equation*}
	Hence, we have
	\begin{equation*}
	\begin{split}
	a_{\text{kn}} &\leq 2^{-k_0} \sum_{x_1, \dots, x_{k_0}}\left\{\max(R_{k_0}(x),1) + \vphantom{\int_{\max(R_{k_0}(x),1)}^{\infty}}\right.\\
	&\qquad \quad \left. \int_{\max(R_{k_0}(x),1)}^{\infty}\exp\left(-\Theta\left( \frac{t}{1+t}\right)\cdot \left(\frac{t}{1+t}(1+k_0) - P_{k_0}(x)\right)\right)dt \right\}.
	\end{split}
	\label{interim}
	\end{equation*}
	Recall that that $\Theta(t/(1+t))$ is defined as the root of the nonlinear equation 
	\begin{equation*}
	\exp(\theta/(1+t)) + \exp(-\theta t/(1+t)) = 2.
	\end{equation*}
	Define
	\begin{equation*}
	\theta_t = \frac{1}{1+t}\Theta\left(\frac{t}{1+t}\right).
	\end{equation*}
	Then, $\theta_t$ satisfies
	\begin{equation}
	e^{\theta_t} + e^{-t\theta_t} = 2.
	\label{theta_eq}	
	\end{equation}
	In terms of $\theta_t$, we get
	\begin{equation}
	\begin{split}
	a_{\text{kn}}&\leq 2^{-k_0} \sum_{x_1, \dots, x_{k_0}}\left\{\max(R_{k_0}(x),1) + \vphantom{\int_{\max(R_{k_0}(x),1)}^{\infty}}\right.\\
	&\qquad \quad \left. \int_{\max(R_{k_0}(x),1)}^{\infty}\exp\left((1+k_0 - P_{k_0}(x))\log(2 - e^{\theta_t}) + \theta_t P_{k_0}(x)\right) dt \right\}.
	\end{split}
	\label{interim_2}
	\end{equation}
	Solving (\ref{theta_eq}) for $t$ in terms of $\theta_t$, we get
	\begin{equation*}
	t = \frac{-\log(2 - e^{\theta_t})}{\theta_t}.
	\end{equation*}
	At this point, we change variables in (\ref{interim_2}). Taking a derivative, the transformation is
	\begin{equation*}
	dt = \frac{\frac{e^{\theta_t}}{2 - e^{\theta_t}}\theta_t + \log(2 - e^{\theta_t})}{\theta_t^2}d\theta_t.
	\end{equation*}
	This yields
	\begin{equation}
	\begin{split}
	&\int_{\max(R_{k_0}(x),1)}^{\infty}\exp\left((1+k_0 - P_{k_0}(x))\log(2 - e^{\theta_t}) + \theta_t P_{k_0}(x)\right) dt \\
	&\quad= \int^{\log 2}_{\theta_{\max(R_{k_0}(x),1)}} \exp\left((1+k_0 - P_{k_0}(x))\log(2 - e^{\theta_t}) + \theta_t P_{k_0}(x)\right)\frac{\frac{e^{\theta_t}}{2 - e^{\theta_t}}\theta_t + \log(2 - e^{\theta_t})}{\theta_t^2}d\theta_t \\
	&\quad= \int^{\log 2}_{\theta_{\max(R_{k_0}(x),1)}} \exp\left((k_0 - P_{k_0}(x))\log(2 - e^{\theta_t}) + \theta_t P_{k_0}(x)\right)\frac{e^{\theta_t}\theta_t + (2 - e^{\theta_t})\log(2 - e^{\theta_t})}{\theta_t^2}d\theta_t \\
	\end{split}
	\label{change_of_variables}
	\end{equation}
	Putting together (\ref{interim_2}) and (\ref{change_of_variables}), we can compute the desired bound using only numerical integration over bounded intervals.
	
	\begin{figure}[h]
		\centering
		\includegraphics[width = \textwidth]{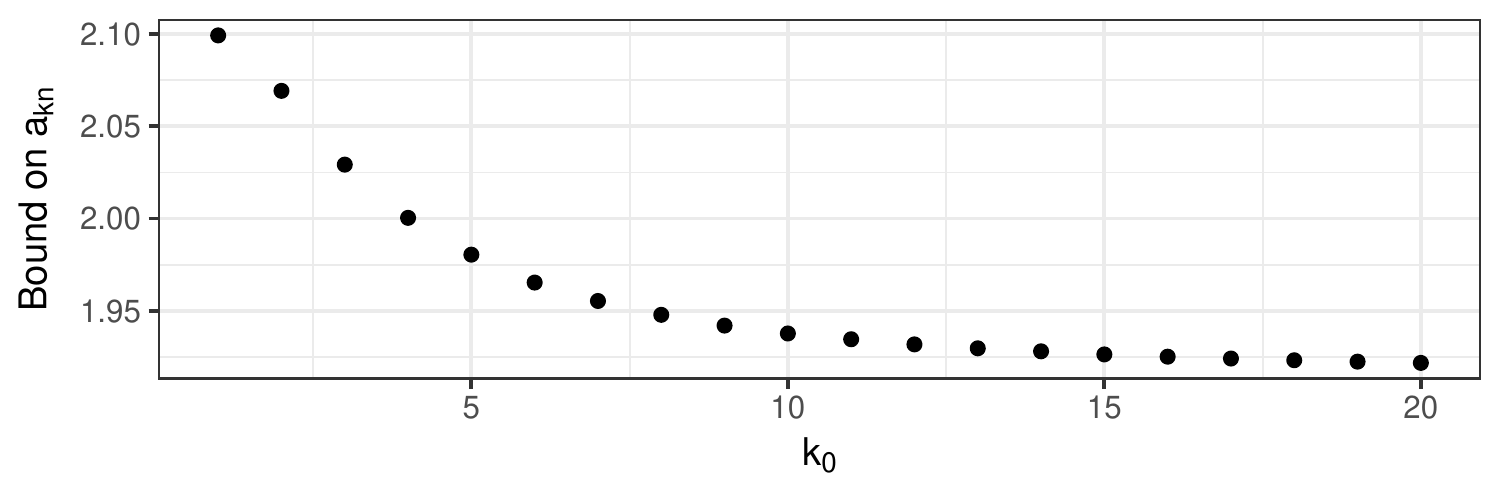}
		\caption{Upper bounds on $a_{\text{kn}}$.}
		\label{upper_bounds}
	\end{figure}
	
	Figure \ref{upper_bounds} shows the bounds we obtain from this approach for $k_0 = 1, \dots, 20$. For instance, the bound obtained using $k_0 = 20$ is 1.922, which proves the lemma.
\end{proof}

\section{FDR control for generalized p-filter} \label{sec:pf_appendix}

For the sake of readability, the notation in this self-contained section departs somewhat from the notation in the main text of the paper. In particular, $n$ will replace $G$ to denote the number of p-values, and in this section does not represent the sample size.

\begin{lemma}\label{bound_pfilter}
	Let $F_n(t)$ be the empirical CDF of $n$ independent uniform random variables. Then
	\begin{equation}
	a_{\text{\em pf}}(n) = \mathbb E\left[\sup_{t \in [0, 1]}\frac{F_n(t)}{n^{-1} + t}\right] \leq 1 + \exp\left(n^{-1/2} + \frac12 n^{-1}\right)0.42 + en^{-1/4}.
	\end{equation}
\end{lemma}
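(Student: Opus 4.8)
The plan is to bound $a_{\text{pf}}(n) = \mathbb E\left[\sup_{t \in [0,1]} F_n(t)/(n^{-1}+t)\right]$ by first writing the expectation as an integral of a tail probability, exactly as was done for $a_{\text{kn}}$ in (\ref{expectation_as_integral}): since $F_n(t)/(n^{-1}+t) \to 1$ pointwise for $t$ bounded away from $0$ but the supremum is genuinely larger, I would split off the trivial contribution of $1$ and write $a_{\text{pf}}(n) = 1 + \int_1^\infty \mathbb P\left[\sup_t F_n(t)/(n^{-1}+t) \geq \lambda\right]d\lambda$. The event $\{\sup_t F_n(t)/(n^{-1}+t) \geq \lambda\}$ says that the empirical CDF crosses the line $u \mapsto \lambda(n^{-1}+u)$ somewhere, i.e. that for some $t$ we have $n F_n(t) \geq \lambda(1+nt)$. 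So the core task is a uniform deviation bound for the empirical process above a linear boundary — the direct analogue of Lemma \ref{random_walk_linear_boundary}, but now for the uniform empirical process rather than a symmetric random walk.

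The key step will be an exponential supermartingale (Doob) argument adapted to the empirical process. Ordering the uniform samples and tracking $k \mapsto n F_n(U_{(k)})$ is awkward because the increments of $t$ are themselves random, so instead I would use the classical device of embedding: represent the order statistics via $U_{(k)} = \Gamma_k/\Gamma_{n+1}$ where $\Gamma_k = E_1 + \cdots + E_k$ is a sum of i.i.d. Exp(1) variables, or work directly with a Poisson process of rate $n$ on $[0,\infty)$ and condition. Along such a representation, the crossing event becomes a first-passage event for a process with i.i.d. (exponential) increments across a linear boundary, and one can pick $\theta>0$ so that $\exp(\theta(k - (1+nt)\lambda/\ldots))$ — more precisely an exponential tilt in the Poisson/Gamma variables — is a supermartingale; Doob's maximal inequality for nonnegative supermartingales then gives a bound of the form $\exp(-c(\lambda)\cdot(\text{something}))$. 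Because the boundary passes very close to the origin (the $n^{-1}$ offset), the bound will not decay in $n$ uniformly; this is the source of the $en^{-1/4}$ and $\exp(n^{-1/2}+\tfrac12 n^{-1})0.42$ correction terms, which should emerge from carefully handling the regime $t = O(1/n)$ (few samples, discrete fluctuations dominate) separately from the regime $t \gg 1/n$ (where the supermartingale bound is effective). Concretely I expect to split the supremum at some threshold like $t_0 = n^{-3/4}$: for $t \leq t_0$ bound $F_n(t)/(n^{-1}+t) \leq n F_n(t_0)$ crudely and control $\mathbb E[n F_n(t_0)] = n t_0$ plus a fluctuation term via a Poisson/binomial tail, producing the $en^{-1/4}$ piece; for $t > t_0$ use the exponential bound and integrate, which after a change of variables $\lambda \leftrightarrow \theta$ (as in (\ref{change_of_variables})) yields a constant converging to something near $0.42$, inflated by the $\exp(n^{-1/2}+\tfrac12 n^{-1})$ factor coming from replacing $\Gamma_{n+1}/n$ by $1$.

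The main obstacle is the near-origin behavior: unlike the random-walk case, where the boundary $\frac{t}{1+t}k + \frac{t}{1+t}$ has a comfortable slope in $(\tfrac12,1)$ for all $t>1$ and the Doob bound alone suffices, here the relevant boundary for small $t$ has the empirical CDF itself of size $O(1/n)$, so the supermartingale inequality degenerates and must be supplemented by a direct combinatorial/Poisson tail estimate. Getting the constants to assemble into exactly $1 + \exp(n^{-1/2}+\tfrac12 n^{-1})0.42 + en^{-1/4}$ — in particular pinning down the $0.42$ via numerical integration of the transformed integral, paralleling the $k_0$-step refinement in Lemma \ref{bound_knockoffs} — will require care but should be routine once the two-regime decomposition is set up; the genuinely delicate part is choosing the split point $t_0$ so that both pieces are simultaneously small, and verifying that the exponential-tilt supermartingale argument survives the Gamma-normalization $U_{(k)} = \Gamma_k/\Gamma_{n+1}$ with only the claimed $n$-dependent inflation.
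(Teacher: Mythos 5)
Your overall skeleton matches the paper's: write the expectation as $1+\int_1^\infty \mathbb P[\sup_t F_n(t)/(n^{-1}+t)\geq x]\,dx$, pass to a rate-$n$ Poisson process $N_t$ conditioned on $N_1=n$, split the supremum at a threshold $t_0$, and use exponential-martingale/Doob bounds. But the way you allocate the two regimes contains a genuine gap. You propose to handle $t\leq t_0$ by the crude bound $F_n(t)/(n^{-1}+t)\leq nF_n(t_0)$ and then control $\mathbb E[nF_n(t_0)]=nt_0$; with your choice $t_0=n^{-3/4}$ this equals $n^{1/4}$, which \emph{diverges}, and no choice of $t_0$ rescues it: for $t_0\gg 1/n$ the crude bound blows up, while for $t_0\lesssim 1/n$ the remaining interval still contains the near-origin region where the supremum is actually attained (typically at the first few order statistics, where relative fluctuations of $F_n$ are $O(1)$). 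The small-$t$ regime is the dominant contribution, not a negligible one, and it is precisely where the $\exp(n^{-1/2}+\tfrac12 n^{-1})\cdot 0.42$ term comes from: the paper takes $t_0=n^{-1/2}$, changes measure on $\mathcal F_{t_0}$ from the conditioned process $\mu$ to the unconditioned Poisson process $\nu$ (showing $d\mu/d\nu\leq \exp(n^{-1/2}+\tfrac12 n^{-1})$ there via a direct binomial-versus-Poisson computation), and then applies Doob's maximal inequality to the exponential martingale $\exp(\gamma_x N_t - nt(e^{\gamma_x}-1))$ under $\nu$, yielding $\exp(-\gamma_x x)$ with $e^{\gamma}=1+\gamma x$ and $\int_1^\infty e^{-\gamma_x x}\,dx = 0.42$.

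Your treatment of the $t>t_0$ regime also needs a different tool than the one you sketch. The ``inflation from replacing $\Gamma_{n+1}/n$ by $1$'' cannot be implemented as a change of measure over all of $[t_0,1]$: on $\mathcal F_1$ the likelihood ratio between the conditioned and unconditioned processes is of order $\sqrt{2\pi n}$ (since $\nu[N_1=n]\approx (2\pi n)^{-1/2}$), which would destroy the constant. The paper instead stays under the conditioned measure $\mu$ and exploits that $N_t/(nt)$ is a \emph{backwards} martingale there, so $\exp(\alpha N_t/(nt))$ is a backwards submartingale; the maximal inequality with $\alpha=n^{1/4}$ gives the tail $\exp(1-n^{1/4}(x-1))$ and hence the $en^{-1/4}$ term --- so your two correction terms are attached to the wrong regimes. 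To repair your proposal you would need (i) a genuine deviation bound, not a crude supremum, near the origin, with the change of measure confined to $\mathcal F_{t_0}$ where its cost is $1+o(1)$, and (ii) the backwards-martingale (or an equivalent Gamma-embedding) argument for the bulk.
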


Note that $N_t = nF_n(t) \sim \mu$, where $\mu$ is the distribution of a Poisson process with rate $n$, conditioned on $N_1 = n$. Hence, we may equivalently bound the quantity
\begin{equation*}
\mathbb E_\mu\left[\sup_{t \in [0, 1]}\frac{N_t}{1 + nt}\right].
\end{equation*}
We first state two auxiliary lemmas, and show how the main result follows.
\begin{lemma} \label{I}
	Let $t_0 = n^{-1/2}$. Then, 
	\begin{equation*}
	\mu\left[\sup_{t \in [0, t_0]}\frac{N_t}{1 + nt} \geq x\right] \leq \exp\left(n^{-1/2} + \frac12 n^{-1}\right)\exp(-\gamma_x x),
	\end{equation*}
	where $\gamma_x$ is the positive root of
	\begin{equation*}
	e^{\gamma} = 1 + \gamma x.
	\end{equation*}
	
\end{lemma}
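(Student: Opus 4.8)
The plan is to bound the probability by changing measure from the conditioned Poisson law $\mu$ to the law $\mathbb{P}$ of an \emph{unconditioned} rate-$n$ Poisson process $(N_t)_{t\in[0,1]}$, and then running a boundary-crossing martingale argument in the spirit of Lemma~\ref{random_walk_linear_boundary}. First, $x\le 1$ is trivial (then $\gamma_x=0$ and the asserted right-hand side is $\ge 1$), so assume $x>1$. Since under $\mu$ the process $N_t$ is nondecreasing and right-continuous with unit jumps (it is the count process $nF_n$), and the boundary $x(1+nt)$ is continuous and strictly increasing, the event $\{\sup_{t\in[0,t_0]}N_t/(1+nt)\ge x\}$ coincides with $\{\tau\le t_0\}$, where $\tau=\inf\{t:N_t\ge x(1+nt)\}$ is a stopping time; moreover on this event the overshoot is at most one jump, so $N_\tau\in[\,x(1+n\tau),\,x(1+n\tau)+1)$.

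Next I would introduce two $\mathbb{P}$-martingales. For $t<1$ the restriction of $\mu$ to $\mathcal{F}_t=\sigma(N_s:s\le t)$ is absolutely continuous with respect to $\mathbb{P}$, with density
\[
L_t=\frac{\mathbb{P}(N_1=n\mid\mathcal{F}_t)}{\mathbb{P}(N_1=n)}=\frac{n!}{(n-N_t)!\,n^{N_t}}\,e^{nt}(1-t)^{n-N_t}\,\mathbf{1}(N_t\le n),
\]
which is a $\mathbb{P}$-martingale (it comes from $N_1-N_t\sim\mathrm{Poisson}(n(1-t))$ being independent of $\mathcal{F}_t$, and $\mathbb{P}(N_1=n)=e^{-n}n^n/n!$). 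Taking $\gamma=\gamma_x$, the process $\mathcal{E}_t=\exp(\gamma N_t-nt(e^\gamma-1))$ is a nonnegative $\mathbb{P}$-martingale with $\mathcal{E}_0=1$, and since $e^\gamma-1=\gamma x$ it equals $\exp(\gamma(N_t-ntx))$; the whole point of the defining equation $e^\gamma=1+\gamma x$ is that this kills the time-dependent part of the boundary, so that $N_\tau\ge x+xn\tau$ forces $\mathcal{E}_\tau\ge e^{\gamma x}$ on $\{\tau\le t_0\}$.

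The core computation then proceeds as follows. Because $\{\tau\le t_0\}\in\mathcal{F}_{t_0}$ and $\tau\wedge t_0\le t_0<1$, optional stopping for $L$ gives
\[
\mu(\tau\le t_0)=\mathbb{E}_{\mathbb{P}}\big[\mathbf{1}(\tau\le t_0)L_{t_0}\big]=\mathbb{E}_{\mathbb{P}}\big[\mathbf{1}(\tau\le t_0)L_\tau\big].
\]
On $\{\tau\le t_0\}$ write $m=N_\tau$ (with $m\le n$, as otherwise $L_\tau=0$). From $m\ge x+xn\tau$ one gets $e^{-\gamma(m-n\tau x)}\le e^{-\gamma x}$, and combining this with $\prod_{j=0}^{m-1}(1-j/n)\le e^{-m(m-1)/(2n)}$ and $\log(1-\tau)\le-\tau-\tfrac12\tau^2$ yields
\[
\frac{L_\tau}{\mathcal{E}_\tau}\le e^{-\gamma x}\exp\!\Big(-\frac{m(m-1)}{2n}+m\tau-\frac{(n-m)\tau^2}{2}\Big).
\]
A short calculus estimate shows that the exponent, over $m\ge x(1+n\tau)$ and $\tau\in[0,n^{-1/2}]$, never exceeds $n^{-1/2}+\tfrac12 n^{-1}$: it is concave in $m$ with a negative derivative at $m=x(1+n\tau)$ (this uses $x>1$), hence decreasing in $m$ on that range, so it suffices to bound it at $m=x(1+n\tau)$; the resulting function of $(x,\tau)$ is nonincreasing in $x$ for $x\ge 1$, so it suffices to take $x=1$; and with $x=1$, $m=1+n\tau$ it reduces to $\tfrac{u}{2n}+\tfrac{u^2(1+u)}{2n^2}$ with $u=n\tau\le\sqrt n$, which is increasing in $u$ and equals $n^{-1/2}+\tfrac12 n^{-1}$ at $u=\sqrt n$. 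Hence $L_\tau\le e^{-\gamma_x x}\exp(n^{-1/2}+\tfrac12 n^{-1})\mathcal{E}_\tau$ on $\{\tau\le t_0\}$, and one final optional-stopping bound $\mathbb{E}_{\mathbb{P}}[\mathbf{1}(\tau\le t_0)\mathcal{E}_\tau]\le\mathbb{E}_{\mathbb{P}}[\mathcal{E}_{\tau\wedge t_0}]=1$ closes the argument.

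I expect the main obstacle to be exactly that last calculus estimate: the bound is attained (equality as $x\downarrow 1$, $\tau=n^{-1/2}$), so neither of the elementary inequalities $\prod(1-j/n)\le e^{-m(m-1)/(2n)}$ and $\log(1-\tau)\le-\tau-\tfrac12\tau^2$ can be slack, and the two-variable optimization has to be carried out carefully to land on the precise constant $n^{-1/2}+\tfrac12 n^{-1}$. A secondary, more structural point is the decision to work against the unconditioned Poisson reference measure: this keeps $\mathcal{E}_t$ nonnegative for all $t\in[0,t_0]$ and all $x$, whereas the exponential martingale built directly for the conditioned (empirical) process, $e^{\gamma N_t}\big((1-e^\gamma t)/(1-t)\big)^{n-N_t}$, changes sign once $t>e^{-\gamma}$ and would invalidate the maximal-inequality step precisely when $x$ is large.
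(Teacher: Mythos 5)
Your proof is correct and follows essentially the same route as the paper's: a change of measure from the conditioned law $\mu$ to the unconditioned Poisson law, the exponential martingale $\exp(\gamma_x N_t - nt(e^{\gamma_x}-1))$ with the same defining equation for $\gamma_x$, and the same elementary inequalities $\prod_{j<m}(1-j/n)\le e^{-m(m-1)/(2n)}$ and $\log(1-t)\le -t-\tfrac12 t^2$ producing the identical constant $\exp\left(n^{-1/2}+\tfrac12 n^{-1}\right)$. The only organizational difference is that you evaluate the likelihood ratio at the crossing time $\tau$ on the event $\{\tau\le t_0\}$ via optional stopping (so your maximization of the exponent is over the constrained range $m\ge x(1+n\tau)$), whereas the paper bounds $\operatorname{ess\,sup}\, d\mu/d\nu$ uniformly over $\mathcal F_{t_0}$ by an unconstrained maximization over the state $y$ together with a monotone class argument, and then applies Doob's maximal inequality under the unconditioned measure as a separate step.
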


\begin{lemma} \label{II}
	Let $t_0 = n^{-1/2}$. Then, 
	\begin{equation*}
	\mu\left[\sup_{t \in [t_0, 1]}\frac{N_t}{1 + nt} \geq x\right] \leq \exp(1 - n^{1/4}(x-1)).
	\end{equation*}
\end{lemma}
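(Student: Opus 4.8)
The plan is to realize the event in Lemma \ref{II} as a first‑crossing event for an exponential martingale of the uniform empirical process, run under $\mu$ itself, and to cash in on the fact that restricting to $t\ge t_0=n^{-1/2}$ forces the relevant counting variables to have mean at least $n^{1/2}$, which is exactly what produces the $n^{1/2}$ in a large‑deviation exponent and, after simplification, the stated $n^{1/4}(x-1)$.

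First I would make two harmless reductions. Under $\mu$ the process $N_t=nF_n(t)$ is the counting process of $n$ i.i.d.\ $\mathrm{Unif}[0,1]$ variables, so $N_t\sim\mathrm{Bin}(n,t)$ with nondecreasing right‑continuous paths. If $n^{1/4}(x-1)\le 1$ the claimed bound is $\ge 1$, so I may assume $x>1+n^{-1/4}$. Since $N_t\le n$, for $t\ge 1/x$ we have $N_t/(1+nt)\le n/(nt)\le 1/t\le x$, so only $t\in[t_0,1/x]$ can witness $N_t\ge x(1+nt)$; moreover the event forces $N_t\ge x(1+nt_0)$, which is impossible once $x$ is so large that $x(1+nt_0)>n$. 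Hence it suffices to bound $\mu\big[\exists\,t\in[t_0,1/x]:N_t\ge x(1+nt)\big]$.

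Next I would fix $\theta=\log x>0$ and introduce
\[
\hat Z_t=e^{\theta N_t}\,g(t)^{\,n-N_t},\qquad g(t)=\frac{1-tx}{1-t},
\]
observing that $g$ decreases from $1$ to $0$ on $[0,1/x]$, so $\hat Z_t\ge 0$ there. Using that conditionally on $\mathcal G_t=\sigma(N_s:s\le t)$ one has $N_s-N_t\sim\mathrm{Bin}\!\big(n-N_t,\tfrac{s-t}{1-t}\big)$, a direct computation (the factors telescope: $\mathbb E_\mu[(e^\theta/g(s))^{\,N_s-N_t}\mid\mathcal G_t]=(g(t)/g(s))^{n-N_t}$) identifies $(\hat Z_t)_{t\in[0,1/x]}$ as a nonnegative $\mu$‑martingale with $\hat Z_0=g(0)^n=1$; it is the empirical‑process analogue of the Poisson martingale $e^{\theta N_t-nt(e^\theta-1)}$, and I would insist on working with it directly, because Poissonizing and dividing by $\mathbb P[N_1=n]$ would cost a factor of order $\sqrt n$, which is fatal for the target exponent. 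On the crossing event $\{N_t\ge x(1+nt)\}$, since $\theta>0$ and $0\le g(t)\le 1$,
\[
\hat Z_t\ \ge\ \exp\!\Big(\theta\,x(1+nt)+\big(n-x(1+nt)\big)\log g(t)\Big)=:e^{\psi_t},
\]
and a short estimate shows $\psi_t\ge n^{1/2}\rho(x)$ throughout $[t_0,1/x]$, the bottleneck being $t=t_0$, where the count has mean $nt_0=n^{1/2}$ and $\rho(x):=x\log x-x+1$ is the usual Poisson large‑deviation rate function. Ville's (Doob's) maximal inequality for $\hat Z$ then gives
\[
\mu\Big[\sup_{t\in[t_0,1/x]}\tfrac{N_t}{1+nt}\ge x\Big]\ \le\ \mu\big[\sup_t\hat Z_t\ge e^{\psi_{t_0}}\big]\ \le\ \mathbb E_\mu[\hat Z_0]\,e^{-\psi_{t_0}}\ \le\ e^{-n^{1/2}\rho(x)},
\]
and it remains to verify $n^{1/2}\rho(x)+1\ge n^{1/4}(x-1)$ for the admissible $x$: for $x-1$ of bounded size one uses a quadratic lower bound $\rho(x)\ge c(x-1)^2$ (so that $cn^{1/2}(x-1)^2-n^{1/4}(x-1)+1\ge 0$ by a discriminant check), and for larger $x-1$ a linear bound $\rho(x)\ge c'(x-1)$ together with the ceiling $x\lesssim n^{1/2}$. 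This yields $\mu[\cdots]\le e^{\,1-n^{1/4}(x-1)}$, and the same scheme with a Poisson‑type martingale on $[0,t_0]$ gives Lemma \ref{I}.

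The step I expect to be the crux is constructing and exploiting the right martingale: one must stay under the uniform‑order‑statistics law rather than Poissonize (the Stirling factor $\sqrt n$ would swamp the $n^{1/4}$ in the exponent), one must check that $\hat Z$ is legitimately nonnegative on the truncated interval $[t_0,1/x]$, and one must show the crossing‑event bound $t\mapsto\psi_t$ is minimized at $t=t_0$. The payoff of this slightly fussy setup is that the maximal inequality absorbs the supremum over the continuum of thresholds $t$ with no loss; a direct union bound over a geometric grid of scales (equivalently over the $\approx n^{1/2}$ relevant order statistics) would instead cost an extra factor of order $(x-1)^{-2}$ — an additive $\tfrac12\log n$ in the exponent — which is incompatible with the clean bound being proved.
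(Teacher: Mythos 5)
Your proof is a genuinely different route from the one in the paper, though both ultimately rest on a maximal inequality for a martingale living under $\mu$ itself. The paper's argument is much shorter: it uses the classical fact that $N_t/(nt)$ is a \emph{backwards} martingale under the empirical (conditioned-Poisson) law, so $\exp(\alpha N_t/(nt))$ is a backwards submartingale, bounds $N_t/(1+nt)\le N_t/(nt)$, and applies the maximal inequality over $[t_0,1]$ to reduce everything to the binomial moment generating function $\mathbb E_\mu[\exp(\alpha N_{t_0}/(nt_0))]$ at the single endpoint $t_0$, with the fixed tilt $\alpha=n^{1/4}$; five lines of algebra then give exactly $\exp(1-n^{1/4}(x-1))$. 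You instead build the forward exponential martingale $\hat Z_t=e^{\theta N_t}g(t)^{n-N_t}$ exactly adapted to the linear boundary $x(1+nt)$ (your telescoping identity $g(s)(1-p)+pe^\theta=g(t)$ does check out, so $\hat Z$ is indeed a nonnegative $\mu$-martingale with $\hat Z_0=1$ on $[0,1/x]$), and you get the stronger Chernoff-type bound $e^{-n^{1/2}\rho(x)}$ before relaxing to the stated one. What your approach buys is a sharper exponent near $x=1$ (order $n^{1/2}(x-1)^2$ rather than $n^{1/4}(x-1)$) and a conceptually cleaner "right" martingale; what it costs is the bookkeeping you currently wave at: you must actually prove that $\psi_t$ is minimized at $t=t_0$ on $[t_0,1/x]$, that $\psi_{t_0}\ge n^{1/2}\rho(x)$ survives the second-order correction $-n t_0^2x^2/2$ (it does, since the competing positive term $x(x-1)$ plus $x\log x$ dominates $\tfrac12(x^2-1)$, but this needs to be written out), and that the final comparison $n^{1/2}\rho(x)+1\ge n^{1/4}(x-1)$ holds for all admissible $n$ and $x$ (your linear-regime bound needs $n$ not too small, mirroring the paper's own $n\ge5$ restriction). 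One small correction: your warning that Poissonization would cost a $\sqrt n$ factor is aimed at a straw man here --- the paper does not Poissonize for this lemma (it changes measure to the unconditional Poisson law only for Lemma \ref{I} on $[0,t_0]$, where the Radon--Nikodym derivative is bounded by $C_n=\exp(n^{-1/2}+\tfrac12 n^{-1})\approx 1$, not $\sqrt n$).
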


\begin{proof}[Proof of Lemma \ref{bound_pfilter}]
	
	Taking $t_0 = n^{-1/2}$ and using the results of Lemmas \ref{I} and \ref{II}, we find 
	\begin{equation*}
	\begin{split}
	a_{\text{pf}}(n) &= \mathbb E_\mu\left[\sup_{t \in [0, 1]}\frac{N_t}{1 + nt}\right] \\
	&= \int_0^\infty \mu\left[\sup_{t \in [0, 1]}\frac{N_t}{1 + nt} \geq x\right]dx \\
	&\leq 1 + \int_1^\infty \mu\left[\sup_{t \in [0, 1]}\frac{N_t}{1 + nt} \geq x\right]dx \\
	&\leq 1 + \int_1^\infty \mu\left[\sup_{t \in [0, t_0]}\frac{N_t}{1 + nt} \geq x\right]dx + \int_1^\infty \mu\left[\sup_{t \in [t_0, 1]}\frac{N_t}{1 + nt} \geq x\right]dx \\
	&\leq 1 + \int_1^\infty \exp\left(n^{-1/2} + \frac12 n^{-1}\right)\exp(-\gamma_x x) dx + \int_1^\infty  \exp(1 - n^{1/4}(x-1)) dx \\
	&= 1 + \exp\left(n^{-1/2} + \frac12 n^{-1}\right)0.42 + en^{-1/4}.
	\end{split}
	\end{equation*}
	Here, we use the fact (obtained by numerical integration), that $\int_1^\infty \exp(-\gamma_x x)dx = 0.42$.
\end{proof}
Now, we prove the two auxiliary lemmas.
\begin{proof}[Proof of Lemma \ref{I}]	
	Let $\nu$ be the unconditional distribution of a Poisson process with rate $n$. The idea is that $\nu$ is easier to work with than $\mu$ due to the martingale properties of the unconditional Poisson process, and $\nu$ and $\mu$ are close on $[0, t_0]$. Hence, we change measure from $\mu$ to $\nu$:
	\begin{equation}
	\mu\left[\sup_{t \in [0, t_0]}\frac{N_t}{1 + nt} \geq x\right] = \mathbb E_\nu\left[\frac{d\mu}{d\nu}I\left(\sup_{t \in [0, t_0]}\frac{N_t}{1 + nt} \geq x\right)\right].
	\label{change_of_measure}
	\end{equation}
	Our first claim is that
	\begin{equation}
	\text{ess sup}_{\nu}\ \frac{d\mu}{d\nu} \leq C_n = \exp\left(n^{-1/2} + \frac12 n^{-1}\right),
	\label{ess_sup}
	\end{equation}
	where $\mu$ and $\nu$ are viewed as measures on the space of stochastic processes on $[0, t_0]$ (that are right continuous with left limits).	Let $\mathcal F_t = \sigma(N_s: s \leq t)$. It suffices to show that 
	\begin{equation}
	\mu[A] \leq C_n \nu[A] \quad \text{for all } A \in \mathcal F_{t_0}.
	\label{probability_bound}
	\end{equation}
	To show this, let
	\begin{equation*}
	\mathcal C = \{A \in \mathcal F_{t_0}: \mu[A] \leq C_n \nu[A]\},
	\end{equation*}
	and let
	\begin{equation*}
	\mathcal R = \{\{N_{t_1} \in B_1, \dots, N_{t_k} \in B_k\}: 0 \leq t_1 < \cdots < t_k \leq t_0, k \geq 1, B_1, \dots, B_k \subset \mathbb N \}.
	\end{equation*}
	be the collection of finite-dimensional rectangles, which generates $\mathcal F_{t_0}$. Let $\mathcal A$ be the algebra generated by $\mathcal R$, i.e. the collection of finite disjoint unions of sets in $\mathcal S$. We claim that it suffices to show that $\mathcal R \subset \mathcal C$. Indeed, if $\mathcal R \subset \mathcal C$, then by additivity it is clear that $\mathcal A \subset \mathcal C$. But since $\mathcal A$ is an algebra and $\mathcal C$ is a monotone class (due to the continuity of measures), the monotone class theorem implies that $\mathcal F_{t_0} = \sigma(\mathcal A) \subset \mathcal C$, from which it follows that $\mathcal C = \mathcal F_{t_0}$, which is the statement (\ref{probability_bound}). Finally, to show that $\mathcal R \subset \mathcal C$, by countable additivity it suffices to show that $\{N_{t_1} = y_1, \dots, N_{t_k} = y_k\} \in \mathcal C$ for all $y_1, \dots, y_k \in \mathbb N$. Note that
	\begin{equation*}
	\begin{split}
	\frac{\mu[N_{t_1} = y_1, \dots, N_{t_k} = y_k]}{\nu[N_{t_1} = y_1, \dots, N_{t_k} = y_k]} &= \frac{\mu[N_{t_k} = y_k]\mu[N_{t_1} = y_1, \dots, N_{t_{k-1}} = y_{k-1} | N_{t_k} = y_k]}{\nu[N_{t_k} = y_k]\nu[N_{t_1} = y_1, \dots, N_{t_{k-1}} = y_{k-1} | N_{t_k} = y_k]} \\
	&= \frac{\mu[N_{t_k} = y_k]}{\nu[N_{t_k} = y_k]},
	\end{split}
	\end{equation*}
	where the conditional distributions in the numerator and denominator cancel due to the Markov property of Poisson processes. Hence, it suffices to show that for all $t \leq t_0$ and all $y \in \mathbb N$,
	\begin{equation*}
	\frac{\mu[N_t = y]}{\nu[N_t = y]} \leq C_n.
	\end{equation*}
	Note that $\mu[N_t = y]= 0$ for $y > n$. For $y \leq n$, we write
	\begin{equation*}
	\begin{split}
	\log\left(\frac{\mu[N_t = y]}{\nu[N_t = y]}\right) &= \log\left(\frac{{n \choose y}t^{y}(1 - t)^{n - y}}{e^{-nt}\frac{(nt)^{y}}{y!}}\right) \\
	&= \log\left(\frac{n!}{n^{y}(n - y)!}(1-t)^{n - y}e^{nt}\right) \\
	&=\log\left(\prod_{i = 1}^{y - 1}\left(1 - \frac i n\right)\right) + (n - y)\log(1 - t) + nt \\
	&= \sum_{i = 1}^{y- 1}\log\left(1 - \frac i n\right) + (n - y)\log(1 - t) + nt \\
	&\leq -\sum_{i = 1}^{y - 1}\frac{i}{n} -(n - y)\left(t + \frac12 t^2\right) + nt \\
	&= -\frac1{2n}y (y- 1) -(n - y)\left(t + \frac12 t^2\right) + nt \\
	&= -\frac{1}{2n}y^2 + \left(\frac{1}{2n} + t + \frac12 t^2\right)y -\frac12 nt^2. 
	\end{split}
	\end{equation*}
	The value $y_* = \frac12 + nt + \frac12 nt^2$ maximizes the above expression, so we get
	\begin{equation*}
	\begin{split}
	\log\left(\frac{\mu[N_t = y]}{\nu[N_t = y]}\right) &\leq -\frac{1}{2n}y_*^2 + \left(\frac{1}{2n} + t + \frac12 t^2\right)y_* -\frac12 nt^2 \\
	&= \frac n 2 \left(\frac 1 {2n} + t + \frac12 t^2\right)^2 - \frac12 nt^2 \\
	&= \frac{1}{8n} + \frac{nt^4}{8} + \frac t 2 + \frac{t^2}{4} + \frac{t^3 n}{2} \\
	&\leq \frac{1}{8n} + \frac{nt_0^4}{8} + \frac {t_0} 2 + \frac{t_0^2}{4} + \frac{t_0^3 n}{2} \\
	&= \frac{1}{8n} + \frac{1}{8n} + \frac 1 2 n^{-1/2} + \frac{1}{4n} + \frac{1}{2}n^{-1/2} = n^{-1/2} + \frac12 n^{-1}.
	\end{split}
	\end{equation*}
	This proves statement (\ref{ess_sup}). Together with (\ref{change_of_measure}), this implies that
	\begin{equation}
	\mu\left[\sup_{t \in [0, t_0]}\frac{N_t}{1 + nt} \geq x\right] \leq  \exp\left(n^{-1/2} + \frac12 n^{-1}\right)\nu\left[\sup_{t \in [0, t_0]}\frac{N_t}{1 + nt} \geq x\right].
	\label{intermediate}
	\end{equation}
	Hence, we can now compute the probability under the measure $\nu$. Let 
	\begin{equation*}
	Z_t = \exp(\gamma_x N_t - nt(e^{\gamma_x} - 1)),
	\end{equation*}
	where $\gamma_x$ is defined as in the statement of Lemma \ref{I}. Under $\nu$, this is an exponential martingale associated with the Poisson process $N_t$. We have
	\begin{equation*}
	\begin{split}
	\nu\left[\sup_{t \in [0, t_0]}\frac{N_t}{1 + nt} \geq x\right] &\leq \nu\left[\sup_{t \geq 0} \frac{N_t}{1 + nt} \geq x\right] \\
	&= \nu\left[\sup_{t \geq 0}(N_t - (1+nt)x) \geq 0\right] \\
	&= \nu\left[\sup_{t \geq 0}\left(\exp(\gamma_x N_t - \gamma_x (1+nt)x)\right) \geq 1\right] \\
	&= \nu\left[\sup_{t \geq 0}\left(\exp(\gamma_x N_t - nt(e^{\gamma_x}-1) - \gamma_x x)\right) \geq 1\right] \\
	&= \nu\left[\sup_{t \geq 0} Z_t \geq \exp(\gamma_x x)\right] \\
	&\leq \exp(-\gamma_x x)\mathbb E[Z_0] = \exp(-\gamma_x x),
	\end{split}
	\end{equation*}
	where the last inequality follows from the maximal inequality for martingales.
	Putting this together with (\ref{intermediate}) completes the proof of the lemma.
\end{proof}

\begin{proof}[Proof of Lemma \ref{II}]
	
	For any $\alpha > 0$, we have
	\begin{equation*}
	\mu\left[\sup_{t \in [t_0, 1]}\frac{N_t}{1 + nt} \geq x\right] \leq \mu\left[\sup_{t \in [t_0, 1]}\frac{N_t}{nt} \geq x\right] \leq \mu\left[\sup_{t \in [t_0, 1]}\exp\left(\alpha\frac{N_t}{nt}\right) \geq \exp(\alpha x)\right].
	\end{equation*}
	Now, since $N_t/nt$ is a backwards martingale under $\mu$, it follows that $\exp(\alpha N_t/nt)$ is a backwards submartingale, so from the maximal inequality we have
	\begin{equation*}
	\begin{split}
	\mu\left[\sup_{t \in [t_0, 1]}\exp\left(\alpha\frac{N_t}{nt}\right)\geq \exp(\alpha x)\right]
	&\leq \exp(-\alpha x)\mathbb E_\mu\left[\exp\left(\alpha \frac{N_{t_0}}{nt_0}\right)\right] \\
	&= \exp(-\alpha x)\mathbb E_\mu\left[\exp\left(\sum_{j = 1}^n \frac{\alpha I(U_j \leq t_0)}{nt_0}\right)\right]\\
	&= \exp(-\alpha x) \left((1-t_0) + t_0 \exp(\alpha/nt_0)\right)^n \\
	&= \exp(-\alpha x) \left(1 + t_0 (\exp(\alpha/nt_0)-1)\right)^n. \\
	\end{split}
	\end{equation*}
	Here, $U_1, \dots, U_n$ are independent uniform random variables. At this stage, let us take $\alpha = n^{1/4}$ and $t_0 = n^{-1/2}$. Then, the bound becomes
	\begin{equation*}
	\exp(-n^{1/4}x) \left(1 + n^{-1/2} (\exp(n^{-1/4})-1)\right)^n
	\end{equation*}
	Note that $\exp(n^{-1/4})-1 \leq n^{-1/4} + \frac12 n^{-1/2}\exp(n^{-1/4}) \leq n^{-1/4} + n^{-1/2}$ for $n \geq 5$. Hence,
	\begin{equation*}
	\begin{split}
	\mu\left[\sup_{t \in [t_0, 1]}\frac{N_t}{1 + nt} \geq x\right] &\leq \exp(-n^{1/4}x) \left(1 + n^{-1/2} (n^{-1/4} + n^{-1/2})\right)^n \\
	&\leq \exp(-n^{1/4}x)\exp(n^{1/4} + 1) \\
	&= \exp(1-n^{1/4}(x-1)).
	\end{split}
	\end{equation*}
	This completes the proof of the lemma.	
\end{proof}

\begin{remark} \label{monotonicity}
	Using a formula from \cite{KT81} (Chapter 13, problems 49 and 51) for the probability of an empirical process hitting a linear boundary, we can actually get the following exact expression for any $x > 1$:
	\begin{equation*}
	\begin{split}
	&\mu\left[\sup_{t \in [0, 1]}\frac{N_t}{1 + nt} \geq x\right] \\
	&\quad= \sum_{i = 0}^{\lfloor n - x \rfloor} {n \choose i}\left(\frac{n - i - x}{nx}\right)^{n-i} \left(1 - \frac{n - i - x}{nx}\right)^{i - 1}\left(1 + \frac{1}{n} - \frac 1 x\right).
	\label{exact}
	\end{split}
	\end{equation*}
	Hence, we find that
	\begin{equation}
	\begin{split}
	a_{\text{kn}} &= \mathbb E_\mu\left[\sup_{t \in [0, 1]}\frac{N_t}{1 + nt} \geq x\right] \\
	&= 1 + \int_1^\infty \left\{\sum_{i = 0}^{\lfloor n - x \rfloor} {n \choose i}\left(\frac{n - i - x}{nx}\right)^{n-i} \left(1 - \frac{n - i - x}{nx}\right)^{i - 1}\left(1 + \frac{1}{n} - \frac 1 x\right)\right\}dx.
	\label{KT_exact}
	\end{split}
	\end{equation}
	While this quantity is hard to work with theoretically, we may compute it. Figure \ref{bounds} shows this quantity, along with the limiting value (equal to 1.42) from Lemma \ref{bound_pfilter}. This plot strongly suggests that $a_{\text{\em pf}}(n)$ is an increasing function of $n$, which leads us to conjecture that $a_{\text{\em pf}}(n) \leq \limsup_{n \rightarrow \infty} a_{\text{\em pf}}(n) \leq 1.42$ for all $n$. 
	
\end{remark}

\begin{figure}[h]
	\centering
	\includegraphics[width = 0.7\textwidth]{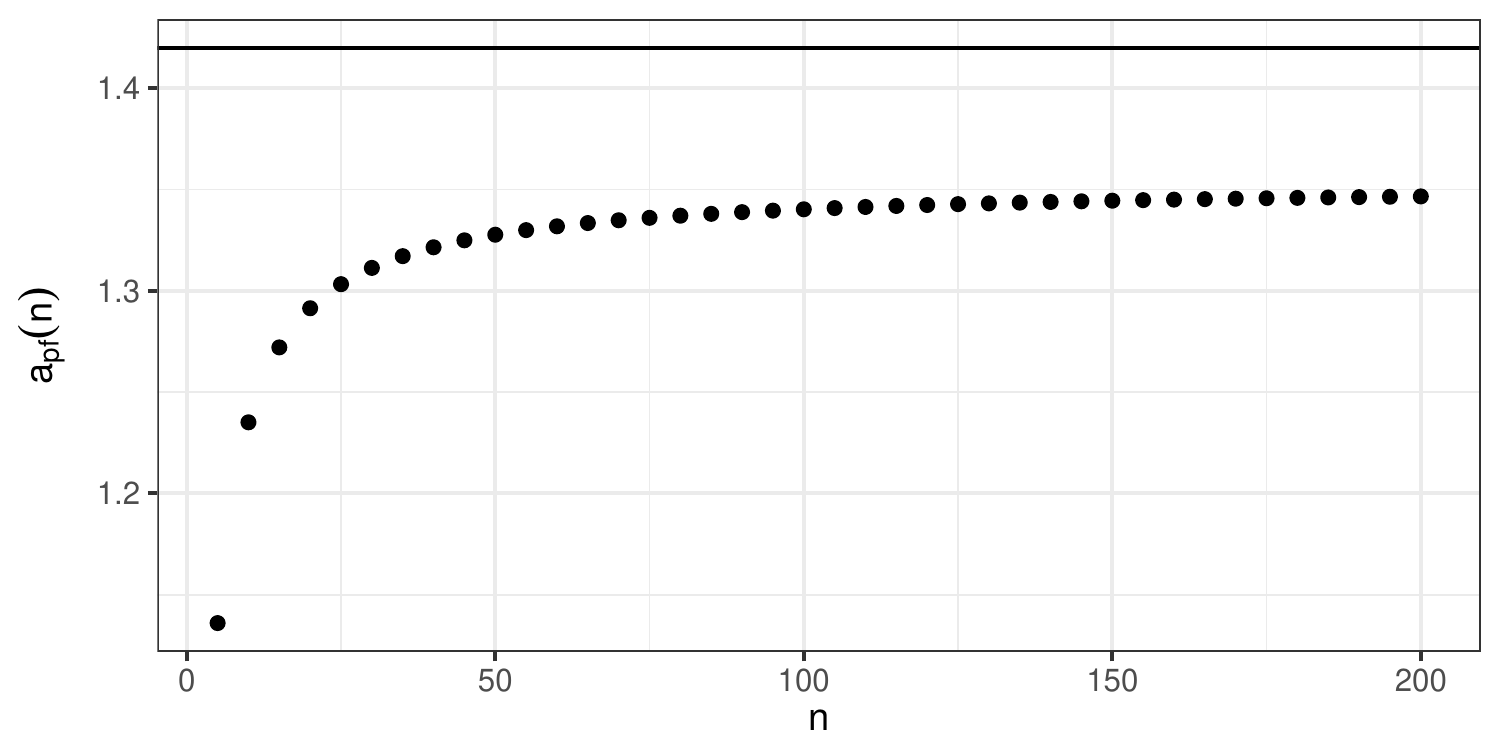}
	\caption{Points represent the exact expression (\ref{KT_exact}), and the horizontal line represents the limiting bound from Lemma \ref{bound_pfilter}.}
	\label{bounds}
\end{figure}

\section{Evidence for gene annotations} \label{sec:annotations}

In this section, we discuss the evidence in literature and databases for our annotations of association or no association with HDL cholesterol. We use the Online Mendelian Inheritance in Man (OMIM) database \citep{HetM05} and NHGRI/EBI GWAS catalog \citep{WetH14}.

\paragraph{Likely Associated with HDL}

\begin{itemize}
	\item The genes ABCA1, CETP, GALNT2, LIPC, LPL are all well-known to be associated with HDL. These associations are documented in OMIM and in the GWAS catalog. 
	
	\item We also conclude that APOA5 is likely associated with HDL, although the primary trait for which this gene is known is triglycerides. This association with HDL is documented in the GWAS catalog and in \cite{LetC16}. 
\end{itemize}

\paragraph{Likely not associated with HDL}

\begin{itemize}
	\item The genes PTPRJ, DYNC2LI1, and SPI1 show no evidence in literature or databases of association with HDL.
	\item The gene NLRC5 has no reported association with HDL in OMIM or the GWAS catalog. There is one paper (Charlesworth et al 2009) that predicts an association using a gene-based analysis. However, this gene is very near CETP. Moreover, this paper states that ``Interestingly, the list also prioritizes a number of genes of little-known function, such as NLRC5 (NLR family CARD domain containing 5)...,which would not be selected by any form of candidate gene approach." 
	\item SLC12A3 is a gene in the CETP locus that does contain a GWAS hit. However, a paper reporting this association \citep{RetC09} states that SLC12A3 is not an obvious candidate for association with HDL and hypothesizes that the effect is ``mediated by long range linkage disequilibrium to a causal variant nearer the CETP gene." 
\end{itemize}

\end{document}